\documentclass[11pt]{article}

\usepackage{lmodern}
\usepackage[T1]{fontenc}
\usepackage[utf8]{inputenc}
\usepackage[english]{babel}
\usepackage{amsthm}
\usepackage{amsmath}
\usepackage{amssymb}
\usepackage{mathtools}
\usepackage{graphicx}
\usepackage{enumitem}
\usepackage{subcaption}
\usepackage{stmaryrd}
\usepackage{dsfont}
\usepackage{geometry}
\usepackage[hidelinks]{hyperref}
\usepackage{authblk}
\usepackage{fancyhdr}
\usepackage[justification=centering]{caption}
\usepackage{float}
\newcommand{\R}{\mathbb{R}}
\newcommand{\N}{\mathbb{N}}
\newcommand{\E}{\mathbb{E}}

\newcommand{\PP}{\mathbb{P}}
\newcommand{\ind}{\mathds{1}}
\newcommand{\dom}{d_{\mathrm{OM}}}
\newcommand{\dH}{d_{\mathrm{H}}}

\newcommand{\gin}{\Delta_{\mathrm{in}}}
\newcommand{\gout}{\Delta_{\mathrm{out}}}
\newcommand{\sub}{\mathrm{sub}}
\newcommand{\ins}{\mathrm{ins}}
\newcommand{\del}{\mathrm{del}}
\newcommand{\Wd}{W_{\bar{d}}}
\newcommand{\tmix}{\tau_{\mathrm{mix}}}

\newtheorem{theorem}{Theorem}[section]
\newtheorem{proposition}[theorem]{Proposition}
\newtheorem{lemma}[theorem]{Lemma}

\theoremstyle{definition}
\newtheorem{definition}[theorem]{Definition}
\newtheorem{assumption}{Assumption}

\theoremstyle{remark}
\newtheorem{remark}[theorem]{Remark}

\author[1]{Ottavio Khalifa}
\affil[1]{Universit\'e Paris Cit\'e, Universit\'e Sorbonne Paris Nord, INSERM, INRAE,\\
Centre for Research in Epidemiology and StatisticS (CRESS), Paris, France\\
\texttt{ottavio.khalifa@inserm.fr}}

\title{\bf Consistency of Optimal Matching-based Clustering\\
for Mixtures of Markov Chains}
\date{\today}

\begin{document}
\maketitle

\begin{abstract}
We study clustering of categorical sequences using the Optimal Matching (OM) distance under finite mixtures of finite-state Markov chains. We show that the normalized OM distance between two independent chains converges almost surely to a deterministic population quantity, concentrates exponentially around its finite-horizon mean, and admits an $O(\sqrt{\log n/n})$ convergence rate when the two chains have the same transition kernel. These population quantities yield a natural separation condition: the largest within-component limit must be smaller than the smallest between-component limit. Under this condition, hierarchical clustering with any bracketed linkage and Partitioning Around Medoids consistently recover the latent mixture partition. We also propose a consistent estimator of the number of components based on empirical OM distance profiles. The results extend to finite-state hidden Markov models and multichannel categorical observations. Overall, they provide a statistical justification for standard OM-based clustering methods for categorical time series.

\end{abstract}

\medskip
\noindent\textbf{Keywords:} Optimal Matching, edit distance, categorical sequences, sequence analysis, clustering consistency, hierarchical clustering, $K$-medoids, Markov chains, hidden Markov models.

\medskip
\noindent\textbf{MSC2020 subject classifications:} Primary 62H30; secondary 60J10, 62M05.

\section{Introduction}
\label{sec:intro}

Categorical sequences arise in many applications, including employment histories, family trajectories, and care pathways. A standard approach to analyze such data is to compute pairwise dissimilarities between trajectories and apply a clustering algorithm to the resulting dissimilarity matrix. A recent review of clustering methods for categorical sequences \cite{khalifa2026clustering} shows that the dominant approach combines the Optimal Matching (OM) dissimilarity with hierarchical clustering or $K$-medoids, while Markov chains and hidden Markov models are the most widely used probabilistic models for such data. The use of OM for clustering categorical sequences was notably imported from bioinformatics into the social sciences by \cite{abbott1986}. OM is a weighted edit, or global-alignment, dissimilarity, closely related to the classical Levenshtein distance \cite{levenshtein1966,Bioinformatics}. It is implemented in the TraMineR package \cite{traminer} and is widely used in life-course research, demography, and epidemiology \cite{SASystematicreview, SAPPF}.

For an arbitrary cost scheme, OM is only a dissimilarity. Under the metric cost scheme of Assumption~\ref{ass:metric}, which we assume throughout the paper, OM is a genuine distance on the set of finite sequences, so we will refer to it as the OM distance from now on. 

The methodological literature on OM in biology and social science has largely focused on the OM distance itself: how to compute it efficiently \cite{needleman1970}, which features of a trajectory it captures, and how substitution and insertion/deletion costs should be chosen \cite{OMReview}. At the same time, a probabilistic literature has studied the asymptotic behavior of edit and global-alignment scores under several stochastic models \cite{lember2018,bilardi2022computable,houdre2019}. Yet much less is known about what these results imply when the OM distance is used for clustering categorical sequences. In particular, what population quantity does the normalized OM distance estimate? If the observed sequences are generated by a mixture of stochastic processes, under what conditions does OM-based clustering recover the mixture components? And how does trajectory length affect the reliability of recovery?

\textbf{Contributions.} We consider a finite mixture of $K$ irreducible and aperiodic Markov chains on a finite alphabet $\Sigma$. We observe a dataset $(X_i)_{1\le i\le N}$ of $N$ independent sequences of length $n$, and let $Z_i\in\{1,\ldots,K\}$ denote the latent mixture component generating sequence $X_i$. We study clustering procedures that use only their pairwise OM distances. Our main results on the recovery of the latent mixture partition cover hierarchical agglomerative clustering with bracketed linkages---including single, complete, and average linkage---and Partitioning Around Medoids (PAM) run to one-swap stationarity (Theorems~\ref{thm:SL} and~\ref{thm:kmedoids}).

To this end, for two independent Markov chains $X=(X_t)_{t\ge1}$ and $Y=(Y_t)_{t\ge1}$ on $\Sigma$, with transition kernels $P$ and $Q$, respectively, we use the normalized OM distance

\[
\hat\gamma_n(X,Y)
:=
\frac{1}{n}\dom(X_{1:n},Y_{1:n}).
\]

We establish that $\hat\gamma_n$ converges almost surely to a deterministic quantity $\gamma(P,Q)$ that depends only on the transition kernels and not on the initial distributions (Proposition~\ref{prop:convergence}). We obtain exponential concentration of $\hat\gamma_n$ around its finite-horizon mean (Proposition~\ref{prop:concentration}). We prove, for general metric OM costs, an $O(\sqrt{\log n/n})$ convergence rate for its finite-horizon mean when $P=Q$ (Proposition~\ref{prop:diagonal-rate}). We also derive computable lower and upper bounds on $\gamma(P,Q)$ depending only on the stationary distributions of the two chains (Proposition~\ref{prop:bounds}).

Hence, on a dataset $(X_i)_{1 \le i \le N}$ of $N$ sequences generated by a mixture of $K$ Markov chains, each entry of the pairwise normalized distance matrix $\left(\hat{\gamma}_n(X_i,X_j)\right)_{1 \le i,j \le N}$ converges to $\Gamma_{Z_iZ_j}$, where $\Gamma := \bigl(\gamma(P_k,P_{k'})\bigr)_{1\le k,k'\le K}$. This leads to a natural separation requirement (Assumption~\ref{ass:separation}): the largest within-component asymptotic OM value must be smaller than the smallest between-component one, that is,

\[
\max_k \Gamma_{kk}
<
\min_{k\neq k'} \Gamma_{kk'}.
\]

Under this condition, hierarchical clustering with single, complete and average linkages, with a dendrogram cut at $K$ clusters, consistently recover the true partition (Theorem~\ref{thm:SL} and Remark~\ref{rmk:other-linkages}). The same holds for PAM: every one-swap local optimum yields the correct clustering with probability tending to one (Theorem~\ref{thm:kmedoids}).

When $K$ is unknown, we introduce a data-driven selection rule based on OM distances, and show that it consistently estimates the number of mixture components (Theorem~\ref{thm:K-selection}). Finally, we give sketches of proofs for similar convergence, concentration, and clustering guarantees for multichannel categorical sequences generated by mixtures of finite-state hidden Markov models, with the multichannel variant of the OM distance (Proposition~\ref{prop:gen-concentration} and Theorem~\ref{thm:gen-clustering}).

\textbf{Related Work.} The asymptotic theory of global alignment scores, including the longest common subsequence (LCS) and weighted edit distances, is well established. For independent i.i.d.\ sequences, \cite{chvatal1975longest} established the convergence of the normalized LCS, and \cite{alexander1994rate} obtained a convergence rate for its mean. This rate argument was later extended to more general alignment scores by \cite{lember2012}. In a dependent setting, \cite{lember2018} study global alignment scores for pairwise Markov chains, building on the subadditive limit and deriving exponential concentration from Markov-chain bounded-difference inequalities. \cite{houdre2019} obtain a rate of convergence for the normalized expected LCS in hidden Markov models, while \cite{bilardi2022computable} study limiting expected edit distances, computable bounds, and Monte Carlo estimation in the i.i.d.\ setting. Our probabilistic analysis builds on these results and extends them to general OM distances in a form that can be used for clustering.

The clustering of dependent sequences has also been studied from different perspectives. \cite{consistent} give consistency guarantees for clustering time series generated by stationary ergodic processes, using a distributional distance between process laws. Their approach targets separation at the level of the underlying process distributions, while the present work considers separation defined through the OM distance. A separate literature studies learning and clustering finite mixtures of Markov chains using methods designed specifically for that generative model \cite{kausik2023,spaeh2023,nearoptimal}. These methods make direct use of the transition dynamics, while OM-based clustering relies only on the pairwise distance matrix.

To the best of our knowledge, no previous work has studied the consistency of clustering approaches based on the OM distance.

\textbf{Outline.} Section~\ref{sec:OM} introduces the OM distance, establishes the convergence and concentration properties of $\hat\gamma_n$ for Markov chains, gives a convergence rate when $\hat\gamma_n$ is computed from realizations of Markov chains with the same transition kernels, and derives computable bounds on $\gamma$. Section~\ref{sec:clustering} develops the clustering recovery results for hierarchical clustering and PAM, together with the data-driven selection of $K$. Section~\ref{sec:extensions} extends the main convergence, concentration, and clustering results to multichannel categorical sequences generated by mixtures of Hidden Markov Models, with the multichannel OM distance. Section~\ref{sec:numerics} reports numerical illustrations. All proofs are gathered in Section~\ref{sec:proofs}.

\section{The Optimal Matching Distance}
\label{sec:OM}

\subsection{Alphabet, Strings, and Edit Operations}

\begin{definition}[Alphabet and strings]
\label{def:alphabet}

Let $\Sigma$ be a finite alphabet with $|\Sigma|=d\ge 2$. For $n\ge 0$, write $\Sigma^n$ for the set of strings $x_{1:n}=(x_1,\ldots, x_n)$ with letters in $\Sigma$, with $\Sigma^0=\{\emptyset\}$, and set $\Sigma^\star:=\bigsqcup_{n\ge 0}\Sigma^n$. We write $\Sigma^{\N}$ for the set of one-sided infinite sequences.
\end{definition}

\begin{definition}[Optimal Matching dissimilarity]
\label{def:OM}
Fix a symmetric substitution cost $c_{\sub}:\Sigma\times\Sigma\to[0,\infty)$ with $c_{\sub}(a,a)=0$, and insertion/deletion costs $c_{\ins},c_{\del}:\Sigma\to(0,\infty)$. For $x_{1:n},y_{1:m}\in\Sigma^\star$, an \emph{edit path} from $x_{1:n}$ to $y_{1:m}$ is a finite sequence of elementary operations---substitution of a letter $a$ by a letter $b$ (cost $c_{\sub}(a,b)$), deletion of a letter $a$ (cost $c_{\del}(a)$), insertion of a letter $b$ (cost $c_{\ins}(b)$)---applied successively and transforming $x_{1:n}$ into $y_{1:m}$. The cost of an edit path $\pi$, denoted $\mathrm{cost}(\pi)$, is the sum of the costs of its operations, and the \emph{OM dissimilarity} is
\[
\dom(x_{1:n},y_{1:m}) := \min\{\mathrm{cost}(\pi):\ \pi\text{ edit path from }x_{1:n}\text{ to }y_{1:m}\}.
\]

\end{definition}

\begin{assumption}[Metric cost scheme]
\label{ass:metric}

The costs $(c_{\ins},c_{\del},c_{\sub})$ satisfy:
\begin{enumerate}[label=(\roman*)]
\item $\delta(a):=c_{\ins}(a)=c_{\del}(a)$ for all $a\in\Sigma$;
\item $c_{\sub}(a,b)=c_{\sub}(b,a)$ for all $a,b\in\Sigma$;
\item $c_{\sub}(a,c)\le c_{\sub}(a,b)+c_{\sub}(b,c)$ for all $a,b,c\in\Sigma$;
\item $c_{\sub}(a,b)\le\delta(a)+\delta(b)$ for all $a,b\in\Sigma$;
\item $\delta(a)\le c_{\sub}(a,b)+\delta(b)$ for all $a,b\in\Sigma$;
\item $c_{\sub}(a,b)>0$ for all $a\neq b$.
\end{enumerate}
\end{assumption}

Let $\Sigma_\bot:=\Sigma\cup\{\bot\}$, where $\bot\notin\Sigma$ is a formal gap symbol used to represent insertions and deletions in an alignment. We extend $c_{\sub}:\Sigma\times\Sigma\to[0,+\infty)$ by

\begin{equation}
\label{eq:dbar}
\bar d:\Sigma_\bot\times\Sigma_\bot\to\R^+,\qquad
\bar d(a,b):=\begin{cases} c_{\sub}(a,b) & a,b\in\Sigma,\\ \delta(a) & a\in\Sigma,\ b=\bot,\\ \delta(b) & a=\bot,\ b\in\Sigma,\\ 0 & a=b=\bot.\end{cases}
\end{equation}

Under Assumption~\ref{ass:metric}, $\bar d$ is a metric on $\Sigma_\bot$.

An \emph{alignment} of $(x_{1:n},y_{1:m})$ is a pair $(\tilde x_{1:L},\tilde y_{1:L})\in\Sigma_\bot^L\times\Sigma_\bot^L$ with $(\tilde x_\ell,\tilde y_\ell)\neq(\bot,\bot)$ for every $\ell$, such that deleting the symbols $\bot$ from $\tilde x$ recovers $x_{1:n}$ and deleting them from $\tilde y$ recovers $y_{1:m}$. Its cost is

\begin{equation}
\label{eq:aligncost}
\mathrm{cost}(\tilde x,\tilde y) := \sum_{\ell=1}^L \bar d(\tilde x_\ell,\tilde y_\ell).
\end{equation}

Under Assumption~\ref{ass:metric}, $\dom$ is a metric on $\Sigma^\star$ and coincides with the minimum of~\eqref{eq:aligncost} over all alignments of the two strings \cite{Bioinformatics}. We call it the OM distance from here on.

Let

\begin{equation}
\label{eq:Mdef}
M := \max_{a,b\in\Sigma}c_{\sub}(a,b),
\end{equation}

so that $\dom(x_{1:n},y_{1:n})\le Mn$, since substituting $x_i$ by $y_i$ at each position is an edit path and has total cost at most $Mn$.

\subsection{Almost Sure Convergence}

Let $X=(X_t)_{t\ge 1}$ and $Y=(Y_t)_{t\ge 1}$ be independent processes on $\Sigma$. Set

$$
D_n := \dom(X_{1:n},Y_{1:n}).
$$

The asymptotic analysis of random sequence alignment costs is classically based on subadditivity and ergodic arguments. We follow the strategy of \cite{lember2018,houdre2019} to state a convergence result in the present OM setting and give a detailed proof, including the treatment of arbitrary initial distributions.

\begin{proposition}[Convergence of the normalized OM distance]
\label{prop:convergence}
Let $X$ and $Y$ be independent time-homogeneous Markov chains on $\Sigma$ with irreducible kernels $P,Q$, stationary laws $\pi,\pi'$, and arbitrary initial laws $\mu,\mu'$. Then $\hat\gamma_n:=D_n/n$ converges almost surely to a deterministic limit $\gamma(P,Q)$, which does not depend on the initial laws and is given by

$$
\gamma(P,Q)
=
\inf_{n\ge 1}\frac{\E_{\pi,\pi'}[D_n]}{n}.
$$

\end{proposition}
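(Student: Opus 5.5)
The plan is to use subadditivity in the stationary case, a subadditive ergodic theorem for the almost sure limit, and a coupling to remove the initial laws.

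\textbf{Step 1: subadditivity.} Start from the stationary case $\mu=\pi$, $\mu'=\pi'$. The product process $(X_t,Y_t)_{t\ge1}$ is then a stationary Markov chain on $\Sigma^2$ with kernel $P\otimes Q$. Concatenating an optimal alignment of $(X_{1:m},Y_{1:m})$ with one of $(X_{m+1:m+n},Y_{m+1:m+n})$ gives an alignment of the two length-$(m+n)$ strings. Hence
\[
D_{m+n}\le D_m + D_{m,m+n},\qquad D_{m,m+n}:=\dom(X_{m+1:m+n},Y_{m+1:m+n}).
\]
This is a subadditive family indexed by $0\le m<n$. Stationarity of the product chain makes the law of $(D_{m,m+k})_{k}$ independent of $m$. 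Integrability is trivial since $0\le D_n\le Mn$.

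\textbf{Step 2: the stationary limit.} Fekete's lemma applied to $\E_{\pi,\pi'}[D_n]$ gives $\lim_n \E_{\pi,\pi'}[D_n]/n=\inf_n \E_{\pi,\pi'}[D_n]/n=:\gamma(P,Q)$. Liggett's version of Kingman's subadditive ergodic theorem then gives almost sure convergence of $D_n/n$ to an invariant random variable with mean $\gamma$.

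To show this limit is deterministic, I need ergodicity of the shift under the stationary product law. Irreducibility of $P$ and $Q$ alone does not make $P\otimes Q$ irreducible, for example when both are periodic. I would handle this in one of two ways:
\begin{itemize}[label=$\cdot$]
\item Use the fact that the limit is measurable with respect to the invariant $\sigma$-field of the pair. Since $X$ and $Y$ are independent stationary ergodic chains, the limit is a.s.\ constant on each ergodic component of the product. I would then argue that the limit does not depend on the component, using the coupling in Step 3.
\item Alternatively, restrict to the aperiodic setting used throughout the paper, where $P\otimes Q$ is irreducible, hence ergodic.
\end{itemize}
I expect this identification to be the main subtlety.

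\textbf{Step 3: arbitrary initial laws.} I would use a shift-and-couple argument. The key Lipschitz property of $\dom$ is
\[
|\dom(x_{1:n},y_{1:n})-\dom(x_{k+1:n},y_{k+1:n})|\le 2k\max_a\delta(a).
\]
This holds because deleting the first $k$ letters of each string costs at most $k\max_a\delta(a)$ per string, combined with the triangle inequality.

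For an irreducible finite chain started at $\mu$, I would couple it with a stationary copy so that they agree after an a.s.\ finite random time $T$. In the aperiodic case this is the classical coupling. In the periodic case, I would first pass to the right cyclic class via a random shift, using the fact that $(X_t)$ run from any start is eventually stationary along a suitable shifted version. Doing this for both $X$ and $Y$ independently, the Lipschitz bound yields
\[
|D_n-\tilde D_n|\le C(T+T'),
\]
where $\tilde D_n$ is built from the stationary copies. Since $(T+T')/n\to0$ a.s., $D_n/n$ has the same a.s.\ limit $\gamma(P,Q)$.

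The same coupling between two different stationary ergodic components also shows that the limit is a single constant, which closes the gap left in Step 2.
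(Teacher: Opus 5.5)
Your route is sound in outline and differs from the paper's, but its decisive step is asserted rather than shown.

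\textbf{The missing step.} You rightly note that the stationary product chain need not be ergodic when $\gcd(d_P,d_Q)>1$, where $d_P,d_Q$ are the periods. You then propose to identify the constants on the ergodic components by a coupling. The Lipschitz bound you actually write down, however, deletes the same $k$ letters from both strings. It compares $D_n$ with $D_{n-k}\circ T^k$ for the diagonal shift $T=S\times S$, and $T$ maps each ergodic component of the product to itself. So it only re-proves the $T$-invariance that Kingman's theorem already gives.

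What connects different components is a shift of one coordinate alone:
\[
|D_n(x,y)-D_n(x,S^ky)|\le 2k\max_a\delta(a),
\]
obtained by deleting $y_1,\dots,y_k$ and appending $y_{n+1},\dots,y_{n+k}$. You also need the observation that $\mathrm{id}\times S$ preserves the stationary product law and shifts the relative cyclic phase of $(X,Y)$ by one. Hence it permutes the ergodic components, which forces all the constants to agree. Your independent random shifts of $X$ and $Y$ in Step~3 implicitly contain this mechanism, but the argument is only complete once you make it explicit. Your alternative (b) would prove a weaker statement, since the proposition assumes only irreducibility.

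\textbf{How the paper does it.} The paper uses exactly this one-sided invariance, but never decomposes the product. From $g^\star(Sx,y)=g^\star(x,y)=g^\star(x,Sy)$, it applies Fubini and the ergodicity of each marginal stationary chain, which holds for any irreducible chain, periodic or not. This shows directly that the limit is almost surely constant.

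For arbitrary initial laws the paper also avoids coupling. It observes that $\PP_\mu\ll\PP_\pi$ on path space with density $\mu(x_1)/\pi(x_1)$, which is well defined because $\pi$ has full support. The almost sure convergence therefore transfers at once, again regardless of periodicity.

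\textbf{Trade-off.} Your coupling route needs shift-coupling in the periodic case, because an exact coupling with a stationary copy is impossible when the two chains start in different cyclic classes. In exchange it gives quantitative control of $|D_n-\tilde D_n|$. The paper uses this kind of argument later, under aperiodicity, for the finite-horizon bias in Proposition~\ref{prop:diagonal-rate}. For the qualitative almost sure statement, the change-of-measure argument is shorter.
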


\begin{remark}[Relation to Ornstein's distance]
\label{rmk:gamma-not-metric}
$\gamma$ is not a metric on laws: in general $\gamma(P,P) \ne 0$. Taking the infimum of the asymptotic OM cost over all stationary couplings of the two Markov chains would define an Ornstein-type distance on laws \cite{ornstein1973, shields1996}, but our setting forces independent pairs of trajectories, which correspond to the independent coupling.
\end{remark}

\subsection{Concentration and a Diagonal Convergence Rate for Markov Chains}

For an irreducible aperiodic Markov chain $(Z_t)_{t\ge0}$ with transition matrix $R$ and stationary distribution $\pi_R$, define
\begin{equation}
\label{eq:tmix-def}
\tmix(R)
:=
\min\left\{
t\ge1:
\max_{z}
\left\|
\PP(Z_t\mid Z_0=z)-\pi_R
\right\|_{\mathrm{TV}}
\le\frac14
\right\}.
\end{equation}

Here $\|\cdot\|_{\mathrm{TV}}$ denotes the total variation distance. The following result is a direct application of the McDiarmid-type inequality for Markov chains of \cite[Corollary~2.10 and Remark~2.11]{paulin}.

\begin{proposition}[Concentration around the finite-horizon mean]
\label{prop:concentration}
Assume Assumption~\ref{ass:metric}. Let $X$ and $Y$ be independent time-homogeneous irreducible aperiodic Markov chains on $\Sigma$, with transition matrices $P$ and $Q$ and arbitrary initial distributions. Let
\[
\tmix:=\tmix(P\otimes Q).
\]
Then, there exists a universal constant $C_{\mathrm{Pau}}>0$ such that, setting
\begin{equation}
\label{eq:Cdef}
C=C(M,P,Q)
:=
C_{\mathrm{Pau}}\,M^2\,\tmix,
\end{equation}
for every $n\ge1$ and every $s>0$,
\begin{equation}
\label{eq:concentration-centered}
\PP\left(
\left|
\hat\gamma_n-\E[\hat\gamma_n]
\right|
\ge s
\right)
\le
2\exp\left(
-\frac{2s^2n}{C}
\right).
\end{equation}
\end{proposition}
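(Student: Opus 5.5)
The plan is to view $D_n=\dom(X_{1:n},Y_{1:n})$ as a function $f(W_1,\ldots,W_n)$ of the joint chain $W_t:=(X_t,Y_t)$ on the finite state space $\Sigma\times\Sigma$, and then apply the McDiarmid-type inequality of \cite[Corollary~2.10 and Remark~2.11]{paulin}. Since $X$ and $Y$ are independent time-homogeneous Markov chains, $W$ is a time-homogeneous Markov chain with kernel $P\otimes Q$ and initial law $\mu\otimes\mu'$. Irreducibility and aperiodicity of $P$ and $Q$ give irreducibility and aperiodicity of $P\otimes Q$, with stationary law $\pi\otimes\pi'$. Aperiodicity matters here: the product of two irreducible periodic chains need not be irreducible. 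It follows that $\tmix(P\otimes Q)<\infty$. Paulin's result applies to a Markov chain with arbitrary initial distribution and a function satisfying a bounded-differences condition with constants $c_1,\ldots,c_n$. It yields
\[
\PP(|f-\E f|\ge u)\le 2\exp\bigl(-2u^2/(\tau\,C'\sum_t c_t^2)\bigr),
\]
with $\tau$ the mixing time at threshold $1/4$, as in \eqref{eq:tmix-def}, and $C'$ universal (Remark~2.11 gives the explicit constant, e.g.\ $9$).

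The key step is the bounded-differences property. Suppose $w$ and $w'$ differ only in coordinate $t$, so that $(x_t,y_t)$ is replaced by $(x'_t,y'_t)$. By the triangle inequality for the metric $\dom$ (Assumption~\ref{ass:metric}),
\[
|\dom(x,y)-\dom(x',y')|\le \dom(x,x')+\dom(y,y').
\]
Here $x$ and $x'$ differ in one letter, so a single substitution gives $\dom(x,x')\le c_{\sub}(x_t,x'_t)\le M$, and similarly $\dom(y,y')\le M$. Hence $c_t=2M$ for every $t$, and $\sum_t c_t^2=4M^2n$. A sharper constant is possible, but this is irrelevant since $C_{\mathrm{Pau}}$ absorbs it. If one prefers to perturb one chain at a time, the bound is $M$ per coordinate of each chain, but treating $W$ as a single chain is what makes Paulin's single-chain theorem applicable.

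Substituting $f=D_n$ and $u=sn$ into the inequality gives
\[
\PP(|\hat\gamma_n-\E\hat\gamma_n|\ge s)\le 2\exp\bigl(-2s^2n^2/(C'\tmix\cdot 4M^2 n)\bigr).
\]
This is \eqref{eq:concentration-centered} with $C_{\mathrm{Pau}}:=4C'$. The main point of care is bookkeeping with Paulin's statement rather than any real obstacle. Paulin phrases the result with the mixing time at precision $1/4$ on a general Polish space, with Remark~2.11 covering non-stationary starts. I would verify that his definition of $t_{\mathrm{mix}}$ matches \eqref{eq:tmix-def}, and that the universal constant is indeed independent of the initial law $\mu\otimes\mu'$, so that $C$ depends only on $(M,P,Q)$ as claimed.
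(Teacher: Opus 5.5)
Your proposal is correct and follows the paper's proof essentially line by line. The paper also treats $(X_t,Y_t)$ as a single Markov chain on $\Sigma^2$, obtains bounded-difference constants $2M/n$ for $\hat\gamma_n$ (equivalently $2M$ for $D_n$) from the triangle inequality for $\dom$ plus the one-substitution bound, and applies Paulin's inequality, absorbing the universal factor into $C_{\mathrm{Pau}}$. Two small additions in your version are fine: your single-coordinate check yields the Hamming-weighted condition Paulin requires by telescoping, and your aperiodicity remark justifies the irreducibility of $P\otimes Q$, which the paper asserts without proof.
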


To turn Proposition~\ref{prop:concentration} into a quantitative deviation bound around the limit $\gamma(P,Q)$, one also needs to control the finite-horizon bias
\[
\frac1n\E[D_n]-\gamma(P,Q).
\]
We obtain such a rate when $P=Q$ by relating general OM costs to the LCS (Longest Common Subsequence) alignment score studied by \cite{houdre2019}. We proceed as follows.

Consider first the classical cost scheme
\[
\delta(a)\equiv1,
\qquad
c_{\sub}(a,b)
=
\begin{cases}
0, & a=b,\\
2, & a\neq b.
\end{cases}
\]
For two words of lengths $p$ and $q$, the corresponding OM distance is
\[
\dom(x_{1:p},y_{1:q})
=
p+q-2\,\mathrm{LCS}(x_{1:p},y_{1:q}),
\]
and, in particular, for equal lengths,
\[
\dom(x_{1:n},y_{1:n})
=
2\bigl(
n-\mathrm{LCS}(x_{1:n},y_{1:n})
\bigr).
\]
Thus the LCS is an alignment score associated with a particular OM cost scheme.

The next lemma shows that, for a general metric OM scheme, the analogue of the LCS satisfies all the properties needed in the rate argument of the proof of \cite[Theorem~3.1]{houdre2019}.

\begin{lemma}[Properties of the OM-induced alignment score]
\label{lem:OM-score-properties}
Assume Assumption~\ref{ass:metric}, and define
\[
s(a,b):=\delta(a)+\delta(b)-c_{\sub}(a,b),
\qquad a,b\in\Sigma.
\]
For $x_{1:p}\in\Sigma^p$ and $y_{1:q}\in\Sigma^q$, let
\[
L_s(x_{1:p},y_{1:q})
:=
\max
\left\{
\sum_{h=1}^r s(x_{i_h},y_{j_h})
:
1\le i_1<\cdots<i_r\le p,\;
1\le j_1<\cdots<j_r\le q
\right\},
\]
where $r=0$ is allowed, with the empty sum equal to zero. Let
\[
F_s:=\max_{a,b\in\Sigma}s(a,b),
\qquad
A_s:=\max_{a,a',b\in\Sigma}
|s(a,b)-s(a',b)|.
\]
Then,

\begin{enumerate}[label=(\roman*),leftmargin=*]

\item The score $s$ is non-negative, symmetric, and
\begin{equation}
\label{eq:OM-score-identity}
\dom(x_{1:p},y_{1:q})
=
\sum_{i=1}^p\delta(x_i)
+
\sum_{j=1}^q\delta(y_j)
-
L_s(x_{1:p},y_{1:q}).
\end{equation}

\item For all $p,p',q,q'\ge0$, $x_{1:p+p'}\in\Sigma^{p+p'}$, and
$y_{1:q+q'}\in\Sigma^{q+q'}$,
\[
L_s(x_{1:p+p'},y_{1:q+q'})
\ge
L_s(x_{1:p},y_{1:q})
+
L_s(x_{p+1:p+p'},y_{q+1:q+q'}).
\]
Moreover, extending either word cannot decrease $L_s$.

\item For every $x_{1:p}\in\Sigma^p$ and $y_{1:q}\in\Sigma^q$,
\[
0
\le
L_s(x_{1:p},y_{1:q})
\le
F_s\min(p,q).
\]

\item If two words differ at a single position, then changing that letter
changes $L_s$ by at most $A_s$. More precisely, if $x,x'\in\Sigma^p$ differ
at one position, then
\[
|L_s(x,y)-L_s(x',y)|
\le
A_s,
\]
and the same bound holds for a one-letter change in $y$.

\item If $X$ and $Y$ are independent stationary copies of the same process,
then for every $n\ge1$ and every $p,q\ge0$ with $p+q\le2n$,
\begin{equation}
\label{eq:OM-rectangular-block}
\E L_s(X_{1:p},Y_{1:q})
\le
\frac12\,
\E L_s(X_{1:2n},Y_{1:2n}).
\end{equation}

\end{enumerate}
\end{lemma}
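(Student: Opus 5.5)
We treat the five items in order. Items (i) and (v) carry the content; (ii)–(iv) follow directly from the definition of $L_s$ as a maximum over pairs of increasing index sequences.

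For (i), nonnegativity of $s$ is exactly Assumption~\ref{ass:metric}(iv), and symmetry follows from (ii) of the same assumption. For the identity~\eqref{eq:OM-score-identity}, we use that $\dom$ equals the minimum of~\eqref{eq:aligncost} over alignments. Every alignment $(\tilde x,\tilde y)$ of $(x_{1:p},y_{1:q})$ determines the set of its columns with no gap symbol. Reading off the positions in $x$ and $y$ of these columns gives increasing sequences $i_1<\dots<i_r$ and $j_1<\dots<j_r$. Conversely, any such pair of increasing sequences is realized by some alignment: fill each block between consecutive matches with gap columns. The cost of the alignment is
\[
\sum_{h}c_{\sub}(x_{i_h},y_{j_h})+\sum_{i\notin\{i_h\}}\delta(x_i)+\sum_{j\notin\{j_h\}}\delta(y_j)
=\sum_i\delta(x_i)+\sum_j\delta(y_j)-\sum_h s(x_{i_h},y_{j_h}),
\]
which depends only on the match set. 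Minimizing the cost over alignments is therefore the same as maximizing the score over match sets, and this gives the identity. The only care needed is this correspondence between alignments and match sets, together with the fact that gap–gap columns are excluded.

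For (ii), we concatenate an optimal index family for $(x_{1:p},y_{1:q})$ with an optimal family for the suffix pair shifted by $(p,q)$. The result is still increasing in both coordinates, so it is admissible for the full words, and its score is the sum of the two scores. Monotonicity under extension holds because any admissible family for the shorter words remains admissible for the longer ones. For (iii), the lower bound $0$ comes from taking $r=0$. For the upper bound, note that $r\le\min(p,q)$ and each term is at most $F_s$. For (iv), let $x'$ differ from $x$ at one position $i_0$. We take an optimal family for $(x,y)$ and evaluate it on $(x',y)$. Since the $i_h$ are distinct, $i_0$ appears at most once, so only one term changes, and it changes by at most $A_s$. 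Hence $L_s(x',y)\ge L_s(x,y)-A_s$, and exchanging the roles of $x$ and $x'$ gives the reverse inequality. For a change in $y$ we use the symmetry of $s$: it gives $|s(a,b)-s(a,b')|=|s(b,a)-s(b',a)|\le A_s$.

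For (v), we write $\ell(p,q):=\E L_s(X_{1:p},Y_{1:q})$. Since $X,Y$ are independent stationary copies of the same process, the pair $(X,Y)$ is jointly stationary, so the shifted blocks $(X_{p+1:p+a},Y_{q+1:q+b})$ have the same joint law as $(X_{1:a},Y_{1:b})$. Applying (ii) and taking expectations gives
\[
\ell(2n,2n)\ge\ell(p,q)+\ell(2n-p,2n-q).
\]
Exchanging $X$ and $Y$, which have the same law and are independent, and using the symmetry of $s$ gives $\ell(a,b)=\ell(b,a)$. Without loss of generality take $p\le q$. Since $p+q\le2n$, we have $2n-q\ge p$ and $2n-p\ge q$. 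By monotonicity from (ii),
\[
\ell(2n-p,2n-q)=\ell(2n-q,2n-p)\ge\ell(p,q).
\]
Combining the two displays yields $\ell(2n,2n)\ge2\ell(p,q)$, which is~\eqref{eq:OM-rectangular-block}. The only non-routine step is this symmetrization, which turns superadditivity into a comparison between the rectangular block $(p,q)$ and its complement.
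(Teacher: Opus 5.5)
Your proof is correct and follows the paper's argument item by item: the alignment/match-set correspondence gives (i), concatenation and extension of matchings give (ii)--(iii), evaluating an optimal matching on the modified word gives (iv), and (v) combines superadditivity, stationarity, symmetry and monotonicity. The only difference is cosmetic. In (v) the paper first treats $p+q=2n$ exactly and then extends $(p,q)$ to such a pair, whereas you apply monotonicity directly to the swapped complementary block $(2n-q,2n-p)$. Two small remarks on (v): your WLOG $p\le q$ is never used, and since the two shifts $p$ and $q$ differ, the block-shift invariance comes from independence plus stationarity of each chain, not from joint stationarity of the pair.
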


Lemma~\ref{lem:OM-score-properties} provides all the ingredients needed to adapt the rate argument of \cite[Theorem~3.1]{houdre2019} to $L_s$. The independent-identical specialization is described in the proof of \cite[Corollary~3.1]{houdre2019}. The partition construction and counting argument remain the same in our setting. This yields the following quantitative bound on the finite-horizon bias when $P = Q$. 

\begin{proposition}[Convergence rate on the diagonal]
\label{prop:diagonal-rate}
Let $X$ and $Y$ be two independent time-homogeneous irreducible aperiodic Markov chains on $\Sigma$, with the same transition matrix $P$ and the same initial distribution $\mu$. Under Assumption~\ref{ass:metric}, there exists
a constant
\[
C_{\mathrm{rate}}
=
C_{\mathrm{rate}}(P,c_{\sub},\delta)>0
\]
such that, for every $n\ge2$,
\begin{equation}
\label{eq:diagonal-rate}
\left|
\frac1n\E_{\mu,\mu}[D_n]
-
\gamma(P,P)
\right|
\le
C_{\mathrm{rate}}
\sqrt{\frac{\log n}{n}}.
\end{equation}
The constant $C_{\mathrm{rate}}$ does not depend on the initial distribution $\mu$.
\end{proposition}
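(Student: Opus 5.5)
The plan has three steps. First we prove the bound for the stationary start $\mu=\pi$, where the argument of \cite[Theorem~3.1]{houdre2019} applies. Then we transfer it to an arbitrary $\mu$ by a coupling argument. Throughout, we use Lemma~\ref{lem:OM-score-properties}(i) to write
\[
\E_{\mu,\mu}[D_n]=\E_\mu\Bigl[\sum_{t=1}^n\delta(X_t)\Bigr]+\E_\mu\Bigl[\sum_{t=1}^n\delta(Y_t)\Bigr]-\E_{\mu,\mu}L_s(X_{1:n},Y_{1:n}).
\]
In the stationary case the first two terms equal exactly $2n\,\pi(\delta)$. Hence the problem reduces to bounding $\frac1n\E_{\pi,\pi}L_s(X_{1:n},Y_{1:n})-\lambda_s$, where $\lambda_s:=\lim_n\frac1n\E_{\pi,\pi}L_s=\sup_n\frac1n\E_{\pi,\pi}L_s$. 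By superadditivity (Lemma~\ref{lem:OM-score-properties}(ii)) and stationarity, this limit exists, and $\gamma(P,P)=2\pi(\delta)-\lambda_s$, consistent with the infimum formula of Proposition~\ref{prop:convergence}. The sign is automatic, since $\frac1n\E_{\pi,\pi}L_s\le\lambda_s$. It therefore remains to show $\lambda_s-\frac1n\E_{\pi,\pi}L_s\le C\sqrt{\log n/n}$.

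\textbf{Stationary upper bound (Alexander/Houdré--Ma argument).} Fix $n$ and consider an optimal alignment of $X_{1:2n'}$ and $Y_{1:2n'}$ for a large multiple $n'$ of $n$. We cut this alignment into consecutive pieces such that each piece matches a block $X_{a:a+p}$ against a block $Y_{b:b+q}$ with $p+q\le 2n$. The number of pieces is of order $n'/n$. The cut points range over a polynomially large set, of size at most $(cn')^{O(n'/n)}$ after the grid discretisation used in \cite{houdre2019}. For each fixed partition, the total score is a function of the two chains that satisfies the bounded differences property with constant $A_s$ (Lemma~\ref{lem:OM-score-properties}(iv)). By the Markov McDiarmid inequality of \cite{paulin} (as in Proposition~\ref{prop:concentration}), it concentrates at scale $\sqrt{n'\tmix}$ around its mean. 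By stationarity together with \eqref{eq:OM-rectangular-block}, that mean is at most $\sum_{\text{pieces}}\frac12\E L_s(X_{1:2n},Y_{1:2n})$, which is approximately $n'\cdot\frac1{2n}\E L_s(X_{1:2n},Y_{1:2n})$, up to an error of $F_s$ times the number of boundary pieces (Lemma~\ref{lem:OM-score-properties}(iii)). A union bound over the partitions then gives, with high probability,
\[
L_s(X_{1:2n'},Y_{1:2n'})\le \frac{2n'}{2n}\E L_s(X_{1:2n},Y_{1:2n})+C\,n'\sqrt{\tfrac{\log n}{n}}.
\]
Dividing by $2n'$ and letting $n'\to\infty$ yields $\lambda_s\le\frac1{2n}\E L_s(X_{1:2n},Y_{1:2n})+C\sqrt{\log n/n}$. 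We then pass from $2n$ to $n$ using superadditivity and $n\ge2$, at the cost of only a constant factor. This is the main obstacle. One must check that the Houdré--Ma counting works verbatim with the weighted score: real-valued increments bounded by $F_s$ and $A_s$ replace the $\{0,1\}$ LCS increments. One must also check that the concentration for Markov (rather than i.i.d.) blocks is uniform over the partition, which the Paulin inequality gives with constant $M^2\tmix(P\otimes P)$.

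\textbf{Arbitrary initial law.} Couple $X$ under $\mu$ with a stationary chain $X'$ so that the two coincide after a coupling time $T$ with geometric tails, $\E T\le c\,\tmix(P)$, uniformly in $\mu$; do the same for $Y$. Because $\dom$ is a metric and a change at one position costs at most $\max\delta+M$, we get $|D_n-D_n'|\le (M+2\max\delta)(T_X+T_Y)$. Taking expectations gives $|\E_{\mu,\mu}D_n-\E_{\pi,\pi}D_n|\le C'\tmix(P)$, a bound independent of $\mu$. This contributes an $O(1/n)$ error, which is absorbed into $C_{\mathrm{rate}}\sqrt{\log n/n}$. The resulting constant depends only on $P$, $c_{\sub}$ and $\delta$.
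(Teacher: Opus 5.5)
Your overall route is the paper's: identity \eqref{eq:OM-score-identity} reduces the stationary case to bounding $\lambda_s-\frac1n\E_{\pi,\pi}L_s(X_{1:n},Y_{1:n})$. That bound comes from the partition argument of \cite{houdre2019}, fed with Lemma~\ref{lem:OM-score-properties}(iii)--(v) and Paulin's inequality. A coupling with stationary copies then handles an arbitrary $\mu$.

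\textbf{The partition count.} This is the one step that fails as you wrote it, and it is exactly where the $\log n$ comes from. Suppose the partitions numbered $(cn')^{O(n'/n)}$. The union bound would then cost $O\bigl((n'/n)\log n'\bigr)$ in the exponent, so each fixed partition would need a deviation allowance of order $n'\sqrt{\log n'/n}$. After dividing by $2n'$ this is $\sqrt{\log n'/n}$, which diverges as $n'\to\infty$, and the limit gives no bound on $\lambda_s$.

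The correct count is $(cn)^{O(n'/n)}$. Place the $h$-th cut where the optimal alignment path crosses the antidiagonal $i+j=2nh$, or the neighbouring antidiagonal when a diagonal step skips it. Each cut is then determined by its offset from the previous one, which takes at most $O(n)$ values. Equivalently, choosing $O(n'/n)$ cuts among $O(n')$ sites gives $\binom{O(n')}{O(n'/n)}\le(O(n))^{O(n'/n)}$ choices, not $(cn')^{O(n'/n)}$.

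With this count the union bound costs $O\bigl((n'/n)\log(2n+1)\bigr)$. This yields your displayed inequality, with $\log(2n+1)$ in place of $\log n$; that change is harmless once you pass to a horizon $N\ge2$ through $n=\lfloor N/2\rfloor$. Note also that there are about $2n'/n$ pieces, so the mean of a fixed-partition sum is about $\frac{n'}{n}\E L_s(X_{1:2n},Y_{1:2n})$. Your display has this right, but the sentence before it is off by a factor of two.

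\textbf{Arbitrary initial law.} Your transfer is a mild variant of the paper's. You bound $|D_n-D_n'|$ pathwise by a constant times $T_X+T_Y$; substitutions alone suffice, so the constant can be $M$. You then use $\sup_\mu\E T<\infty$, which gives an $O(1/n)$ error. The paper instead splits on $\{\tau\le K\}$, uses the geometric tail together with $D_n\le Mn$ off that event, and takes $K=\lceil\sqrt n\rceil$, which gives $O(1/\sqrt n)$. Both errors are absorbed into $C_{\mathrm{rate}}\sqrt{\log n/n}$ with a constant independent of $\mu$; yours is marginally sharper.
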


\subsection{A Regularity Property of \texorpdfstring{$\gamma$}{gamma}}
\label{ssec:usc}

Since $\gamma$ governs the separation condition for clustering (Assumption~\ref{ass:separation} below), we record how it behaves under perturbations of the transition matrices. Let $\mathcal S^\circ_d$ denote the set of transition matrices of irreducible aperiodic Markov chains on $\Sigma$, endowed with the entrywise norm $\|P-P'\|_\infty=\max_{a,b}|P(a,b)-P'(a,b)|$.

\begin{proposition}[Upper semi-continuity of $\gamma$]
\label{prop:usc}
The map $\gamma:\mathcal S^\circ_d\times\mathcal S^\circ_d\to\R^+$ is upper semi-continuous: if $(P_m,Q_m)\to(P_0,Q_0)$, then
\[
\limsup_{m\to\infty}\gamma(P_m,Q_m) \;\le\; \gamma(P_0,Q_0).
\]
\end{proposition}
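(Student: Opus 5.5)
The plan is to use the infimum representation from Proposition~\ref{prop:convergence},
\[
\gamma(P,Q)=\inf_{n\ge1}\frac{\E_{\pi_P,\pi_Q}[D_n]}{n},
\]
where $\pi_P,\pi_Q$ denote the stationary laws. An infimum of continuous functions is upper semi-continuous, so it suffices to show that for each fixed $n$ the map
\[
(P,Q)\mapsto f_n(P,Q):=\frac1n\E_{\pi_P,\pi_Q}[D_n]
\]
is continuous on $\mathcal S^\circ_d\times\mathcal S^\circ_d$. Granting this, fix $\varepsilon>0$ and choose $n$ with $f_n(P_0,Q_0)\le\gamma(P_0,Q_0)+\varepsilon$. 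Then
\[
\limsup_m\gamma(P_m,Q_m)\le\limsup_m f_n(P_m,Q_m)=f_n(P_0,Q_0)\le\gamma(P_0,Q_0)+\varepsilon,
\]
and letting $\varepsilon\to0$ gives the claim.

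For continuity of $f_n$, write it as a finite sum:
\[
f_n(P,Q)=\frac1n\sum_{x,y\in\Sigma^n}\dom(x,y)\,\pi_P(x_1)\prod_{t=1}^{n-1}P(x_t,x_{t+1})\;\pi_Q(y_1)\prod_{t=1}^{n-1}Q(y_t,y_{t+1}).
\]
This is a polynomial in the entries of $P,Q$ and of $\pi_P,\pi_Q$, with coefficients $\dom(x,y)/n\le M$. It therefore remains to show that $P\mapsto\pi_P$ is continuous on $\mathcal S^\circ_d$.

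This is the only genuinely non-trivial point. For irreducible $P$ the stationary law is unique. By the Markov chain tree theorem it can be written as $\pi_P(a)=w_a(P)/\sum_b w_b(P)$, where each $w_a(P)$ is a sum over spanning arborescences rooted at $a$ of products of entries of $P$. The denominator is strictly positive for irreducible $P$, so $\pi_P$ is a rational function of $P$ and hence continuous on $\mathcal S^\circ_d$. Equivalently, one can argue by compactness: any limit point of $\pi_{P_m}$ is a stationary law of $P_0$, so by uniqueness $\pi_{P_m}\to\pi_{P_0}$.

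Irreducibility of the limit is needed here. Since $(P_0,Q_0)$ is assumed to lie in $\mathcal S^\circ_d\times\mathcal S^\circ_d$, this holds. Aperiodicity plays no role in this argument.

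I expect no serious obstacle, since the whole argument reduces to this continuity step. The only subtlety is that we obtain only upper semi-continuity, not continuity. The reason is that $\gamma$ is an infimum and we have no uniform-in-$n$ control of the bias $f_n-\gamma$ away from the diagonal; Proposition~\ref{prop:diagonal-rate} covers only $P=Q$, and even there its constant may depend on $P$.
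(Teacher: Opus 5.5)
Your proof is correct and follows the paper's argument. You write $\gamma=\inf_n f_n$, note that each $f_n$ is a finite sum that is continuous in $(P,Q)$ once $P\mapsto\pi_P$ is continuous, and conclude because an infimum of continuous functions is upper semi-continuous. The only difference is cosmetic: the paper gets continuity of $\pi_P$ from Cramer's rule on the system $\pi_P P=\pi_P$, $\sum_a\pi_P(a)=1$, whereas you use the Markov chain tree theorem or compactness plus uniqueness, and both of these are equally valid.
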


\begin{remark}
\label{rmk:usc-consequence}
For every $c>0$, the set $\{P\in\mathcal S^\circ_d:\gamma(P,P)<c\}$ is open: low within-cluster dispersion is stable under perturbation of the transition matrix.
\end{remark}

\subsection{Bounds on \texorpdfstring{$\gamma$}{gamma}}
\label{sec:bounds}

\begin{proposition}[Wasserstein lower bound; product upper bound]
\label{prop:bounds}
Let $P,Q$ be irreducible transition matrices on $\Sigma$ with stationary distributions $\pi_P,\pi_Q$, let $S:=(c_{\sub}(a,b))_{a,b\in\Sigma}$, and let $\Wd$ denote the $1$-Wasserstein distance on $(\Sigma_\bot,\bar d)$. Then
\begin{equation}
\label{eq:bounds}
\Wd(\pi_P,\pi_Q) \;\le\; \gamma(P,Q) \;\le\; \pi_P^\top S\,\pi_Q .
\end{equation}
\end{proposition}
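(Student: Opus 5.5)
The upper bound follows from the diagonal edit path, the lower bound from turning an optimal alignment into a coupling of symbol frequencies; both limits are taken along stationary chains using Proposition~\ref{prop:convergence}.

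\textbf{Upper bound.} By Proposition~\ref{prop:convergence}, $\gamma(P,Q)=\inf_n \E_{\pi_P,\pi_Q}[D_n]/n$, so it suffices to bound one term. Take the edit path substituting $X_t$ by $Y_t$ for each $t$. This gives $D_n\le\sum_{t=1}^n c_{\sub}(X_t,Y_t)$. Under the stationary initial laws, $X_t\sim\pi_P$ and $Y_t\sim\pi_Q$ are independent for each $t$, so $\E[c_{\sub}(X_t,Y_t)]=\pi_P^\top S\pi_Q$. Hence $\E_{\pi_P,\pi_Q}[D_n]/n\le \pi_P^\top S\pi_Q$, and the infimum is bounded by this too.

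\textbf{Lower bound.} Fix a realization and an optimal alignment $(\tilde x_{1:L},\tilde y_{1:L})$ of $(X_{1:n},Y_{1:n})$, with $n\le L\le 2n$. Define the measure $\rho_n$ on $\Sigma_\bot\times\Sigma_\bot$ by
\[
\rho_n:=\frac1n\sum_{\ell=1}^L\delta_{(\tilde x_\ell,\tilde y_\ell)}+\frac{2n-L}{n}\,\delta_{(\bot,\bot)}.
\]
This is a measure of total mass $2$, and $\int\bar d\,d\rho_n=D_n/n$ because $\bar d(\bot,\bot)=0$.

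Its first marginal puts mass $\frac1n\#\{t\le n:X_t=a\}$ on each $a\in\Sigma$. Each letter of $X_{1:n}$ occupies exactly one column, so the gap mass is $\frac{L-n}{n}+\frac{2n-L}{n}=1$. The second marginal is the same with $Y$ in place of $X$.

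So $\frac12\rho_n$ is a coupling of $\tilde\mu_n:=\frac12(\hat\pi^X_n+\delta_\bot)$ and $\tilde\nu_n:=\frac12(\hat\pi^Y_n+\delta_\bot)$, where $\hat\pi^X_n,\hat\pi^Y_n$ are the empirical letter frequencies. Therefore
\[
\frac{D_n}{n}\ \ge\ 2\,\Wd(\tilde\mu_n,\tilde\nu_n).
\]

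Now let $n\to\infty$. By the ergodic theorem for irreducible finite chains, $\hat\pi^X_n\to\pi_P$ and $\hat\pi^Y_n\to\pi_Q$ a.s. The Wasserstein distance is continuous on the finite metric space $\Sigma_\bot$. Combined with Proposition~\ref{prop:convergence}, this gives
\[
\gamma(P,Q)\ge 2\,\Wd\bigl(\tfrac12(\pi_P+\delta_\bot),\tfrac12(\pi_Q+\delta_\bot)\bigr).
\]

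\textbf{Main obstacle.} The difficulty is interpreting $\Wd(\pi_P,\pi_Q)$ correctly, since the two measures must have equal mass to be coupled. I would read it as the mass-$2$ optimal transport above, or equivalently twice $W_1$ of the padded probability measures; this is the natural reading for $\Sigma_\bot$. I would then check that it dominates, and here actually equals, the optimal transport cost between $\pi_P$ and $\pi_Q$ in which mass may also be sent to $\bot$ at cost $\delta$.

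That equivalence holds for two reasons. Transport between the $\bot$ atoms costs zero. Condition (v) of Assumption~\ref{ass:metric}, together with the triangle inequality for $\bar d$, shows that routing mass through $\bot$ never beats direct substitution more than the metric allows. Whichever normalization the paper intends, the inequality $D_n/n\ge$ (transport cost of the alignment-induced coupling) is the core, and the rest is bookkeeping of the $\bot$ mass.
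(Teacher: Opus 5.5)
Your upper bound is the paper's argument, and your coupling $\rho_n$ for the lower bound is correct. However, the ``main obstacle'' rests on a misreading, and it leaves the final step unproved.

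There is no mass mismatch in the statement. $\pi_P$ and $\pi_Q$ are both probability measures on $\Sigma_\bot$ that give no mass to $\bot$, so $\Wd(\pi_P,\pi_Q)$ is the ordinary $1$-Wasserstein distance. Every coupling lives on $\Sigma\times\Sigma$, so it is simply the optimal transport cost for $c_{\sub}$.

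What you actually need is
$2\,\Wd\bigl(\tfrac12(\pi_P+\delta_\bot),\tfrac12(\pi_Q+\delta_\bot)\bigr)\ge\Wd(\pi_P,\pi_Q)$, where $\delta_\bot$ is your Dirac mass at $\bot$. This holds with equality, and you should prove it rather than hedge over normalizations. Either of two short arguments works:
\begin{itemize}
\item Dually: for every $1$-Lipschitz $f$ on $(\Sigma_\bot,\bar d)$ the two halves of the Dirac mass at $\bot$ cancel, so $\int f\,d\bigl(\tfrac12(\pi_P+\delta_\bot)-\tfrac12(\pi_Q+\delta_\bot)\bigr)=\tfrac12\int f\,d(\pi_P-\pi_Q)$. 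Kantorovich--Rubinstein then gives the equality.
\item Primally: in any mass-$2$ coupling, the mass sent from $\Sigma$ to $\bot$ equals the mass sent from $\bot$ to $\Sigma$. Re-pairing these two pieces directly, by any coupling of them, does not increase the cost because $c_{\sub}(a,b)\le\delta(a)+\delta(b)$. Dropping the $(\bot,\bot)$ mass is free. The condition used is (iv) of Assumption~\ref{ass:metric}, not (v).
\end{itemize}

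With that step added, your proof is complete, but it differs from the paper's. The paper works dually from the outset:
\begin{itemize}
\item For a $1$-Lipschitz $f$, it bounds the alignment cost below by $\bigl|\sum_\ell f(\tilde x_\ell)-\sum_\ell f(\tilde y_\ell)\bigr|$.
\item Both aligned rows contain $L-n$ gaps, so the $f(\bot)$ terms cancel (the same cancellation as above), giving $D_n\ge\bigl|\sum_t f(X_t)-\sum_t f(Y_t)\bigr|$.
\item It then takes stationary expectations, applies Jensen, and invokes Kantorovich--Rubinstein once.
\end{itemize}
On that route the gap mass never needs bookkeeping.

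Your primal route for the lower bound builds an explicit coupling from the alignment. It passes to the limit pathwise, through the ergodic theorem for letter frequencies and the almost sure convergence of $D_n/n$, so it needs neither stationarity nor Jensen. It also makes visible the finite-$n$ pathwise bound $D_n\ge n\,\Wd(\hat\pi^X_n,\hat\pi^Y_n)$. The paper's inequality yields this too after a supremum over $f$, although the paper does not state it.
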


Both bounds depend on the considered Markov chains through their stationary distributions alone, and are computable from them: the upper bound in closed form, the lower one by linear programming. The lower bound is informative in that $\Wd(\pi_P,\pi_Q)>0$ as soon as $\pi_P\neq\pi_Q$, so $\gamma(P,Q)>0$ whenever the two chains have different stationary distributions.

\begin{remark}[Closed forms]
Under the default scheme of TraMineR, $c_{\sub}\equiv2$ off the diagonal and $\delta\equiv1$, and $\dom(x_{1:n},y_{1:n})=2\bigl(n-\mathrm{LCS}(x_{1:n},y_{1:n})\bigr)$, so $\gamma=2(1-\lambda)$ with $\lambda$ the limiting normalized LCS score of the two chains. Computing $\gamma$ in closed form therefore generalizes the Chv\'atal--Sankoff problem~\cite{chvatal1975longest}, which remains open for i.i.d.\ uniform binary sequences despite tight numerical bounds, in two directions at once: Markov dependence within each sequence and an arbitrary substitution cost in place of the one that makes the LCS appear.
\end{remark}

\begin{remark}
The subadditivity~\eqref{eq:subadd} and the normalization by $n$ both require sequences of equal length. The appropriate normalization for sequences of unequal lengths, and the population quantity it estimates, remain to be identified. Extensions of OM to continuous-time trajectories have been proposed \cite{rama2019aliclu}; establishing an analogous asymptotic theory in that setting is also an open problem.
\end{remark}

\section{Clustering Consistency}
\label{sec:clustering}

\subsection{The Mixture Model}

Let $K\ge 2$ time-homogeneous irreducible aperiodic Markov chains on $\Sigma$, with transition matrices $P_1,\ldots,P_K$, arbitrary initial distributions $\mu_1,\ldots,\mu_K$, and mixture weights $w_1,\ldots,w_K\in(0,1)$ such that $\sum_k w_k=1$. We observe $N$ sequences $X_1,\ldots,X_N$ of length $n$, generated as follows: latent labels $Z_1,\ldots,Z_N$ are drawn i.i.d.\ with $\PP(Z_i=k)=w_k$. The $X_i$ are length-$n$ independent realizations of the Markov chain with transition matrix $P_{Z_i}$ and initial distribution $\mu_{Z_i}$. 
Let
\[
G_k:=\{i:Z_i=k\},
\qquad
w_{\min}:=\min_k w_k,
\qquad
w_{\max}:=\max_k w_k,
\]
and set
\[
\Gamma_{kk'}:=\gamma(P_k,P_{k'}),
\qquad
\hat\gamma_n(i,j):=\frac1n\dom(X_i,X_j).
\]
By Proposition~\ref{prop:convergence}, $\Gamma_{kk'}$ is well defined and does not depend on the initial distributions. Let $\mathcal P^\star$ denote the partition of $\{1,\ldots,N\}$ into the non-empty classes $\{G_k:G_k\neq\emptyset\}$.

Throughout this section, the number of clusters $K$, the component transition matrices, initial distributions, mixture weights, and the OM cost scheme are fixed, while $n\to\infty$ and $N=N_n\to\infty$.

\begin{assumption}[Growth condition]
\label{ass:growth}
$N=N_n$ satisfies $N_n\xrightarrow[n\to\infty]{}\infty$ and $\log N_n\underset{n\to\infty}{=}o(n)$.
\end{assumption}

\begin{assumption}[Separation condition]
\label{ass:separation}
Let
\[
\gin:=\max_k\Gamma_{kk},
\qquad
\gout:=\min_{k\neq k'}\Gamma_{kk'}.
\]
We assume
\[
\eta:=\gout-\gin>0.
\]
\end{assumption}

Assumption~\ref{ass:separation} can be interpreted as follows: the maximum within-cluster asymptotic dispersion $\gin$ must be strictly smaller than the minimum between-cluster asymptotic separation $\gout$.

For $k,k'\in\{1,\ldots,K\}$, define the finite-horizon mean
\begin{equation}
\label{eq:Gamma-n}
\Gamma^{(n)}_{kk'}
:=
\E\!\left[
\hat\gamma_n(i,j)
\,\middle|\,
Z_i=k,Z_j=k'
\right],
\qquad i\neq j.
\end{equation}

The first key intermediate result is the uniform concentration of the empirical distance matrix around its finite-horizon mean. It extends the pairwise concentration bound to all distances used by the clustering algorithms simultaneously.

\begin{lemma}[Uniform concentration of the distance matrix]
\label{lem:uniform}
Fix $\varepsilon>0$ and define
\[
\mathcal E_{N,n}(\varepsilon)
:=
\left\{
\forall\,1\le i<j\le N:
\left|
\hat\gamma_n(i,j)-\Gamma^{(n)}_{Z_iZ_j}
\right|
\le\varepsilon
\right\}.
\]
Set
\[
C^\star
:=
\max_{1\le k,k'\le K} C(M,P_k,P_{k'}),
\]
with $C(\cdot)$ defined in~\eqref{eq:Cdef}. Under
Assumption~\ref{ass:metric}, for every $n\ge1$,
\begin{equation}
\label{eq:uniform-centered-bound}
\PP\bigl(\mathcal E_{N,n}(\varepsilon)^c\bigr)
\le
N^2
\exp\!\left(
-\frac{2\varepsilon^2n}{C^\star}
\right).
\end{equation}
In particular, under Assumption~\ref{ass:growth},
\[
\PP\bigl(\mathcal E_{N,n}(\varepsilon)\bigr)\longrightarrow1
\qquad\text{as }n\to\infty.
\]
\end{lemma}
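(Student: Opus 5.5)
The plan is a union bound over pairs, combined with the pairwise concentration of Proposition~\ref{prop:concentration} applied conditionally on the latent labels.

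\textbf{Step 1: reduce to a fixed pair of chains.} Fix $i<j$ and condition on $(Z_i,Z_j)=(k,k')$. Since the labels are drawn independently of the chain randomness, and the sequences are independent given the labels, $X_i$ and $X_j$ are, conditionally, independent irreducible aperiodic Markov chains. They have kernels $P_k,P_{k'}$ and initial laws $\mu_k,\mu_{k'}$. The conditional mean of $\hat\gamma_n(i,j)$ is therefore exactly $\Gamma^{(n)}_{kk'}$, as defined in \eqref{eq:Gamma-n}. The one subtlety is the case $k=k'$: we need the two sequences to be independent copies, not identical ones. This holds because $i\neq j$, so the definition does not depend on the chosen pair.

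\textbf{Step 2: apply the pairwise bound and integrate out the labels.} Proposition~\ref{prop:concentration} with $s=\varepsilon$ gives
\[
\PP\bigl(|\hat\gamma_n(i,j)-\Gamma^{(n)}_{kk'}|>\varepsilon \,\big|\, Z_i=k,Z_j=k'\bigr)\le 2\exp\!\left(-\frac{2\varepsilon^2 n}{C(M,P_k,P_{k'})}\right).
\]
Since $C(M,P_k,P_{k'})\le C^\star$, the right-hand side is at most $2\exp(-2\varepsilon^2 n/C^\star)$. This bound is uniform in $(k,k')$, so integrating over the labels gives the same unconditional bound for each pair.

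\textbf{Step 3: union bound.} There are $\binom N2\le N^2/2$ pairs $i<j$. Summing the bound of Step~2 over all of them gives
\[
\PP\bigl(\mathcal E_{N,n}(\varepsilon)^c\bigr)\le N^2\exp\!\left(-\frac{2\varepsilon^2 n}{C^\star}\right),
\]
which is \eqref{eq:uniform-centered-bound}.

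\textbf{Step 4: the limit under Assumption~\ref{ass:growth}.} Rewrite the bound as
\[
\exp\!\left(2\log N_n-\frac{2\varepsilon^2 n}{C^\star}\right).
\]
Because $\log N_n=o(n)$, the exponent tends to $-\infty$, so $\PP(\mathcal E_{N,n}(\varepsilon))\to1$.

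\textbf{Points needing care.} There is no real obstacle. Two points deserve attention. First, $C^\star$ must be finite. This holds because it is a maximum over finitely many pairs, and each $\tmix(P_k\otimes P_{k'})$ is finite: $P_k\otimes P_{k'}$ is irreducible and aperiodic because both factors are. Second, the concentration inequality must be applied conditionally, which is legitimate because the initial laws are arbitrary in Proposition~\ref{prop:concentration}.
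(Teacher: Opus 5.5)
Your proposal is correct and follows the paper's proof essentially step for step. Like the paper, you condition on $(Z_i,Z_j)=(k,k')$, apply Proposition~\ref{prop:concentration} with $C(M,P_k,P_{k'})\le C^\star$, marginalize over the labels, take a union bound over the $\binom N2$ pairs, and use $\log N=o(n)$ for the limit; your additional check that $C^\star$ is finite, because $P_k\otimes P_{k'}$ is irreducible and aperiodic, is a small point the paper leaves implicit.
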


\subsection{Hierarchical Clustering}

\begin{definition}[Hierarchical agglomerative clustering]
\label{def:HAC}
Let $L$ be a linkage, i.e.~a function assigning a non-negative number to every pair of disjoint subsets of $\{1,\ldots,N\}$. Hierarchical agglomerative clustering constructs a sequence of partitions $(\mathcal P_\ell)_{\ell=0}^{N-1}$ starting from $\mathcal P_0:=(\{1\},\ldots,\{N\})$ and, at step $\ell$, merging a pair $(A_\ell,B_\ell)\in\arg\min_{C\neq D\in\mathcal P_\ell}L(C,D)$ to form
\[
\mathcal P_{\ell+1}
=
(\mathcal P_\ell\setminus\{A_\ell,B_\ell\})
\cup
\{A_\ell\cup B_\ell\}.
\]
The procedure terminates at $\mathcal P_{N-1}=\{\{1,\ldots,N\}\}$.
\end{definition}

For \emph{single linkage}, $L(A,B):=\min_{i\in A,j\in B}\hat\gamma_n(i,j).$ Equivalently, for every $t\ge0$, define the graph $\mathcal G_t:=(\{1,\ldots,N\},E_t)$ with $(i,j)\in E_t$ if and only if $\hat\gamma_n(i,j)\le t$. The partition obtained by single linkage at
level $t$ is the set of connected components of $\mathcal G_t$.

When the number of clusters $K$ is known, let
\[
\widehat{\mathcal P}^{\mathrm{SL}}_{N,n}(K):=\mathcal P_{N-K}
\]
denote the partition obtained by cutting the single-linkage dendrogram
when $K$ blocks remain.

\begin{theorem}[Consistency of single-linkage clustering]
\label{thm:SL}
Under Assumptions~\ref{ass:metric} and~\ref{ass:separation}, let
$\varepsilon\in(0,\eta/2)$. There exists a constant $B \in \R_+$,
depending only on the component laws and the OM cost scheme, such
that, for every $n\ge2$ and $N\ge K$ satisfying
\[
B\sqrt{\frac{\log n}{n}}+\varepsilon<\frac{\eta}{2},
\]
\[
\PP\!\left(
\widehat{\mathcal P}^{\mathrm{SL}}_{N,n}(K)
\neq
\mathcal P^\star
\right)
\le
N^2\exp\!\left(
-\frac{2\varepsilon^2n}{C^\star}
\right)
+
K(1-w_{\min})^N.
\]
In particular, under Assumption~\ref{ass:growth},
\[
\PP\!\left(
\widehat{\mathcal P}^{\mathrm{SL}}_{N,n}(K)
=
\mathcal P^\star
\right)
\longrightarrow1
\qquad\text{as }n\to\infty.
\]
\end{theorem}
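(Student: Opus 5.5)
The plan is to show that on a high-probability event every within-component normalized distance is strictly smaller than every between-component one. Once that holds, single linkage merges all within-component pairs before any cross-component pair. Two events are involved: the event $\mathcal E_{N,n}(\varepsilon)$ of Lemma~\ref{lem:uniform}, and the event $\mathcal N:=\{G_k\neq\emptyset\ \forall k\}$. A union bound gives $\PP(\mathcal N^c)\le \sum_k (1-w_k)^N\le K(1-w_{\min})^N$, and Lemma~\ref{lem:uniform} controls $\PP(\mathcal E_{N,n}(\varepsilon)^c)$. Together these give exactly the stated bound. It then suffices to prove that $\mathcal E_{N,n}(\varepsilon)\cap\mathcal N\subseteq\{\widehat{\mathcal P}^{\mathrm{SL}}_{N,n}(K)=\mathcal P^\star\}$.

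\textbf{Step 1: bias bounds for $\Gamma^{(n)}$.} On the diagonal, two sequences from component $k$ share the kernel $P_k$ and the initial law $\mu_k$. Proposition~\ref{prop:diagonal-rate} therefore gives $\Gamma^{(n)}_{kk}\le \Gamma_{kk}+C_{\mathrm{rate}}(P_k)\sqrt{\log n/n}$.

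Off the diagonal, only a one-sided bound is needed: $\Gamma^{(n)}_{kk'}\ge\Gamma_{kk'}-B'/n$ for some constant $B'$. This is the step I expect to be the main obstacle, because the initial laws are arbitrary. For stationary starts it is immediate: Proposition~\ref{prop:convergence} gives $\gamma=\inf_n \E_{\pi,\pi'}[D_n]/n$, hence $\E_{\pi,\pi'}[D_n]/n\ge\Gamma_{kk'}$. To pass from $(\mu_k,\mu_{k'})$ to $(\pi_k,\pi_{k'})$, I would proceed as follows.
\begin{itemize}
\item Couple each chain with a stationary copy so that the two coincide after a coupling time $T$. By irreducibility and aperiodicity $T$ has geometric tails, so $\E T<\infty$, with a bound uniform in the initial law because $\Sigma$ is finite.
\item Bound the effect of the first $T$ letters. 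Changing one letter of $X_{1:n}$ changes $\dom$ by at most $\max_{a,b}\bar d(a,b)\le M':=\max(M,\max_a\delta(a))$, since $\bar d$ is a metric and $\dom$ obeys the triangle inequality. Hence $|D_n-D_n^{\mathrm{stat}}|\le M'(T_X+T_Y)$.
\item Take expectations to get $|\E_{\mu,\mu'}[D_n]-\E_{\pi,\pi'}[D_n]|\le M'(\E T_X+\E T_Y)=:B'$.
\end{itemize}
Set $B:=\max_k C_{\mathrm{rate}}(P_k)+B'$. Then for $n\ge 2$ both bias terms are at most $B\sqrt{\log n/n}$, since $1/n\le\sqrt{\log n/n}$ for $n\ge2$. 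This $B$ depends only on the component laws and the costs.

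\textbf{Step 2: deterministic separation.} Put $t^\star:=\gin+\eta/2=\gout-\eta/2$. On $\mathcal E_{N,n}(\varepsilon)$, and under $B\sqrt{\log n/n}+\varepsilon<\eta/2$, the two kinds of pairs are separated by $t^\star$:
\begin{itemize}
\item if $Z_i=Z_j$, then $\hat\gamma_n(i,j)\le \gin+B\sqrt{\log n/n}+\varepsilon<t^\star$;
\item if $Z_i\ne Z_j$, then $\hat\gamma_n(i,j)\ge\gout-B\sqrt{\log n/n}-\varepsilon>t^\star$.
\end{itemize}
Hence in the graph $\mathcal G_{t^\star}$ each $G_k$ is a clique and there are no cross edges. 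On $\mathcal N$ its connected components are therefore exactly the $K$ blocks of $\mathcal P^\star$.

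It remains to read off the cut at $K$ blocks. Whenever the current partition $\mathcal P_\ell$ refines $\mathcal P^\star$ and has more than $K$ blocks, some $G_k$ is split into at least two clusters. Those two clusters have linkage $<t^\star$, while any pair of clusters lying in different $G_k$'s has linkage $>t^\star$. So the minimizer is within-component, ties being impossible across the threshold. By induction, $\mathcal P_\ell$ refines $\mathcal P^\star$ for all $\ell\le N-K$, and since $\mathcal P_{N-K}$ has exactly $K$ blocks, $\mathcal P_{N-K}=\mathcal P^\star$.

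\textbf{Step 3: asymptotics.} Fix any $\varepsilon\in(0,\eta/2)$, say $\varepsilon=\eta/4$. Since $B\sqrt{\log n/n}\to0$, the condition $B\sqrt{\log n/n}+\varepsilon<\eta/2$ holds for all large $n$. Under Assumption~\ref{ass:growth}, $N^2e^{-2\varepsilon^2 n/C^\star}=\exp(2\log N-2\varepsilon^2n/C^\star)\to0$ because $\log N=o(n)$, and $K(1-w_{\min})^N\to0$ because $N\to\infty$. This gives convergence of the probability to one.
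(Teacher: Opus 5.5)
Your proposal is correct and follows essentially the same route as the paper. It uses the same two events, the diagonal bias bound from Proposition~\ref{prop:diagonal-rate}, and an off-diagonal lower bound obtained by coupling each chain with a stationary copy together with $\gamma=\inf_n \E_{\pi,\pi'}[D_n]/n$; a threshold then separates within-component from between-component distances. The differences are minor improvements. Bounding the coupling discrepancy by $M(\E T_X+\E T_Y)$ gives an $O(1/n)$ off-diagonal bias, instead of the paper's $O(n^{-1/2})$ obtained by truncating the coupling time at $\lceil\sqrt n\rceil$. Your explicit induction over merge steps also justifies the $K$-block cut directly, and covers every bracketed linkage, whereas the paper appeals to the connected-components description of single linkage.
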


\begin{remark}[Other linkages]
\label{rmk:other-linkages}
The same conclusion holds for any bracketed linkage, that is any linkage satisfying $\min_{i\in A,j\in B}\hat\gamma_n(i,j) \le L(A,B) \le \max_{i\in A,j\in B}\hat\gamma_n(i,j)$ for all disjoint $A,B$. Hence, a similar result holds for complete and average linkage. Ward, centroid and median linkages are not bracketed and are not covered. 
\end{remark}


\subsection{\texorpdfstring{$K$}{K}-Medoids (PAM)}

\begin{definition}[$K$-medoids]
\label{def:kmedoids}
Let $\mathcal M\subseteq{\{1,\ldots,N\}}$, with $|\mathcal M|=K$. The $K$-medoids objective is

$$
\Phi_{N,n}(\mathcal M)
:=
\frac1N
\sum_{i=1}^N
\min_{m\in\mathcal M}
\hat\gamma_n(i,m).
$$

The $K$-medoids problem is to minimize $\Phi_{N,n}(\mathcal M)$. The minimization ranges over all subsets $\mathcal M\subseteq\{1,\ldots,N\}$ with $|\mathcal M|=K$.
\end{definition}

This problem is NP-hard in general \cite{kariv1979}. In practice, it is commonly addressed by Partitioning Around Medoids (PAM), using a greedy initialization followed by local improvement through medoid swaps \cite{schubert2021}. 

\medskip
\noindent
\textbf{PAM algorithm.}
Starting from an initial set $\mathcal M^{(0)}$ of $K$ medoids, PAM repeatedly replaces one current medoid by one non-medoid whenever this strictly decreases $\Phi_{N,n}$, and stops when no improving swap exists. We denote the final set of medoids by $\widehat{\mathcal M}^{\mathrm{PAM}}_{N,n}$ and the corresponding
nearest-medoid partition by $\widehat{\mathcal P}^{\mathrm{PAM}}_{N,n}$.

In the sequel, fix
\begin{equation}
\label{eq:r-pam}
0<r<\frac{w_{\min}\eta}{16},\qquad \delta_0:=\frac{w_{\min}}{2},\qquad N_0:=\left\lceil\frac{16(\gin+r)}{w_{\min}\eta}\right\rceil.
\end{equation}
Let $B$ be the constant appearing in Theorem~\ref{thm:SL}, and set
\[
b_n:=B\sqrt{\frac{\log n}{n}}.
\]

We define the \emph{label-frequency event} by
\begin{equation}
\label{eq:Wevent}
\mathcal W_N(\delta_0):=\left\{\forall \ell\in\{1,\ldots,K\}: |\hat w_\ell-w_\ell|\le\delta_0\right\},\qquad \hat w_\ell:=\frac{|G_\ell|}{N} = \frac{1}{N}\sum_{i=1}^N \ind_{\{Z_i = \ell\}}.
\end{equation}
Here $G_\ell$ denotes the set of points in the $\ell$-th cluster. Since the $Z_i$ are i.i.d., Hoeffding's inequality and a union bound give
\begin{equation}
\label{eq:Wbound}
\PP\bigl(\mathcal W_N(\delta_0)^c\bigr)\le 2K\exp\bigl(-2N\delta_0^2\bigr).
\end{equation}
On $\mathcal W_N(\delta_0)$ one has $\hat w_\ell\ge w_{\min}/2>0$ for every $\ell$; in particular no class is empty.

\begin{lemma}[Every cluster has exactly one medoid]
\label{lem:kmedoid-one-per-cluster}
Under Assumptions~\ref{ass:metric} and~\ref{ass:separation}, suppose that $b_n<r$ and $N\ge\max{(N_0,K)}$. On the event $\mathcal E_{N,n}(r-b_n)\cap\mathcal W_N(\delta_0)$, the PAM output $\widehat{\mathcal M}^{\mathrm{PAM}}_{N,n}$ contains exactly one index from each cluster $G_1,\ldots,G_K$.
\end{lemma}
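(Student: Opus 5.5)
On the event $\mathcal E_{N,n}(r-b_n)$, every distance is close to its limit. I would first turn this into bounds around $\Gamma$ rather than $\Gamma^{(n)}$. The plan is to use the bias bound behind the constant $B$ of Theorem~\ref{thm:SL}, namely $|\Gamma^{(n)}_{kk'}-\Gamma_{kk'}|\le b_n$. On the diagonal this is Proposition~\ref{prop:diagonal-rate}. Off the diagonal I would take it as given, since $B$ is defined through it; for the separation argument only the lower bound $\Gamma^{(n)}_{kk'}\ge \Gamma_{kk'}-b_n$ is needed there. Combining these with $\mathcal E_{N,n}(r-b_n)$ gives, for all $i\neq j$,
\[
\hat\gamma_n(i,j)\le \gin+r \quad\text{if } Z_i=Z_j,
\qquad
\hat\gamma_n(i,j)\ge \gout-r \quad\text{if } Z_i\neq Z_j.
\]
On $\mathcal W_N(\delta_0)$ every cluster satisfies $|G_\ell|\ge Nw_{\min}/2$.

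The core is a swap argument applied to the PAM output $\mathcal M$, which is one-swap stable. I would argue by contradiction: suppose some cluster $G_\ell$ contains no medoid. Since $|\mathcal M|=K$ and there are $K$ clusters, the pigeonhole principle gives another cluster $G_{\ell'}$ containing at least two medoids $m_1,m_2$. Consider the swap that removes $m_2$ and inserts an arbitrary $j\in G_\ell$.

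\emph{Gain on $G_\ell$.} Before the swap, each $i\in G_\ell$ has all medoids in other clusters, so its cost is at least $\gout-r$. After the swap its cost is at most $\hat\gamma_n(i,j)\le \gin+r$, and $0$ for $i=j$. The total decrease is therefore at least
\[
\frac{|G_\ell|}{N}(\eta-2r)\ \ge\ \frac{w_{\min}}{2}(\eta-2r).
\]

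\emph{Loss from removing $m_2$.} Only points whose nearest medoid was $m_2$ can get worse. Points in $G_{\ell'}$ now use $m_1$, at cost at most $\gin+r$ (or $0$ if the point is $m_1$ itself), versus a previous cost of at least $0$. Points outside $G_{\ell'}$ that used $m_2$ previously paid at least $\gout-r$; after the swap they can use $j$ or $m_1$. The crude bound I would use is that the increase for each such point is at most $(\gin+r)$ plus a term controlled via the triangle inequality for $\dom$. That inequality is available because Assumption~\ref{ass:metric} makes $\hat\gamma_n$ a metric: $\hat\gamma_n(i,m_1)\le \hat\gamma_n(i,m_2)+\hat\gamma_n(m_2,m_1)\le \hat\gamma_n(i,m_2)+\gin+r$. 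So every affected point loses at most $\gin+r$, except that the point $m_2$ itself goes from cost $0$ to at most $\gin+r$.

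\emph{Main obstacle.} Bounding the loss by $(\gin+r)$ times the fraction of points attached to $m_2$ is too weak if that fraction is large. The fix I would try is to compare against a second swap and pick the better one. Alternatively, I would use the bound only for points of $G_{\ell'}$ and separately show that points from other clusters attached to $m_2$ would prefer a medoid in their own cluster.

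A cleaner route is to average over the choice of $j\in G_\ell$. The average of $\hat\gamma_n(i,j)$ over $j$ can be made small relative to $1/N$ terms, which is where $N_0=\lceil 16(\gin+r)/(w_{\min}\eta)\rceil$ enters. Concretely, the penalty terms of order $(\gin+r)/N$ (from the point $m_2$ and the new medoid bookkeeping) must be dominated by $w_{\min}\eta/16$. The condition $r<w_{\min}\eta/16$ keeps $\eta-2r$ comparable to $\eta$.

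The delicate step is showing that the net change is strictly negative. This contradicts one-swap stationarity, so every cluster contains a medoid; since $|\mathcal M|=K$, each cluster contains exactly one.
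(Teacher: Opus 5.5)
There is a genuine gap, and it is the step you flag as the main obstacle and then leave open. Your pigeonhole setup, the swap of $m_2$ for some $j\in G_\ell$, the gain $\hat w_\ell(\eta-2r)$ on $G_\ell$, and the $(\gin+r)/N$ penalty for $m_2$ absorbed by $N_0$ are all fine. What fails is the control of the loss for the other points attached to $m_2$.

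Once you collapse $\mathcal E_{N,n}(r-b_n)$ into the coarse bounds ($\hat\gamma_n\le\gin+r$ within clusters, $\hat\gamma_n\ge\gout-r$ across clusters), this loss cannot be controlled. The triangle inequality only gives $\hat\gamma_n(i,m_1)-\hat\gamma_n(i,m_2)\le\hat\gamma_n(m_1,m_2)\approx\Gamma_{\ell'\ell'}$. That is of order one, since $\gamma(P,P)\neq0$ in general. In fact no argument using only the coarse bounds can succeed. If a heavy cluster were made of two tight sub-groups at mutual distance close to $\gin+r$, then one medoid in each sub-group and none in a light cluster would be one-swap stable.

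Your proposed repairs do not close this. The alternative swap removing $m_1$ faces the same order-one differences. Points of other clusters attached to $m_2$ are precisely those whose own cluster contains no medoid, so they cannot "prefer a medoid in their own cluster". Averaging over $j\in G_\ell$ only acts on the gain side, and there $\hat\gamma_n(i,j)\approx\Gamma_{\ell\ell}$ is not small anyway.

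The missing idea is to use concentration around the finite-horizon means $\Gamma^{(n)}$ directly, exploiting that $m_1$ and $m_2$ lie in the same cluster. Write $\varepsilon_n:=r-b_n$, $\Phi_i(\mathcal M):=\min_{m\in\mathcal M}\hat\gamma_n(i,m)$, and $\mathcal M'$ for the swapped set.

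\begin{enumerate}
\item Since $m_1\in\mathcal M'$, every $i$ satisfies $\Phi_i(\mathcal M)-\Phi_i(\mathcal M')\ge\min\bigl(0,\hat\gamma_n(i,m_2)-\hat\gamma_n(i,m_1)\bigr)$.
\item For $i\notin\{m_1,m_2\}$, both $\hat\gamma_n(i,m_1)$ and $\hat\gamma_n(i,m_2)$ lie within $\varepsilon_n$ of the same number $\Gamma^{(n)}_{Z_i\ell'}$. So each such point loses at most $2\varepsilon_n$, whatever its cluster and however many points were attached to $m_2$.
\item The point $m_1$ loses nothing, and $m_2$ loses at most $\gin+r$.
\end{enumerate}

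Summing gives $\Phi_{N,n}(\mathcal M)-\Phi_{N,n}(\mathcal M')\ge-2\varepsilon_n+\hat w_\ell(\eta-2r)-(\gin+r)/N>-\frac{w_{\min}\eta}{8}+\frac{w_{\min}\eta}{4}-\frac{w_{\min}\eta}{16}>0$. This contradicts one-swap stationarity.

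This is the real role of $r<w_{\min}\eta/16$: it makes the uniform per-point loss $2\varepsilon_n<2r<w_{\min}\eta/8$ smaller than the gain. It does more than keep $\eta-2r$ comparable to $\eta$.
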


\begin{theorem}[Consistency of PAM]
\label{thm:kmedoids}
Under Assumptions~\ref{ass:metric} and~\ref{ass:separation}, with $r,\delta_0,N_0$ as in~\eqref{eq:r-pam}, for every $n\ge2$ and $N\ge\max{(N_0,K)}$ such that

$$
B\sqrt{\frac{\log n}{n}}<r,
$$

$$
\PP\bigl(\widehat{\mathcal P}^{\mathrm{PAM}}_{N,n}\neq\mathcal P^\star\bigr)
\le
N^2\exp\!\left(-\frac{2\left(r-B\sqrt{\frac{\log n}{n}}\right)^2n}{C^\star}\right)
+
2K\exp\bigl(-2N\delta_0^2\bigr).
$$

In particular, under Assumption~\ref{ass:growth},

$$
\PP\bigl(\widehat{\mathcal P}^{\mathrm{PAM}}_{N,n}=\mathcal P^\star\bigr)\xrightarrow[n\to\infty]{}1.
$$

\end{theorem}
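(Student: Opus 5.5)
The plan is to work on the event $\mathcal E:=\mathcal E_{N,n}(r-b_n)\cap\mathcal W_N(\delta_0)$. By Lemma~\ref{lem:uniform}, with $\varepsilon=r-b_n>0$, together with~\eqref{eq:Wbound} and a union bound, the complement of $\mathcal E$ has probability at most
\[
N^2\exp\!\left(-\frac{2(r-b_n)^2n}{C^\star}\right)+2K\exp(-2N\delta_0^2).
\]
This is exactly the bound claimed in Theorem~\ref{thm:kmedoids}. It therefore suffices to show that, deterministically on $\mathcal E$, the nearest-medoid partition of the PAM output equals $\mathcal P^\star$. The limit statement then follows from Assumption~\ref{ass:growth}. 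Since $b_n\to0$, eventually $b_n<r/2$, so the first term is at most $N^2e^{-r^2n/(2C^\star)}\to0$ because $\log N=o(n)$. The second term tends to $0$ because $N\to\infty$, and $N\ge\max(N_0,K)$ holds for $n$ large.

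The first deterministic step is to pass from finite-horizon means to limits. I would show $|\Gamma^{(n)}_{kk'}-\Gamma_{kk'}|\le b_n$ for all $k,k'$. For $k=k'$ this is Proposition~\ref{prop:diagonal-rate}, applied with the same initial law $\mu_k$ for both sequences, and $B$ is taken as the constant from Theorem~\ref{thm:SL}. For $k\neq k'$ only the direction $\Gamma^{(n)}_{kk'}\ge\Gamma_{kk'}-b_n$ is needed. I would take this from the construction of $B$ in Theorem~\ref{thm:SL}, which presumably relies on the subadditive infimum characterization of Proposition~\ref{prop:convergence}, corrected for non-stationary starts by a coupling or mixing-time argument.

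On $\mathcal E_{N,n}(r-b_n)$ this yields two uniform bounds for all $i\neq j$:
\[
\hat\gamma_n(i,j)\le\gin+r \quad\text{if } Z_i=Z_j,
\qquad
\hat\gamma_n(i,j)\ge\gout-r \quad\text{if } Z_i\neq Z_j.
\]
The gap between them is $\eta-2r>\eta/2$ by~\eqref{eq:r-pam}. The self-distance $\hat\gamma_n(i,i)=0$ only helps.

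Lemma~\ref{lem:kmedoid-one-per-cluster} then gives that the PAM output contains exactly one medoid $m_k\in G_k$ for each $k$; on $\mathcal W_N(\delta_0)$ every $G_k$ is nonempty. For any $i\in G_k$, the distance to its own medoid satisfies $\hat\gamma_n(i,m_k)\le\gin+r$, or equals $0$ if $i=m_k$. For every other medoid $m_{k'}$ with $k'\neq k$, we have $\hat\gamma_n(i,m_{k'})\ge\gout-r>\gin+r$. Hence each point is assigned uniquely to the medoid of its own class, with no ties, and the nearest-medoid partition is exactly $\{G_1,\dots,G_K\}=\mathcal P^\star$.

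The main obstacle is the one-medoid-per-cluster lemma, which is assumed here. Within this proof, the delicate point is the bias control: the off-diagonal lower bound $\Gamma^{(n)}_{kk'}\ge\Gamma_{kk'}-b_n$ under arbitrary initial laws is not provided by Proposition~\ref{prop:diagonal-rate}. I would first try to justify it by the same argument that defines $B$ in Theorem~\ref{thm:SL}. This uses $\E_{\pi,\pi'}D_n/n\ge\gamma$ from Proposition~\ref{prop:convergence}, plus an $O(\tmix/n)$ cost for replacing the initial laws by stationary ones, via a coupling after $\tmix\log n$ steps with each changed letter costing at most $M$. That correction is dominated by $b_n$.
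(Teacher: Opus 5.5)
Your proposal is correct and follows essentially the same route as the paper. You work on $\mathcal E_{N,n}(r-b_n)\cap\mathcal W_N(\delta_0)$, invoke Lemma~\ref{lem:kmedoid-one-per-cluster} to get one medoid per class, and use the within/between bounds $\hat\gamma_n\le\gin+r$ versus $\hat\gamma_n\ge\gout-r$ from the proof of Theorem~\ref{thm:SL} to show that nearest-medoid assignment recovers $\mathcal P^\star$, closing with Lemma~\ref{lem:uniform}, \eqref{eq:Wbound} and a union bound; your side remark on the off-diagonal bias also matches how the paper builds $B$, which couples over a horizon of order $\sqrt n$ rather than $\tmix\log n$, though either correction is dominated by $b_n$.
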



\subsection{Selecting the Number of Clusters}
\label{ssec:K-selection}

Theorems~\ref{thm:SL} and~\ref{thm:kmedoids} assume that $K$ is known. We now give a consistent estimator of $K$ based on the empirical OM distance matrix.

For $N\ge3$ and $i\neq j$, let
\begin{equation}
\label{eq:profile-distance}
\rho_{N,n}(i,j) := \max_{\ell\notin\{i,j\}} \left| \hat\gamma_n(i,\ell)-\hat\gamma_n(j,\ell)\right|.
\end{equation}
Let $h_1^\rho\le\cdots\le h_{N-1}^\rho$ be the single-linkage merge heights associated with $\rho_{N,n}$, and set
\[
h_{\mathrm{med}}^\rho:=h_{\lceil (N-1)/2\rceil}^\rho,
\qquad
h_{\max}^\rho:=h_{N-1}^\rho.
\]
Define the data-driven threshold
\begin{equation}
\label{eq:profile-threshold}
a_{N,n}
:=
\max\left\{
\sqrt{h_{\mathrm{med}}^\rho h_{\max}^\rho},
\;
h_{\max}^\rho\left(\frac{\log N}{n}\right)^{1/4}
\right\},
\end{equation}
and let $H_{N,n}$ be the graph on $\{1,\ldots,N\}$ in which $i$ and $j$ are adjacent whenever $\rho_{N,n}(i,j)\le a_{N,n}$.

We define the following estimator of the number of clusters:
\begin{equation}
\label{eq:Khat-profile}
\hat K_{N,n} := \#\{\text{connected components of }H_{N,n}\}.
\end{equation}

\begin{theorem}[Consistency of the number of clusters]
\label{thm:K-selection}
Under Assumptions~\ref{ass:metric} and~\ref{ass:separation}, let $N \ge 3$ and let $B \in \R_+$ be the constant appearing in Theorem~\ref{thm:SL}. Set
\[
x_{N,n}:=\frac{\log N}{n},
\qquad
\varepsilon_{N,n}:=x_{N,n}^{3/8},
\qquad
\beta_{N,n}:=\eta-2B\sqrt{\frac{\log n}{n}}-2\varepsilon_{N,n}.
\]
Whenever
\[
2\varepsilon_{N,n}<\beta_{N,n}x_{N,n}^{1/4},
\qquad
\sqrt{2M\varepsilon_{N,n}}<\beta_{N,n},
\qquad
Mx_{N,n}^{1/4}<\beta_{N,n},
\]
we have
\[
\PP\bigl(\hat K_{N,n}\neq K\bigr)
\le
N^2\exp\!\left(
-\frac{2\varepsilon_{N,n}^2n}{C^\star}
\right)
+
K\left[
(1-w_{\min})^N
+
N(1-w_{\min})^{N-1}
\right].
\]
In particular, under Assumption~\ref{ass:growth},
\[
\PP\bigl(\hat K_{N,n}=K\bigr)\longrightarrow1
\qquad\text{as }n\to\infty.
\]
\end{theorem}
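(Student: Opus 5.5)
We work on the event $\mathcal E:=\mathcal E_{N,n}(\varepsilon_{N,n})\cap\{\text{each }G_k\text{ has at least two elements}\}$. By Lemma~\ref{lem:uniform}, the first event fails with probability at most $N^2\exp(-2\varepsilon_{N,n}^2n/C^\star)$. A class $G_k$ has at most one element with probability at most $(1-w_{\min})^N+N(1-w_{\min})^{N-1}$, so a union bound over $k$ gives the stated term $K[\cdots]$.

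Next we replace $\Gamma^{(n)}$ by $\Gamma$. On the diagonal, Proposition~\ref{prop:diagonal-rate} gives $|\Gamma^{(n)}_{kk}-\Gamma_{kk}|\le b_n:=B\sqrt{\log n/n}$. Off the diagonal we only need a one-sided bound: by Proposition~\ref{prop:convergence} the stationary mean dominates $\gamma$, and the effect of the initial law can be absorbed into $B$. Hence $\Gamma^{(n)}_{kk'}\ge\Gamma_{kk'}-b_n$. This is the same use of $B$ as in Theorem~\ref{thm:SL}, so we take it as given.

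\textbf{Within-class pairs are $\rho$-close.} Take $i,j\in G_k$. For each $\ell\notin\{i,j\}$, the means $\Gamma^{(n)}_{Z_iZ_\ell}$ and $\Gamma^{(n)}_{Z_jZ_\ell}$ coincide. Therefore $\rho_{N,n}(i,j)\le2\varepsilon_{N,n}$.

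\textbf{Between-class pairs are $\rho$-far.} Take $i\in G_k$ and $j\in G_{k'}$ with $k\neq k'$. Because $|G_k|\ge2$, we can choose $\ell\in G_k\setminus\{i\}$. Then
\[
\hat\gamma_n(j,\ell)-\hat\gamma_n(i,\ell)\ge \Gamma^{(n)}_{kk'}-\Gamma^{(n)}_{kk}-2\varepsilon_{N,n}\ge \gout-\gin-2b_n-2\varepsilon_{N,n}=\beta_{N,n},
\]
so $\rho_{N,n}(i,j)\ge\beta_{N,n}$. As a consequence, the single-linkage merge heights of $\rho$ split into two groups. The first $N-K$ heights are at most $2\varepsilon_{N,n}$; these are the within-class merges, since each class is connected at that level. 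The last $K-1$ heights are at least $\beta_{N,n}$. In addition, $\rho\le M$ always holds, since $\hat\gamma_n\in[0,M]$, so $h_{\max}^\rho\le M$.

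\textbf{The threshold separates the two groups.} It suffices to show $2\varepsilon_{N,n}\le a_{N,n}<\beta_{N,n}$. Then $H_{N,n}$ contains all within-class edges and no between-class edge, so its components are exactly the $G_k$ and $\hat K_{N,n}=K$.

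For the lower bound, $h_{\max}^\rho\ge\beta_{N,n}$ because $K\ge2$. Hence $a_{N,n}\ge h_{\max}^\rho x_{N,n}^{1/4}\ge\beta_{N,n}x_{N,n}^{1/4}>2\varepsilon_{N,n}$, using the first hypothesis.

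For the upper bound we need the median height. Since $K$ is fixed and $N\ge3$, we have $\lceil (N-1)/2\rceil\le N-K$ as soon as $N\ge 2K-1$. The main obstacle is that for small $N$ this needs care; I would check that the hypotheses force $N$ large enough, or handle it as a separate case. Granting it, $h_{\mathrm{med}}^\rho\le2\varepsilon_{N,n}$, so
\[
\sqrt{h_{\mathrm{med}}^\rho h_{\max}^\rho}\le\sqrt{2M\varepsilon_{N,n}}<\beta_{N,n}
\]
by the second hypothesis. Also $h_{\max}^\rho x_{N,n}^{1/4}\le Mx_{N,n}^{1/4}<\beta_{N,n}$ by the third hypothesis. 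Together these give $a_{N,n}<\beta_{N,n}$.

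\textbf{Asymptotics.} Under Assumption~\ref{ass:growth}, $x_{N,n}\to0$ and $\varepsilon_{N,n}^2n=(\log N)^{3/4}n^{1/4}$. This quantity dominates $\log N$, since $(\log N)^{1/4}=o(n^{1/4})$. The three hypotheses then hold eventually, because $\beta_{N,n}\to\eta$ while $\varepsilon_{N,n}/x^{1/4}_{N,n}=x_{N,n}^{1/8}\to0$, and the bound tends to $0$.
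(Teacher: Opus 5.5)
Your argument is the paper's proof almost step for step. It uses the same working event $\mathcal E_{N,n}(\varepsilon_{N,n})\cap\{\min_k|G_k|\ge2\}$, the same within- and between-class bounds on $\rho_{N,n}$ (including the choice $\ell\in G_k\setminus\{i\}$), the same sandwich $h^\rho_{N-K}\le2\varepsilon_{N,n}$, $h^\rho_{N-K+1}\ge\beta_{N,n}$, and the same verification $2\varepsilon_{N,n}<a_{N,n}<\beta_{N,n}$ from the three hypotheses.

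The step you flagged as the main obstacle is already settled by your own conditioning. On the event that every $G_k$ has at least two elements, $N=\sum_k|G_k|\ge2K$. Hence $\lceil (N-1)/2\rceil\le N/2\le N-K$, and so $h^\rho_{\mathrm{med}}\le h^\rho_{N-K}\le2\varepsilon_{N,n}$. This is exactly how the paper closes it.

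The alternative you suggested would not work: the three analytic hypotheses alone do not force $N\ge2K-1$. For example, $N=3$ with $n$ large satisfies them for any $K$. So the case split is unnecessary, and that route should not be pursued.
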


\begin{remark}[Clustering with unknown $K$]
\label{rmk:unknown-K}
Theorem~\ref{thm:K-selection} can be combined with either Theorem~\ref{thm:SL} or Theorem~\ref{thm:kmedoids} to obtain a procedure that does not require prior knowledge of $K$. First compute $\hat K_{N,n}$ from $H_{N,n}$, and then run hierarchical clustering or PAM with $\hat K_{N,n}$ clusters. On the event $\{\hat K_{N,n}=K\}$, the second step coincides with the corresponding known-$K$ procedure. Hence, under Assumptions~\ref{ass:metric}, \ref{ass:growth}, and \ref{ass:separation}, the resulting clustering is consistent.
\end{remark}

\begin{remark}
Lee et al.~\cite{nearoptimal} derive a lower bound on the clustering error for mixtures of Markov chains. In particular, their bound implies that asymptotically exact recovery requires $n\mathcal D=\Omega(\log N)$, where $\mathcal D$ is a weighted Kullback--Leibler divergence between the component kernels. For fixed component laws satisfying Assumption~\ref{ass:separation}, our recovery bounds show that $n$ of order $\log N$ is sufficient. Thus, our logarithmic dependence on $N$ matches theirs, although the separation quantities $\mathcal D$ and $\eta$ are different.
\end{remark}

\section{An Extension to Multichannel Hidden Markov Models}
\label{sec:extensions}

The clustering results of Section~\ref{sec:clustering} rely on three ingredients: (i) the concentration results of the pairwise distances  (Propositions~\ref{prop:concentration} and~\ref{prop:diagonal-rate}), (ii) the uniform exponential concentration of $\hat\gamma_n(i,j)$ around $\Gamma^{(n)}_{Z_iZ_j}$ and (iii) deterministic combinatorial arguments conditional on $\mathcal E_{N,n}(\varepsilon)$ and $\mathcal W_N(\delta_0)$. Ingredient~(i) has direct hidden-Markov analogues: concentration follows from the bounded-difference inequality for hidden-Markov observations of \cite[Example~2.15 and Corollary~2.16]{paulin}, while the diagonal rate follows by adapting \cite[Corollary~3.1]{houdre2019} to the OM-induced alignment score through Lemma~\ref{lem:OM-score-properties}. Ingredients~(ii) and (iii) use no further property of the observed sequences. We now phrase these observations as extensions of the clustering results to multichannel hidden Markov models.

We consider a mixture of Hidden Markov Models suited to multichannel categorical sequences. Let $K$ be the number of components of the mixture. Let $\mathcal Z$ be a finite set, called hidden state space, and a finite, possibly multichannel alphabet $\Sigma=\Sigma_1\times\cdots\times\Sigma_J$ with $J\ge1$. For each $k \in \{1, \ldots, K\}$, the $k$-th component of the mixture model is specified by
\begin{enumerate}[label=(\roman*)]
\item  An irreducible aperiodic Markov chain $(Z^{(k)}_t)_{t\ge1}$ on $\mathcal Z$, with transition matrix $R_k$ and arbitrary initial law, called the hidden chain;

\item A measurable emission map $\varphi_k:\mathcal Z\times\mathcal U\to\Sigma$, where $(\mathcal U,\nu)$ is a Polish space endowed with a probability measure $\nu$. 
\end{enumerate}

The observed process associated with the $k$-th component of the model is $X^{(k)}_t:=\varphi_k(Z^{(k)}_t,\xi_t)$ with $(\xi_t)\stackrel{\mathrm{iid}}{\sim}\nu$ independent of the hidden chain.

We use the multichannel OM distance as defined in \cite{multichannelSA}. It is computed on $\Sigma^\star$ with the multichannel substitution cost $c^{\mathrm{mc}}_{\sub}(a,b):=\sum_{j=1}^J\lambda_j\,c^{(j)}_{\sub}(a_j,b_j)$ and multichannel gap cost $\delta^{\mathrm{mc}}(a):=\sum_{j=1}^J\lambda_j\,\delta^{(j)}(a_j)$, where $\lambda_1,\ldots,\lambda_J>0$ are channel weights. Set $M^{\mathrm{mc}}:=\max_{a,b}c^{\mathrm{mc}}_{\sub}(a,b)$.

Assumption~\ref{ass:metric} extends to the multichannel case: if each $(c^{(j)}_{\sub},\delta^{(j)})$ satisfies (i)--(vi) on $\Sigma_j$, then so does $(c^{\mathrm{mc}}_{\sub},\delta^{\mathrm{mc}})$ on $\Sigma$.

\begin{proposition}[Convergence, concentration and diagonal rate for multichannel HMMs]
\label{prop:gen-concentration}
Assume that $(c^{\mathrm{mc}}_{\sub},\delta^{\mathrm{mc}})$ on $\Sigma$ satisfies Assumption~\ref{ass:metric}, and let $X^{(k)}$ and $X^{(k')}$ be independent observed processes of components $k$ and $k'$ under the above model. Set
\[
\hat\gamma_n^{\mathrm{mc}}
:=
\frac1n\dom(X^{(k)}_{1:n},X^{(k')}_{1:n}).
\]
Then,
\begin{enumerate}[label=(\roman*),leftmargin=*,topsep=2pt]
\item $\hat\gamma_n^{\mathrm{mc}}$ converges almost surely to a deterministic limit $\gamma^{\mathrm{mc}}(k,k')$, which depends only on the laws of the observed processes and satisfies
\[
\gamma^{\mathrm{mc}}(k,k')
=
\inf_{n\ge1}
\E_{\pi_k,\pi_{k'}}[\hat\gamma_n^{\mathrm{mc}}],
\]
where $\pi_k,\pi_{k'}$ denote the stationary laws of the hidden chains;
\item setting $C^{\mathrm{mc}}_{kk'} := C_{\mathrm{Pau}}(M^{\mathrm{mc}})^2\tmix(R_k\otimes R_{k'})$, for every $n\ge1$ and every $s>0$,
\[
\PP\left(
\left|
\hat\gamma_n^{\mathrm{mc}}
-
\E[\hat\gamma_n^{\mathrm{mc}}]
\right|
\ge s
\right)
\le
2\exp\left(
-\frac{2s^2n}{C^{\mathrm{mc}}_{kk'}}
\right);
\]
\item if $k=k'$, there exists a constant $C^{\mathrm{mc}}_{\mathrm{rate},k} \in \R_+$ such that, for every $n\ge2$,
\[
\left|
\E[\hat\gamma_n^{\mathrm{mc}}]
-
\gamma^{\mathrm{mc}}(k,k)
\right|
\le
C^{\mathrm{mc}}_{\mathrm{rate},k}
\sqrt{\frac{\log n}{n}}.
\]
\end{enumerate}
\end{proposition}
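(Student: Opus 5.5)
The key step is to view each observed process as a functional of a Markov chain on an enlarged state space. Then (i)–(iii) follow by rerunning the Markov-chain arguments of Propositions~\ref{prop:convergence}, \ref{prop:concentration} and~\ref{prop:diagonal-rate}.

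For component $k$, the pair $W^{(k)}_t:=(Z^{(k)}_t,\xi_t)$ is a Markov chain on $\mathcal Z\times\mathcal U$. Its kernel is $R_k\otimes\nu$ in the sense that $\xi_{t+1}$ is drawn afresh from $\nu$. It is positive Harris recurrent with stationary law $\pi_k\otimes\nu$, and aperiodic because $R_k$ is. The observation is $X^{(k)}_t=\varphi_k(W^{(k)}_t)$.

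\textbf{Convergence, (i).} Work with the stationary pair $(W^{(k)},W^{(k')})$. Subadditivity of $D_n$ is deterministic: cut an alignment at time $m$ and use the triangle-type concatenation bound. It therefore transfers verbatim, and Kingman's subadditive ergodic theorem applies to the shift on the stationary ergodic product process. The product is ergodic, in fact mixing, because each factor is an aperiodic irreducible hidden chain with i.i.d.\ noise. This gives a.s.\ convergence to $\inf_n\E_{\pi_k,\pi_{k'}}[D_n]/n$.

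Independence from the initial law is handled as in Proposition~\ref{prop:convergence}. Couple the chain started from an arbitrary law with the stationary one so that the hidden chains coalesce at an a.s.\ finite time $T$, and share the same noise $\xi_t$ afterwards. Then for $n\ge T$ the two observed sequences agree after time $T$. Deleting and re-inserting the first $T$ letters changes $D_n$ by at most $2T\max\delta$, and this is $o(n)$. Finally, the limit depends only on the joint law of the two observed processes, since $D_n$ is a function of $X$ alone.

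\textbf{Concentration, (ii).} Apply Paulin's bounded-difference inequality for hidden Markov observations \cite[Example~2.15, Corollary~2.16]{paulin} to the pair process. The hidden chain is $Z^{(k)}\otimes Z^{(k')}$ with kernel $R_k\otimes R_{k'}$, and the emission is $(\varphi_k,\varphi_{k'})$ with independent noise. Changing one coordinate $(X_t,Y_t)$ changes $D_n$ by at most $M^{\mathrm{mc}}$, since each letter change costs at most $M^{\mathrm{mc}}$ in the optimal alignment; the factor $2$ is absorbed as in Proposition~\ref{prop:concentration}. Paulin's mixing-time constant for the HMM is the mixing time of the hidden chain, giving $C^{\mathrm{mc}}_{kk'}=C_{\mathrm{Pau}}(M^{\mathrm{mc}})^2\tmix(R_k\otimes R_{k'})$. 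Dividing by $n$ then gives (ii).

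\textbf{Diagonal rate, (iii).} This is the main obstacle. Lemma~\ref{lem:OM-score-properties}(i)–(iv) are deterministic and hold for $(c^{\mathrm{mc}}_{\sub},\delta^{\mathrm{mc}})$ unchanged. Property (v) only uses that $X,Y$ are independent stationary copies of the same process, so it also holds. Via identity~\eqref{eq:OM-score-identity}, it then suffices to control $\E L_s(X_{1:n},Y_{1:n})/n$, because $\E\sum_t\delta(X_t)/n$ converges to its stationary value at rate $O(1/n)$ by geometric ergodicity of the hidden chain.

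For $L_s$ I would follow the proof of \cite[Theorem~3.1, Corollary~3.1]{houdre2019}, which is written precisely for hidden Markov models. It partitions the optimal alignment into blocks and uses a counting bound on the number of block decompositions together with concentration for each block score. The concentration step is supplied by (ii) applied to rectangular blocks, the block upper bound by (v), and the counting argument is combinatorial. The delicate point is checking that Houdré–Matzinger's concentration hypothesis, which they state via their own HMM inequality, is met with constants depending only on $R_k$, $M^{\mathrm{mc}}$ and $\max\delta^{\mathrm{mc}}$, uniformly over block lengths $p+q\le 2n$. If needed, I would re-derive it from Paulin's inequality for $(X_{1:p},Y_{1:q})$ with Lipschitz constant $A_s$.

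The initial distribution is removed, as in Proposition~\ref{prop:diagonal-rate}, by the coupling argument above. That argument costs $O(\E T/n)$, and $\E T<\infty$ uniformly by geometric mixing. This yields $C^{\mathrm{mc}}_{\mathrm{rate},k}\sqrt{\log n/n}$.
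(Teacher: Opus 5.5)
Your proposal is correct and has the same architecture as the paper's proof. It uses Kingman's theorem for (i), and Paulin's hidden-Markov bounded-difference inequality with hidden chain $R_k\otimes R_{k'}$ for (ii). For (iii) it feeds Lemma~\ref{lem:OM-score-properties} into the Houdr\'e--Matzinger partition argument and then removes the initial law by coupling. The differences are local, and the main one is in (i).

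\emph{Initial law in (i).} You couple the hidden chains, share the noise sequence, and bound $|D_n-\bar D_n|$ by an $O(T)$ edit cost. The paper instead transfers the stationary almost-sure statement through absolute continuity of the hidden path law with respect to its stationary version, which survives the independent noise and the emission maps. The paper's version needs only irreducibility (full support of $\pi_k$), not aperiodicity, and constructs no coupling. Yours uses aperiodicity, which is assumed in the model. Its advantage is that it is the same device you reuse in (iii), where it gives an $O(\E T/n)$ correction, sharper than the paper's $O(1/\sqrt n)$ from truncating at $K=\lceil\sqrt n\rceil$; both are absorbed by $\sqrt{\log n/n}$.

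\emph{Deterministic limit in (i).} You use ergodicity (indeed mixing) of the product shift. The paper reuses Step~1 of Proposition~\ref{prop:convergence}, which needs only ergodicity of each factor, via shift-invariance and Fubini. Your shortcut is legitimate here because both factors are mixing.

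\emph{Wording slip in (ii).} Changing the pair coordinate $(X_t,Y_t)$ can move $D_n$ by up to $2M^{\mathrm{mc}}$, not $M^{\mathrm{mc}}$. It is this $2M^{\mathrm{mc}}/n$ bounded difference, together with the factor $8$ in Paulin's exponent, that yields exactly $C^{\mathrm{mc}}_{kk'}$. Your remark that ``the factor $2$ is absorbed'' shows you had the right bound in mind.
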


Consider now $N$ independent sequences of length $n$ drawn from the mixture of the $K$ components with weights $w_1,\ldots,w_K$, and set $\Gamma^{\mathrm{mc}}_{kk'}:=\gamma^{\mathrm{mc}}(k,k')$. The separation condition becomes the following.

\begin{assumption}[Multichannel separation condition]
\label{ass:mc-separation}
$\eta^{\mathrm{mc}}:=\min_{k\neq k'}\Gamma^{\mathrm{mc}}_{kk'}-\max_k\Gamma^{\mathrm{mc}}_{kk}>0$.
\end{assumption}

\begin{theorem}[Clustering consistency for multichannel HMMs]
\label{thm:gen-clustering}
Under the above model, Assumption~\ref{ass:metric} applied to $(c^{\mathrm{mc}}_{\sub},\delta^{\mathrm{mc}})$, and Assumptions~\ref{ass:growth} and~\ref{ass:mc-separation}, let $\widehat{\mathcal P}^{\mathrm{SL}}_{N,n}(K)$, $\widehat{\mathcal P}^{\mathrm{PAM}}_{N,n}$ and $\hat K_{N,n}$ denote the analogues of the estimators defined in Section~\ref{sec:clustering}, computed from the multichannel OM distances. Then,
\begin{enumerate}[label=(\roman*),leftmargin=*,topsep=2pt]
\item (Hierarchical clustering) $\PP(\widehat{\mathcal P}^{\mathrm{SL}}_{N,n}(K)=\mathcal P^\star)\to1$, and the same holds for any bracketed linkage;
\item (PAM) $\PP(\widehat{\mathcal P}^{\mathrm{PAM}}_{N,n}=\mathcal P^\star)\to1$;
\item (Number of clusters) $\PP(\hat K_{N,n}=K)\to1$.
\end{enumerate}
\end{theorem}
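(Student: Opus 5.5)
The plan is to reduce Theorem~\ref{thm:gen-clustering} to the three ingredients identified at the start of Section~\ref{sec:extensions}. We then rerun the deterministic arguments of Section~\ref{sec:clustering} verbatim, with $\Gamma$, $\eta$, $M$ and $C^\star$ replaced by their multichannel counterparts. The first step is a bias bound for every pair of components. Define $\Gamma^{\mathrm{mc},(n)}_{kk'}$ as in~\eqref{eq:Gamma-n}. For $k=k'$, Proposition~\ref{prop:gen-concentration}(iii) gives $|\Gamma^{\mathrm{mc},(n)}_{kk}-\Gamma^{\mathrm{mc}}_{kk}|\le B^{\mathrm{mc}}\sqrt{\log n/n}$ with $B^{\mathrm{mc}}:=\max_k C^{\mathrm{mc}}_{\mathrm{rate},k}$. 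For $k\neq k'$, no rate is available. Only the one-sided bound $\Gamma^{\mathrm{mc},(n)}_{kk'}\ge\Gamma^{\mathrm{mc}}_{kk'}-o(1)$ is needed, so that between-cluster distances stay large. I would obtain it by combining the infimum representation in Proposition~\ref{prop:gen-concentration}(i) with a coupling argument. Letting the hidden chains run from arbitrary initial laws, they couple with stationary copies after a geometric time $T$. Changing the first $T$ letters of each sequence changes $\dom$ by at most $2T\max_a\delta^{\mathrm{mc}}(a)$, which is the trivial bound obtained by deleting and reinserting a prefix. Hence $\E[\hat\gamma^{\mathrm{mc}}_n]\ge\inf_m\E_{\pi_k,\pi_{k'}}[\hat\gamma^{\mathrm{mc}}_m]-O(1/n)=\Gamma^{\mathrm{mc}}_{kk'}-O(1/n)$. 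This is presumably the same device that gave the constant $B$ in Theorem~\ref{thm:SL}, so the same form $b_n=B\sqrt{\log n/n}$ persists after enlarging $B$.

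The second step is the uniform concentration of the distance matrix, the analogue of Lemma~\ref{lem:uniform}. Conditionally on the labels, the sequences are independent HMM outputs. Proposition~\ref{prop:gen-concentration}(ii) gives each pairwise tail bound with constant $C^{\mathrm{mc}}_{Z_iZ_j}$. A union bound over the at most $N^2/2$ pairs, with $C^{\mathrm{mc},\star}:=\max_{k,k'}C^{\mathrm{mc}}_{kk'}$, then yields $\PP(\mathcal E_{N,n}(\varepsilon)^c)\le N^2\exp(-2\varepsilon^2 n/C^{\mathrm{mc},\star})$. Under Assumption~\ref{ass:growth} this tends to $0$ for each fixed $\varepsilon$. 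Together with the first step, on $\mathcal E_{N,n}(\varepsilon)$ every within-cluster entry is at most $\max_k\Gamma^{\mathrm{mc}}_{kk}+b_n+\varepsilon$. Every between-cluster entry is at least $\min_{k\ne k'}\Gamma^{\mathrm{mc}}_{kk'}-O(1/n)-\varepsilon$. The gap therefore exceeds $\eta^{\mathrm{mc}}-2(b_n+\varepsilon)>0$ for $n$ large and $\varepsilon<\eta^{\mathrm{mc}}/4$.

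The third step applies the combinatorial parts of the proofs of Theorem~\ref{thm:SL}, Remark~\ref{rmk:other-linkages}, Lemma~\ref{lem:kmedoid-one-per-cluster}, Theorem~\ref{thm:kmedoids} and Theorem~\ref{thm:K-selection}. These arguments use only three things: this separated structure of the distance matrix, the label-frequency event $\mathcal W_N(\delta_0)$ or non-emptiness of classes (bounded via~\eqref{eq:Wbound} and $K(1-w_{\min})^N$), and the bound $\hat\gamma\le M$. The last holds with $M^{\mathrm{mc}}$, since the multichannel cost satisfies Assumption~\ref{ass:metric}. For (i), cutting at $K$ blocks: bracketed linkages merge all within-cluster pairs before any cross pair. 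For (ii), one medoid per cluster gives the true partition. For (iii), $\varepsilon_{N,n}=x_{N,n}^{3/8}$ is chosen so that $\varepsilon_{N,n}\to0$ and $n\varepsilon_{N,n}^2/\log N\to\infty$ under Assumption~\ref{ass:growth}. The three side conditions of Theorem~\ref{thm:K-selection} then hold for $n$ large, and the error bounds vanish.

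The main obstacle is the first step: only the diagonal rate is stated, so the off-diagonal bias must be controlled separately. The argument must also hold uniformly in the initial laws and account for possibly non-stationary hidden chains. If the coupling argument above proves delicate in the HMM setting, I would fall back to a weaker statement. Since each $\Gamma^{\mathrm{mc},(n)}_{kk'}\to\Gamma^{\mathrm{mc}}_{kk'}$ by almost sure convergence plus boundedness (dominated convergence), there is an $n_0$ beyond which all $K^2$ biases are below $\eta^{\mathrm{mc}}/8$. That suffices for the qualitative limits claimed in the theorem.
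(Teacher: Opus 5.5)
Your proposal is correct and follows essentially the same route as the paper. You obtain the analogue of Lemma~\ref{lem:uniform} from Proposition~\ref{prop:gen-concentration}(ii) and a union bound. You control the within-component bias by the diagonal rate, and the between-component bias by coupling the hidden chains plus the infimum representation. You then rerun the deterministic arguments of Section~\ref{sec:clustering} verbatim. Your expected-coupling-time bound even gives $O(1/n)$ off the diagonal, and your bounded-convergence fallback already suffices for the qualitative limits.

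One small correction: the PAM and profile arguments need the event $\mathcal E_{N,n}(\varepsilon)$ itself, not merely the separated within/between structure. They use that two entries sharing the same finite-horizon mean differ by at most $2\varepsilon$. Your list of ``only three things'' therefore understates what is used, but you have this event from your second step, so no gap results.
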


\begin{remark}
Distinct specifications may induce the same law on the observed process \cite{gilbert1959}; then $\gamma^{\mathrm{mc}}(k,k')=\gamma^{\mathrm{mc}}(k,k)$ and the separation condition fails. Indeed, the OM distance operates on observed sequences and cannot distinguish observationally equivalent models. The separation condition is therefore a condition on the laws of the observed process, which is the natural clustering criterion.
\end{remark}

\section{Numerical Illustrations}
\label{sec:numerics}

The OM distance is computed by a dedicated Numba implementation of the dynamic program, checked to return exactly the values of TraMineR's \texttt{seqdist} from version 2.2.12 on the cost schemes considered here. Scripts, notebooks and figures are available at \url{https://github.com/OttaKhalifa/OMConvergence}.

\subsection{Convergence of \texorpdfstring{$\hat\gamma_n$}{gamma-hat n}}
\label{ssec:num-convergence}

We first illustrate Proposition~\ref{prop:convergence} and the bounds of Proposition~\ref{prop:bounds} on a single pair of chains.

\paragraph{Design.} We take $\Sigma=\{0,\ldots,4\}$ and draw the rows of two transition matrices $P,Q$ i.i.d.\ from a Dirichlet distribution with concentration parameter $\alpha=0.25$. We consider two configurations, each with $R=30$ independent replicates: a \emph{within} configuration, where $X$ and $Y$ are independent with the same kernel $P$, so that $\hat\gamma_n$ estimates $\gamma(P,P)$; and a \emph{between} configuration, where $X\sim P$ and $Y\sim Q$, estimating $\gamma(P,Q)$.

To illustrate the independence of the limit from the initial law granted by Proposition~\ref{prop:convergence}, each of the $120$ simulated trajectories starts from the Dirac mass at a state drawn uniformly on $\Sigma$, independently across trajectories. We evaluate $\hat\gamma_n$ at $24$ horizons on a logarithmic scale between $n=10$ and $n=10^4$, along increasing prefixes of the same two trajectories.

In Figure~\ref{fig:gamma-convergence}, we use the classical sequence-analysis default substitution cost scheme, $c_{\sub}\equiv2$ off the diagonal. In the appendix, we also test a transition-rate (TRATE) substitution cost scheme, $c_{\sub}=2-\hat T-\hat T^\top$ with $\hat T$ the empirical transition-rate matrix of a sample of two trajectories of length $5000$, one from each kernel, independent of the sequences entering $\hat\gamma_n$, and a random symmetric scheme with $c_{\sub}\sim\mathcal U[1.2,2]$. All three take $\delta\equiv1$ for both insertions and deletions. The three schemes are applied to the same $120$ trajectories.

\begin{figure}[tbp]
\centering
\includegraphics[width=0.7\linewidth]{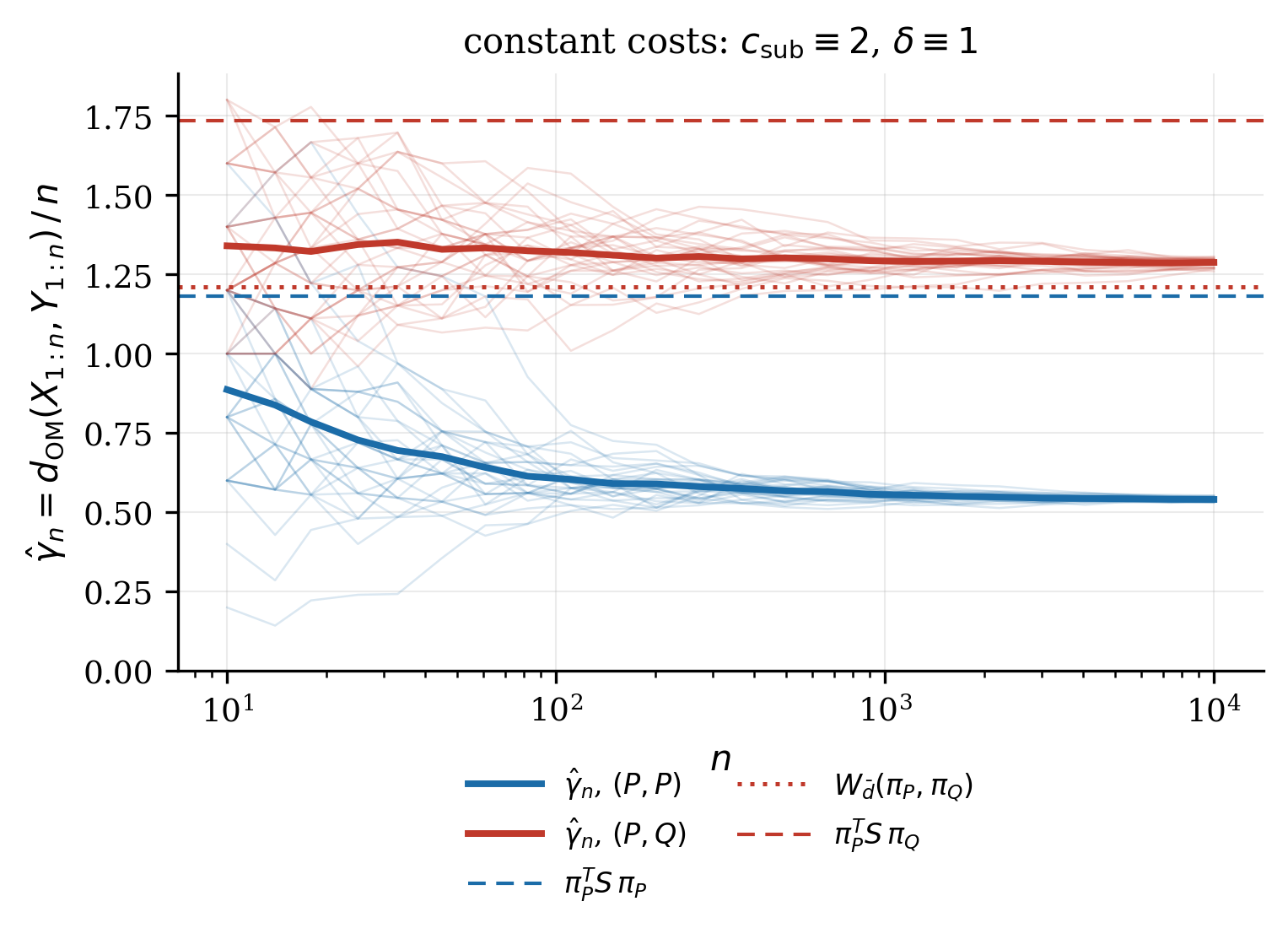}
\caption{Sample paths of $\hat\gamma_n$ for the within (blue) and between (red) configurations, with the bounds of Proposition~\ref{prop:bounds} as horizontal references. Constant cost scheme, $30$ replicates.}
\label{fig:gamma-convergence}
\end{figure}

The paths stabilize well before $n=10^4$ and the spread becomes negligible. The within-pair limit is bounded away from zero. Both bounds from Proposition~\ref{prop:bounds} hold, the Wasserstein one being the tighter.

\subsection{Plausibility of the assumption on \texorpdfstring{$\gamma$}{gamma}}
\label{ssec:num-separation}

We now assess how often Assumption~\ref{ass:separation} holds under a random mixture. The condition bears on the limits $\Gamma_{k\ell}=\gamma(P_k,P_\ell)$ of Proposition~\ref{prop:convergence}, and reads $\eta>0$ for the signed margin, in OM cost units,
\[
\eta=\min_{k\neq\ell}\Gamma_{k\ell}-\max_{k}\Gamma_{kk}.
\]

\paragraph{Design.} We draw $K$ transition matrices on $\Sigma=\{0,\ldots,4\}$ with rows i.i.d.\ Dirichlet($\alpha$) and uniform weights $1/K$, small $\alpha$ giving easily distinguishable chains, large $\alpha$ near-uniform ones. Each $\Gamma^{(n)}_{k\ell}$ is estimated on $60$ independent trajectory pairs, each trajectory started at a state drawn uniformly on $\Sigma$. Within each repetition, we construct simultaneous $95\%$ Student confidence intervals for the $K(K+1)/2$ distinct entries of $\Gamma^{(n)}$, using a Bonferroni correction over these entries, and deduce confidence intervals for $\eta_n$: a repetition is declared separated when the resulting interval lies above zero, nonseparated when it lies below, and undecided otherwise. We take $\alpha\in\{0.2,0.3,0.4,0.5,1,5,10\}$, $K\in\{2,\ldots,10\}$, $n=1000$ and $R=30$ repetitions per $(\alpha, K)$ cell, with new chains drawn at every repetition. Monte Carlo proportions across these $R=30$ repetitions are quantified using $95\%$ Wilson score confidence intervals. The cost scheme is $c_{\sub}\equiv2$, $\delta\equiv1$, fixed in advance.

Assumption~\ref{ass:separation} is required in Theorems~\ref{thm:SL} and~\ref{thm:kmedoids}. The heatmap in Figure~\ref{fig:assumption-separation} delimits the region over which the consistency assumptions of both hierarchical agglomerative clustering and PAM are plausible.

\begin{figure}[tbp]
\centering
\includegraphics[width=0.7\linewidth]{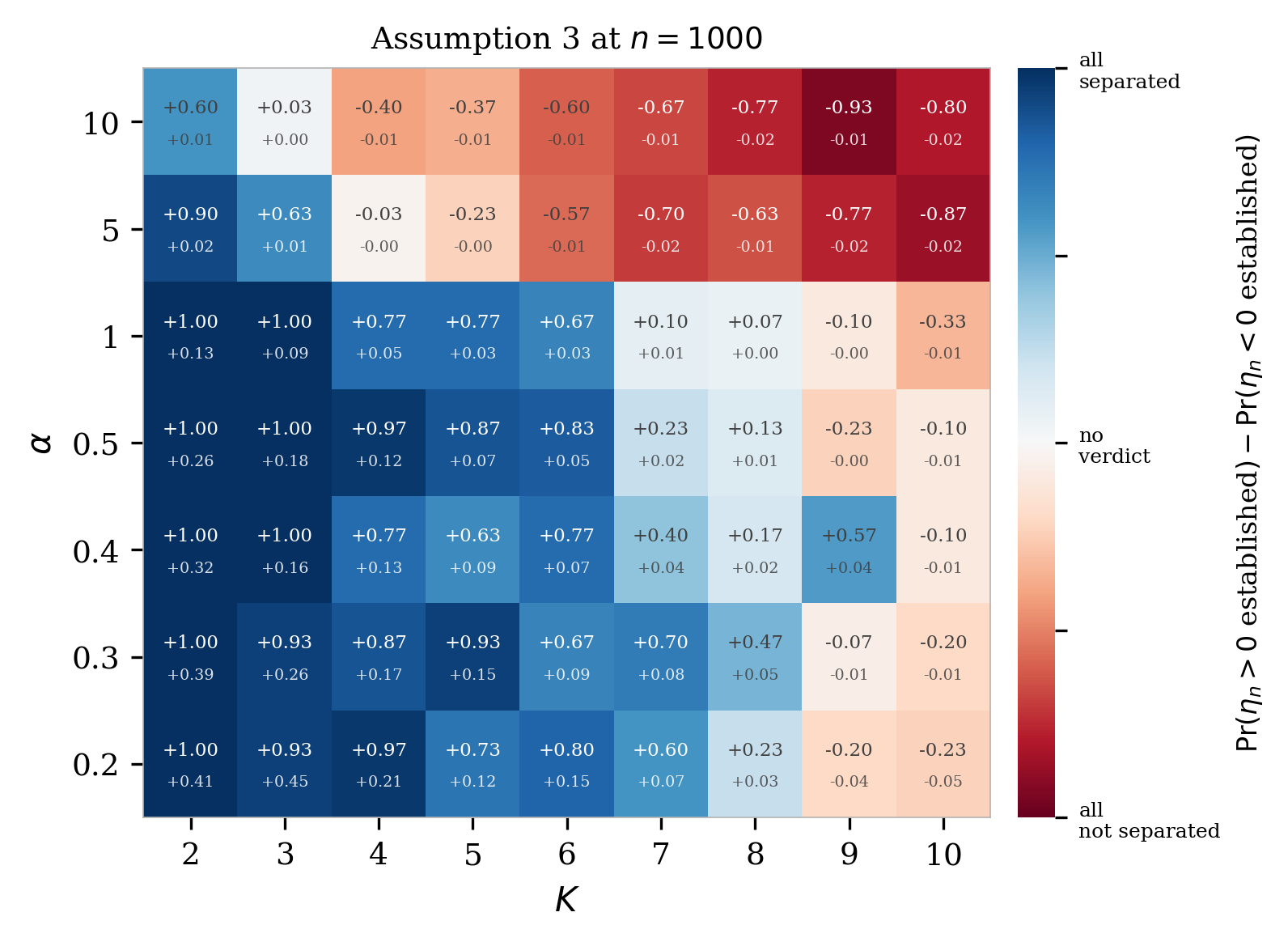}
\caption{Heatmap of the verdict balance $\PP(\text{separated})-\PP(\text{nonseparated})$ at level $0.95$, as a function of $\alpha$ (rows) and $K$ (columns), with signed median $\hat\eta_n$ (below) in each cell.}
\label{fig:assumption-separation}
\end{figure}

Finite-horizon separation is established over a substantial part of the grid: with probability at least $0.8$ for $K\le4$ throughout $\alpha\le1$, and in every repetition at $K=2$. It falls to a third or less at $K=10$, and all but vanishes beyond $K=5$ once $\alpha\ge5$. The decay in $K$ is structural since $\gin$ is a maximum over $K$ terms and $\gout$ a minimum over $K(K-1)/2$, so the two are driven towards each other as $K$ grows.

\subsection{Recovery of the true partition}
\label{ssec:num-recovery}

We now measure how hierarchical clustering with average linkage (more frequently used in practice than single linkage for stability reasons) and $K$-medoids behave at finite sample size, first along a path of increasing $n$ with fixed $N$, then over the same $(\alpha,K)$ grid as Section~\ref{ssec:num-separation}.

\paragraph{Design.} Mixtures are drawn as in Section~\ref{ssec:num-separation} with $\alpha=1$ and $K=4$. The cost scheme is $c_{\sub}\equiv2$, $\delta\equiv1$. The distance matrix $(\hat\gamma_n(i,j))_{i,j\le N}$ is computed and clustered by average linkage, using the SciPy version 1.15.3 implementation, and $K$-medoids with the one-swap PAM algorithm, with known number of clusters $K=4$. We report the Adjusted Rand Index against the true partition $\mathcal{P}^\star$. We draw $N = 800$ sequences and take lengths $n$ ranging from $50$ to $500$, along increasing prefixes. Figure~\ref{fig:ARIconvergence} reports the resulting ARI trajectories.

\begin{figure}[tbp]
\centering
\begin{minipage}[t]{0.49\linewidth}
  \centering
  \includegraphics[width=\linewidth]{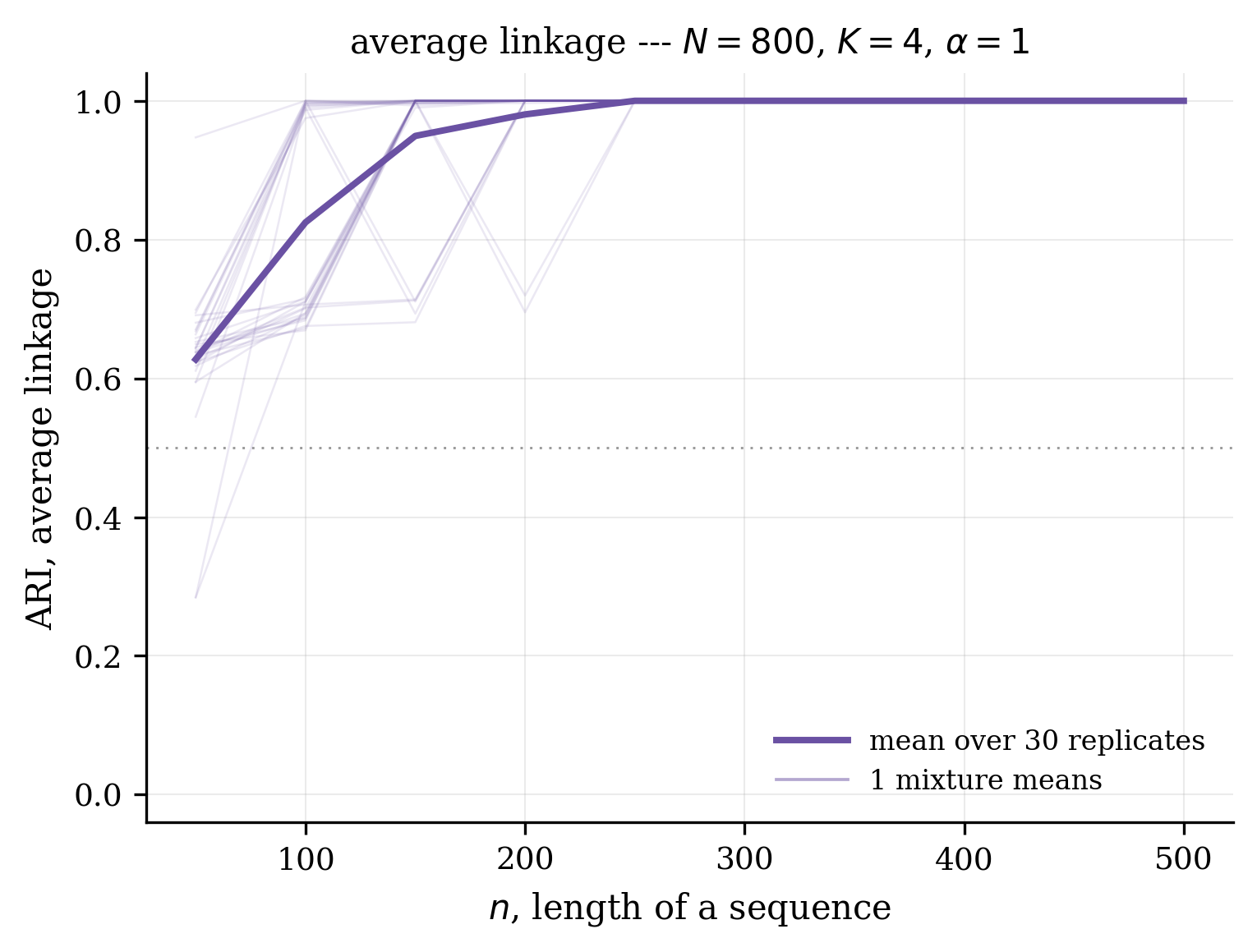}

  (a) Average linkage.
\end{minipage}
\hfill
\begin{minipage}[t]{0.49\linewidth}
  \centering
  \includegraphics[width=\linewidth]{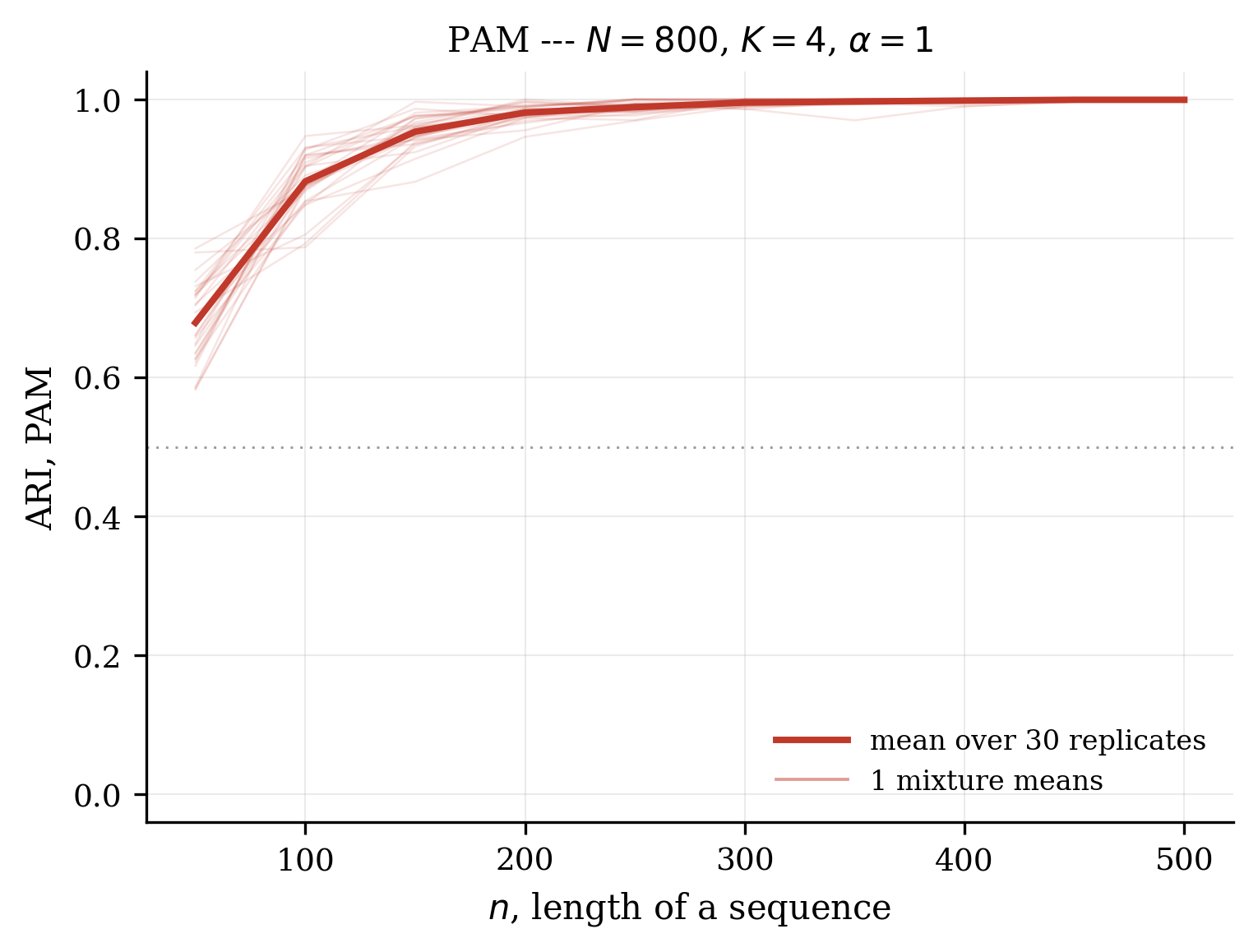}

  (b) K-medoids (PAM).
\end{minipage}
\caption{$\mathrm{ARI}$ trajectories in function of $n$ for fixed $N=800,$ $K=4$ and $\alpha=1$, with $R=50$ replications. }
\label{fig:ARIconvergence}
\end{figure}

For average linkage, the mean ARI rises from $0.6$ to $1$ over the range of $n$ covered by the path, reaching $1$ at $n=300$. Some ARI trajectories are not monotone, exact recovery being lost and regained as $n$ grows: this is due to the well-known instability of such agglomerative approaches. The same holds for single-linkage, whose ARI trajectory is provided in the Appendix and achieves full recovery slower in $n$. $K$-medoids is slightly more accurate at small $n$ and more stable with a mean ARI ranging from $0.7$ to $1$ and reaching $1$ at $n=300$, with fewer non-monotone trajectories and smaller ARI drops.

\paragraph{The $(\alpha,K)$ grid.} At $N=800$ and $n=1000$ we run the same procedure over $\alpha \in \{0.2,0.3,0.4,0.5,1,5,10\}$ and $K\in\{2,\ldots,10\}$, with $R=30$ repetitions per cell and new kernels drawn at every repetition. The grid matches that of Section~\ref{ssec:num-separation} cell for cell and is reported in Figure~\ref{fig:recovery}.

\begin{figure}[tbp]
\centering
\includegraphics[width=1\linewidth]{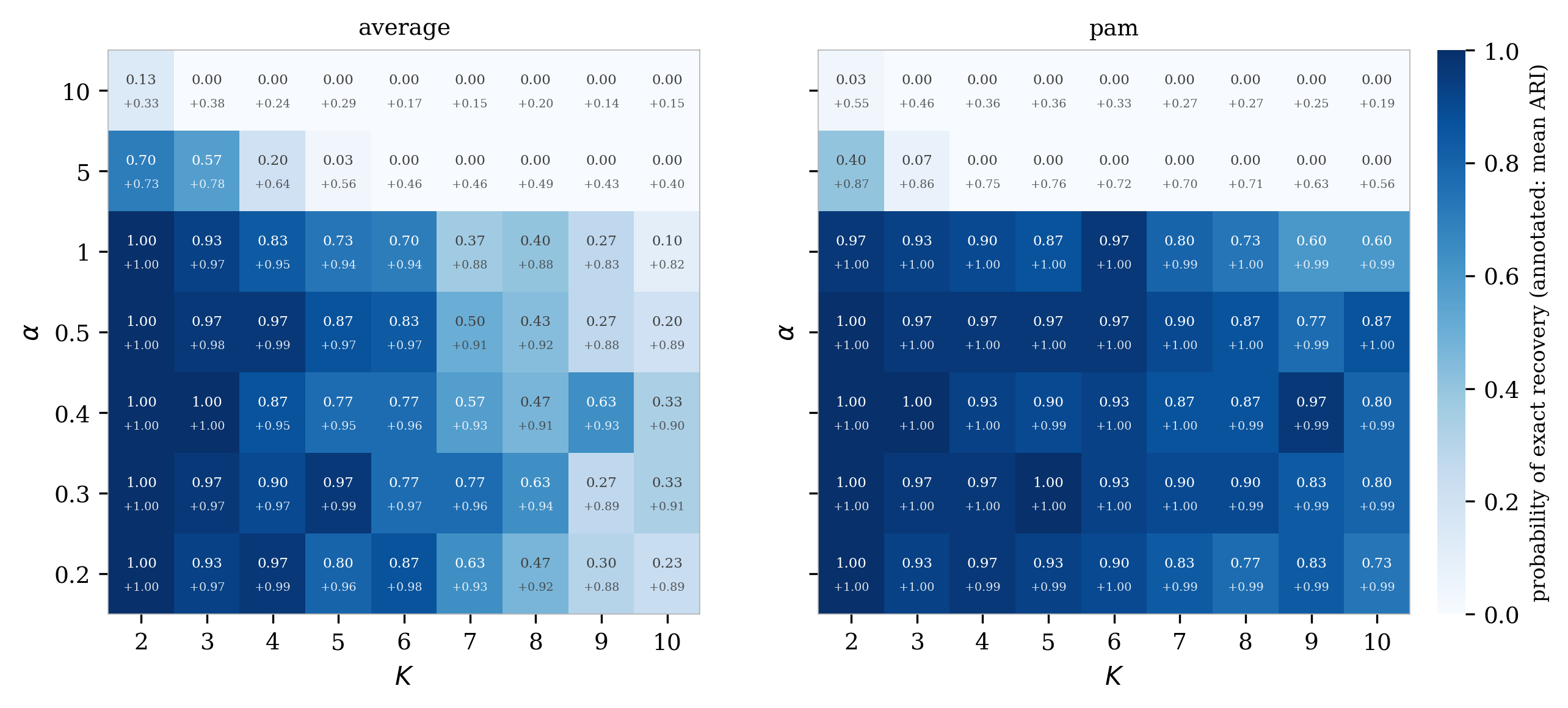}
\caption{Proportion of exact recoveries (above) and mean \textit{ARI}s (below), as functions of $\alpha$ (rows) and $K$ (columns), for Average Linkage and PAM. } 
\label{fig:recovery}
\end{figure}

The recovery region follows the separation region of Figure~\ref{fig:assumption-separation}: for $\alpha\le1$, where separation is established most of the time, the mean ARI never falls below $0.99$ for PAM and $0.82$ for average linkage. On the two top rows, where separation does not hold, both fall away, and at $\alpha=10$ neither exceeds $0.55$. PAM leads on the mean ARI in every cell but two and its advantage is widest on exact recovery inside the separated region ($0.87$ against $0.20$ at $\alpha=0.5$, $K=10$). The order reverses at $\alpha\ge5$: there average linkage recovers exactly more often than PAM ($0.57$ against $0.07$ at $\alpha=5$, $K=3$), on cells where neither reaches an ARI of $0.9$. Single linkage, in Figure~\ref{fig:ARIgridsingle} (appendix), keeps the same shape but performs globally worse: its mean ARI is already $0.47$ at $\alpha=1$, $K=10$, and near zero throughout $\alpha\ge5$.

\subsection{Estimating the number of clusters}
\label{ssec:num-K}

Everything above assumes that the number of clusters $K$ is known. We now apply the profile-graph estimator defined in equation~\eqref{eq:Khat-profile} to the same distance matrices, on the same repetitions and over the same grid. We read it against the average silhouette width, a standard criterion in applied sequence analysis \cite{WeightedCluster}. Figure~\ref{fig:krecovery} reports $\hat{\PP}(\hat K_{N,n}=K)$ and the median $\hat K_{N,n}$ for each $(\alpha,K)$ cell.

\begin{figure}[tbp]
\centering
\includegraphics[width=1\linewidth]{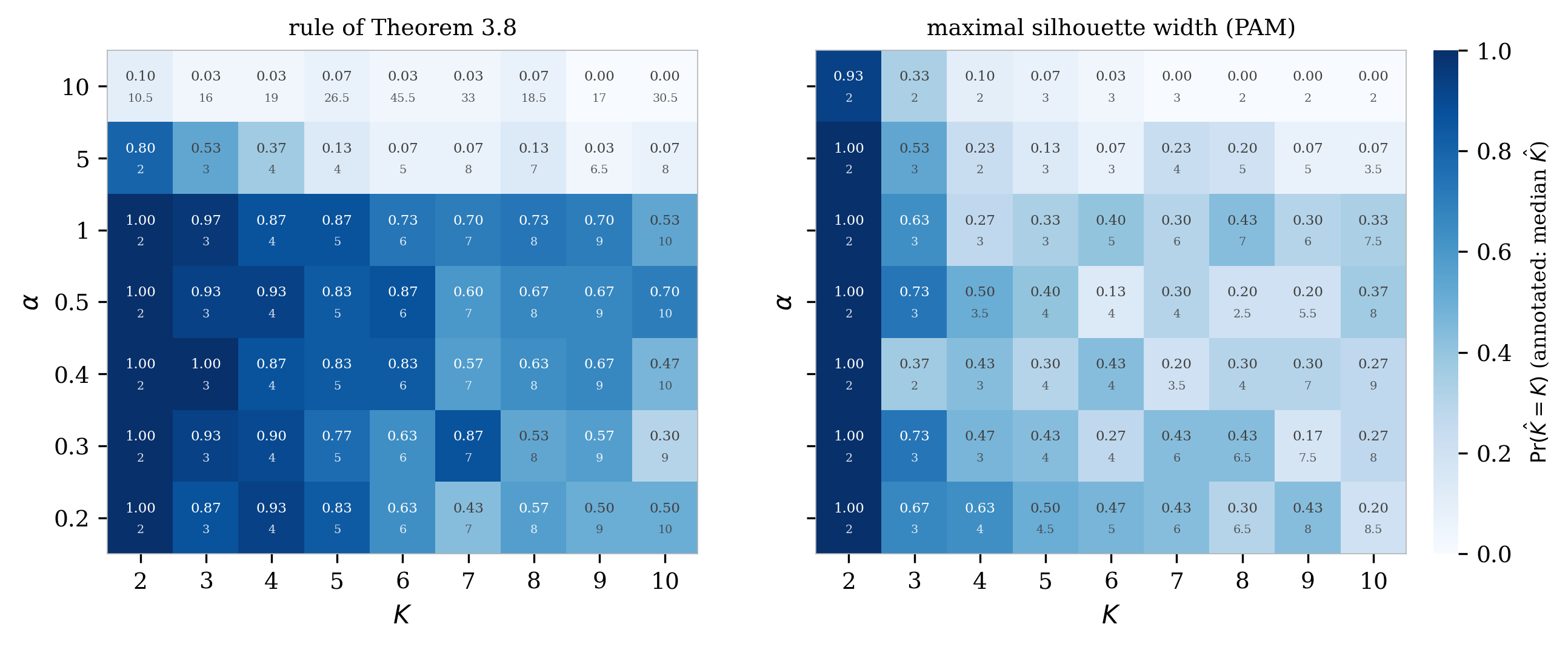}
\caption{Proportion of repetitions with $\hat{K} = K$ using the rule of Theorem~\ref{thm:K-selection} and Average Silhouette Width as functions of $\alpha$ and $K$.}
\label{fig:krecovery}
\end{figure}

The rule finds the correct number of clusters in $75\%$ of the repetitions at $\alpha\le1$, against $45\%$ for the silhouette, and the median $\hat K_{N,n}$ says why: over that whole block the profile rule is centered on the truth, its median equal to $K$ in every cell but one. Though, the silhouette is biased towards small number of clusters, with its median falling below $K$ from $K=4$ on. At $\alpha=10$ the profile rule over-splits, its median rising to $19$ at $K=4$ and to $45.5$ at $K=6$, where the silhouette collapses onto $\hat K_{N,n}=2$ whatever $K$ is.

\subsection{Multichannel hidden Markov models}
\label{ssec:num-hmm}

We repeat the four experiments on the multichannel hidden Markov model of Section~\ref{sec:extensions}. The protocol, the knobs and the reported quantities are those of Sections~\ref{ssec:num-convergence}--\ref{ssec:num-K}. All figures are in Appendix~\ref{app:figures}.

\paragraph{Design.} Each component is a homogeneous HMM with four hidden states emitting five channels of five letters. A single concentration $\alpha$ ties its initial law, hidden transition matrix, and emission matrices in Dirichlet laws. We take $\alpha\in\{1,2,3,5,10,30\}$, $K\in\{2,\ldots,10\}$, $N=800$, $n=1000$, $R=30$ repetitions per cell and $60$ trajectory pairs per entry of $\Gamma^{(n)}$. The ARI paths run at $\alpha=3$ and $K=4$, the convergence of $\hat\gamma_n$ at $\alpha=3$ on the same three cost schemes.

Over the $1620$ repetitions of the grid, no interval places $\eta_n$ below zero, and the median $\hat\eta_n$ is positive in every cell but one: here $\eta$ decreases towards zero from above rather than crossing it, so from $K = 4$, the verdict at $\alpha=30$ is undecided, but never frankly rejected. 

Average linkage now leads. It recovers exactly in $62.2\%$ of the repetitions against $45.5\%$ for PAM, and holds the higher mean ARI as well, $0.87$ against $0.76$. The silhouette leads too: it finds $K$ in $61.9\%$ of the repetitions against $55.2\%$ for the rule of Theorem~\ref{thm:K-selection}, which over-estimates $K$.

\section{Proofs}
\label{sec:proofs}

This section gathers all proofs of the results stated above. We follow the order of the main text.

\subsection{Proofs for Section~\ref{sec:OM}}

\begin{proof}[Proof of Proposition~\ref{prop:convergence}]
\emph{Step 1: the stationary case.} We first prove the convergence for two independent stationary ergodic processes $X,Y$ on $\Sigma$, which covers the stationary case $(\mu,\mu')=(\pi,\pi')$. Work on the space $\Omega:=\Sigma^{\N}\times\Sigma^{\N}$, endowed with $\mu_X\otimes\nu_Y$, where $\mu_X:=\mathcal L(X)$ and $\nu_Y:=\mathcal L(Y)$. Let $S(x_1,x_2,\ldots):=(x_2,x_3,\ldots)$ be the shift operator on $\Sigma^{\N}$ and set $T:=S\times S$. Stationarity implies that $T$ preserves $\mu_X\otimes\nu_Y$. 

Concatenating an optimal alignment of the first $n$ symbols with an optimal alignment of the following $m$ symbols yields an alignment of the first $n+m$ symbols. Its cost is $D_n+D_m\circ T^n$, hence
\begin{equation}
\label{eq:subadd}
D_{n+m} \;\le\; D_n + D_m\circ T^n .
\end{equation}
Moreover $0\le D_n\le Mn$. Kingman's subadditive ergodic theorem~\cite{kingman} yields the almost sure and $L^1$ convergence of $\hat\gamma_n$ to a limit $g$ with $\E[g]=\inf_{n\ge1}\E[D_n]/n$.

It remains to see that $g$ is deterministic. Set $\Delta:=\max_{a\in\Sigma}\delta(a)$ and fix $k\ge1$. Deleting $y_1,\ldots,y_k$ and then inserting $y_{n+1},\ldots,y_{n+k}$ transforms $y_{1:n}$ into $y_{k+1:k+n}$, so
\[
\dom(y_{1:n},y_{k+1:k+n})\le 2k\Delta.
\]
The same construction applied to the first argument gives
\[
\dom(x_{1:n},x_{k+1:k+n})\le 2k\Delta.
\]
By the triangle inequality for $\dom$, for all $x,y$ and all $n\ge k$,
\[
  \bigl|D_n(x,y)-D_n(x,S^ky)\bigr| \le 2k\Delta,
  \qquad
  \bigl|D_n(x,y)-D_n(S^kx,y)\bigr| \le 2k\Delta .
\] 

Define $\displaystyle{g^\star(x,y) := \limsup_{n\to\infty}\frac{D_n(x,y)}{n}}$. By Kingman's theorem, $g^\star=g$ almost surely. Dividing the preceding inequalities by $n$ and taking the limsup yields
\[
g^\star(x,S^ky)=g^\star(x,y),
\qquad
g^\star(S^kx,y)=g^\star(x,y).
\]

By Fubini and ergodicity of $\mu_X$, for $\nu_Y$-almost every $y$ the map $x\mapsto g^\star(x,y)$ is $\mu_X$-almost surely constant, equal to
\[
h(y):=\int g^\star(x,y)\,\mu_X(dx).
\]
The invariance under $\mathrm{id}\times S$ then gives $h\circ S=h$ $\nu_Y$-almost everywhere, so $h$ is constant by ergodicity of $\nu_Y$. Since $g^\star=g$ almost surely, $g$ is deterministic.

Applied to two independent stationary chains with kernels $P$ and $Q$, which are ergodic under the shift by irreducibility, this gives $\hat\gamma_n\to\gamma(P,Q)=\inf_{n\ge1}\E_{\pi,\pi'}[D_n]/n$, $\PP_{\pi,\pi'}$-almost surely.

\emph{Step 2: change of initial law.} We show that the convergence obtained in Step~1 for $(\mu,\mu')=(\pi,\pi')$ transfers to arbitrary initial laws. Since $\Sigma$ is finite and $P$ is irreducible, $\pi$ has full support, so $\rho:\Sigma^\N\to\R^+$, $\rho(x):=\mu(x_1)/\pi(x_1)$, is well defined and bounded.

For $n\ge1$ and $a_{1:n}\in\Sigma^n$, let $C(a_{1:n}):=\{x\in\Sigma^\N: x_{1:n}=a_{1:n}\}$ denote the corresponding cylinder. By the Markov property and time-homogeneity of the kernel,
\[
  \PP_\mu\bigl(C(a_{1:n})\bigr) = \mu(a_1)\prod_{t=1}^{n-1}P(a_t,a_{t+1}),
  \qquad
  \PP_\pi\bigl(C(a_{1:n})\bigr) = \pi(a_1)\prod_{t=1}^{n-1}P(a_t,a_{t+1}).
\]
On $C(a_{1:n})$ the function $\rho$ is constant, equal to $\mu(a_1)/\pi(a_1)$, so
\begin{equation}
\label{eq:cyl}
  \E_\pi\bigl[\rho\,\ind_{C(a_{1:n})}\bigr]
  = \frac{\mu(a_1)}{\pi(a_1)}\,\PP_\pi\bigl(C(a_{1:n})\bigr)
  = \mu(a_1)\prod_{t=1}^{n-1}P(a_t,a_{t+1})
  = \PP_\mu\bigl(C(a_{1:n})\bigr).
\end{equation}

The weight $\rho$ is the same for all lengths $n$: it is $\sigma(X_1)$-measurable, hence independent of the horizon.

Let now
\[
  \mathcal D := \bigl\{ A\in\mathcal F : \PP_\mu(A)=\E_\pi[\rho\,\ind_A]\bigr\},
\]
where $\mathcal F$ is the product $\sigma$-algebra on $\Sigma^\N$, with $\Sigma$ equipped with the discrete $\sigma$-algebra. The cylinders, together with $\emptyset$ and $\Sigma^\N$, form a $\pi$-system generating $\mathcal F$. By \eqref{eq:cyl} this $\pi$-system is contained in $\mathcal D$. Moreover, $\mathcal D$ is a $\lambda$-system. Dynkin's lemma therefore gives $\mathcal D=\mathcal F$, that is, $\PP_\mu\ll\PP_\pi$ on the whole of $\mathcal F$, with $d\PP_\mu/d\PP_\pi=\rho$.

Since $X$ and $Y$ are independent under both $\PP_{\mu,\mu'}$ and $\PP_{\pi,\pi'}$, the same holds for the product laws: $\PP_{\mu,\mu'}\ll\PP_{\pi,\pi'}$ on the product $\sigma$-algebra of $\Sigma^\N\times\Sigma^\N$, with density $\rho\otimes\rho'$.

Each $D_n$ is a measurable function of $(X_{1:n},Y_{1:n})$, so $A:=\{\hat\gamma_n\to\gamma(P,Q)\}$ is measurable for that $\sigma$-algebra. By Step~1, $\PP_{\pi,\pi'}(A^c)=0$, so $(\rho\otimes\rho')\ind_{A^c}$ vanishes $\PP_{\pi,\pi'}$-almost surely and
\[
  \PP_{\mu,\mu'}(A^c)
  = \E_{\pi,\pi'}\bigl[(\rho\otimes\rho')\,\ind_{A^c}\bigr]
  = 0 .
\]
Hence $\hat\gamma_n\to\gamma(P,Q)$ $\PP_{\mu,\mu'}$-almost surely, with the same limit as in the stationary case.
\end{proof}

\begin{proof}[Proof of Proposition~\ref{prop:concentration}]
The pair $Z=(X,Y)$ is an irreducible aperiodic Markov chain on $\Sigma^2$. Define $f:(\Sigma^2)^n\to\R$ by
\[
f\bigl((x_1,y_1),\ldots,(x_n,y_n)\bigr) := \frac{1}{n}\dom(x_{1:n},y_{1:n}).
\]

We show that $f$ has bounded differences $c_i=2M/n$ in each coordinate, i.e., for every $z,z'\in(\Sigma^2)^n$,
\begin{equation}
\label{eq:bd}
|f(z)-f(z')| \;\le\; \sum_{i=1}^n c_i\,\ind_{z_i\neq z'_i},\qquad c_i=\frac{2M}{n}.
\end{equation}
Write $z=((x_t,y_t))_{t\le n}$ and $z'=((x'_t,y'_t))_{t\le n}$. The triangle inequality for $\dom$ gives
\[
\bigl|\dom(x,y)-\dom(x',y')\bigr| \;\le\; \dom(x,x') + \dom(y,y').
\]
Substituting each mismatched pair yields $\dom(x,x')\le\sum_{t:x_t\neq x'_t}c_{\sub}(x_t,x'_t)\le M\,\dH(x,x')$, where $\dH$ is the Hamming distance, and similarly for $\dom(y,y')$. Hence
\[
n\,|f(z)-f(z')| \;\le\; M\bigl(\dH(x,x')+\dH(y,y')\bigr) \;\le\; 2M\,\dH(z,z'),
\]
which is~\eqref{eq:bd}.

Paulin's concentration inequality obtained by Marton coupling~\cite{paulin} provides a universal constant $C_{\mathrm{Pau}}>0$ such that, for every function $g$ on $(\Sigma^2)^n$ with bounded differences $(c_i)_{i\le n}$ in the sense of~\eqref{eq:bd} and every $s\ge 0$,
\begin{equation}
\label{eq:paulin-generic}
\PP\bigl(|g(Z_{1:n})-\E[g(Z_{1:n})]|\ge s\bigr) \;\le\; 2\exp\!\Bigl(-\frac{8s^2}{C_{\mathrm{Pau}}\,\tmix\sum_{i=1}^n c_i^2}\Bigr).
\end{equation}
Applying this to $f$ with $\sum_{i=1}^n c_i^2 = n\cdot(2M/n)^2 = 4M^2/n$,
\[
\PP\bigl(|\hat\gamma_n-\E\hat\gamma_n|\ge s\bigr) \;\le\; 2\exp\!\Bigl(-\frac{2s^2 n}{C_{\mathrm{Pau}}M^2\tmix}\Bigr),
\]
which is the desired result.
\end{proof}

\begin{proof}[Proof of Lemma~\ref{lem:OM-score-properties}]
By Assumption~\ref{ass:metric}(iv),
\[
s(a,b)=\delta(a)+\delta(b)-c_{\sub}(a,b)\ge0,
\]
and $s$ is symmetric by Assumption~\ref{ass:metric}(ii).

For an alignment of $x_{1:p}$ and $y_{1:q}$, let
\[
(i_1,j_1),\ldots,(i_r,j_r),
\qquad
i_1<\cdots<i_r,\quad
j_1<\cdots<j_r,
\]
be its letter--letter matched pairs. All remaining letters are matched to gaps. Hence its cost is
\begin{align*}
&
\sum_{i\notin\{i_1,\ldots,i_r\}}\delta(x_i)
+
\sum_{j\notin\{j_1,\ldots,j_r\}}\delta(y_j)
+
\sum_{h=1}^r c_{\sub}(x_{i_h},y_{j_h})
\\
&\qquad =
\sum_{i=1}^p\delta(x_i)
+
\sum_{j=1}^q\delta(y_j)
-
\sum_{h=1}^r s(x_{i_h},y_{j_h}).
\end{align*}

Conversely, any increasing set of matched pairs defines an admissible alignment. Hence minimizing the alignment cost is equivalent to maximizing the total score, which gives~\eqref{eq:OM-score-identity}.

For superadditivity, take optimal matchings for $(x_{1:p},y_{1:q})$ and $(x_{p+1:p+p'},y_{q+1:q+q'})$. Shifting the indices of the second
matching by $(p,q)$ and concatenating the two sets of matched pairs gives an admissible increasing matching for
$(x_{1:p+p'},y_{1:q+q'})$. Therefore
\[
L_s(x_{1:p+p'},y_{1:q+q'})
\ge
L_s(x_{1:p},y_{1:q})
+
L_s(x_{p+1:p+p'},y_{q+1:q+q'}).
\]
Likewise, any matching remains admissible when either word is extended, so $L_s$ cannot decrease under extension.

Since the empty matching is admissible and $s\ge0$, we have $L_s\ge0$.
An increasing matching contains at most $\min(p,q)$ pairs, each contributing at most $F_s$, hence
\[
L_s(x_{1:p},y_{1:q})
\le
F_s\min(p,q).
\]

Suppose now that $x,x'\in\Sigma^p$ differ only at position $i$. Take an optimal matching for $(x,y)$ and use the same matched index pairs for $(x',y)$. If $i$ is unmatched, the score is unchanged; if it is matched to some $j$, only one summand changes, by at most $A_s$. Thus
\[
L_s(x',y)\ge L_s(x,y)-A_s.
\]
Interchanging $x$ and $x'$ gives
\[
|L_s(x,y)-L_s(x',y)|\le A_s.
\]
By symmetry of $s$, the same argument applies to a one-letter change in
$y$.

It remains to prove~\eqref{eq:OM-rectangular-block}. Assume first that
$p+q=2n$. Superadditivity gives
\[
L_s(X_{1:2n},Y_{1:2n})
\ge
L_s(X_{1:p},Y_{1:q})
+
L_s(X_{p+1:2n},Y_{q+1:2n}).
\]
The two blocks in the second term have respective lengths
$2n-p=q$ and $2n-q=p$. By stationarity and independence,
\[
\E L_s(X_{p+1:2n},Y_{q+1:2n})
=
\E L_s(X_{1:q},Y_{1:p}).
\]
Since $X$ and $Y$ are independent copies of the same process and $s$ is symmetric,
\[
\E L_s(X_{1:q},Y_{1:p})
=
\E L_s(X_{1:p},Y_{1:q}).
\]
Taking expectations in the superadditivity inequality yields
\[
\E L_s(X_{1:p},Y_{1:q})
\le
\frac12\,\E L_s(X_{1:2n},Y_{1:2n}).
\]

Finally, if $p+q<2n$, choose $p'\ge p$ and $q'\ge q$ such that
$p'+q'=2n$. By monotonicity under extension,
\[
L_s(X_{1:p},Y_{1:q})
\le
L_s(X_{1:p'},Y_{1:q'}),
\]
and the preceding case applied to $(p',q')$ proves~\eqref{eq:OM-rectangular-block}.
\end{proof}

\begin{proof}[Proof of Proposition~\ref{prop:diagonal-rate}]
Let $\pi$ be the stationary distribution of $P$. We first isolate the rate argument from the proof of \cite[Theorem~3.1]{houdre2019}, specialized to the independent-identical case of \cite[Corollary~3.1]{houdre2019}.

\medskip
\noindent\textbf{Technical lemma.}
Let $X$ and $Y$ be independent stationary copies of the Markov chain with
transition matrix $P$, and let
\[
L_n:=L_s(X_{1:n},Y_{1:n}).
\]
Then, there exists a constant $C_s=C_s(P,s)>0$ such that, for every
$n\ge2$,
\begin{equation}
\label{eq:score-rate-stationary}
0
\le
\lambda_s(P)-\frac{\E[L_n]}{n}
\le
C_s\sqrt{\frac{\log n}{n}},
\end{equation}
where
\[
\lambda_s(P)
:=
\lim_{m\to\infty}\frac{\E[L_m]}{m}.
\]

\smallskip
\noindent\emph{Proof of the technical lemma.}
By Lemma~\ref{lem:OM-score-properties}(ii) and stationarity,
$(\E[L_n])_{n\ge1}$ is superadditive, so Fekete's lemma gives
\[
\lambda_s(P)=\sup_{n\ge1}\frac{\E[L_n]}{n},
\]
and hence the lower bound. For the upper bound, apply the partition argument in the proof of \cite[Theorem~3.1]{houdre2019}, specialized to the independent-identical case as in \cite[Corollary~3.1]{houdre2019}. Lemma~\ref{lem:OM-score-properties}(v), (iv), and (iii) provide, respectively, the required block comparison, bounded-difference estimate, and linear bound for $L_s$. The partition construction and its counting are unchanged. Hence, their argument yields
\[
\lambda_s(P)-\frac{\E[L_n]}{n}
\le C_s\sqrt{\frac{\log n}{n}},
\]
with only a change of constants.
\hfill$\diamond$

\medskip
We now return to the OM distance. Taking expectations in \eqref{eq:OM-score-identity} under stationarity gives
\[
\frac1n\E_{\pi,\pi}[D_n]
=
2\E_\pi[\delta(X_1)]
-
\frac1n\E_{\pi,\pi}
\bigl[L_s(X_{1:n},Y_{1:n})\bigr].
\]
By Proposition~\ref{prop:convergence}, the left-hand side converges to $\gamma(P,P)$. Hence
\[
\gamma(P,P)
=
2\E_\pi[\delta(X_1)]-\lambda_s(P).
\]
Subtracting this identity from the preceding finite-$n$ identity and using \eqref{eq:score-rate-stationary} yields
\begin{equation}
\label{eq:stationary-diagonal-rate}
0
\le
\frac1n\E_{\pi,\pi}[D_n]-\gamma(P,P)
\le
C_s\sqrt{\frac{\log n}{n}}.
\end{equation}

It remains to generalize this rate for arbitrary initial distributions. We use the coupling argument of \cite[Remark~2.1(ii)]{houdre2019}, also invoked in the proof of \cite[Corollary~3.1]{houdre2019}. Coupling each chain started from $\mu$ with a stationary copy yields constants $c>0$ and $\alpha\in(0,1)$, depending only on $P$, such that for every $K\ge0$,
\[
\sup_\mu \PP(\tau>K)\le c\alpha^K,
\]
where $\tau$ is the maximum of the two coupling times.

On $\{\tau\le K\}$, the two pairs of trajectories may differ only among their first $K$ coordinates. Aligning each trajectory with its stationary counterpart position by position gives a cost of at most $MK$; hence, by the triangle inequality for $\dom$,
\[
|D_n(X,Y)-D_n(\bar X,\bar Y)|\le 2MK.
\]
Since both distances are bounded by $Mn$, the same argument as in \cite[Eq.~(2.5)]{houdre2019} gives
\[
\left|
\frac1n\E_{\mu,\mu}[D_n]
-
\frac1n\E_{\pi,\pi}[D_n]
\right|
\le
cM\alpha^K+\frac{2MK}{n}.
\]
Taking $K=\lceil\sqrt n\rceil$ and combining with \eqref{eq:stationary-diagonal-rate} proves~\eqref{eq:diagonal-rate}, after adjusting the constant. The bound is uniform in $\mu$.
\end{proof}

\begin{proof}[Proof of Proposition~\ref{prop:usc}]
For $n\ge1$, set
\[
g_n(P,Q) := \frac{1}{n}\,\E_{P,Q}\bigl[\dom(X_{1:n},Y_{1:n})\bigr],
\]
the expectation being taken under the law of two independent stationary Markov chains with transitions $P$ and $Q$, so that $\gamma=\inf_{n\ge1}g_n$ by Proposition~\ref{prop:convergence}.

Each $g_n$ is continuous on $\mathcal S^\circ_d\times\mathcal S^\circ_d$. Indeed, on $\mathcal S^\circ_d$ the linear system $\pi_P P=\pi_P$, $\sum_a\pi_P(a)=1$ uniquely determines $\pi_P$, and Cramer's rule expresses $\pi_P$ as a rational function of the entries of $P$ with non-vanishing denominator on $\mathcal S^\circ_d$. Hence $P\mapsto\pi_P$ is continuous, and for each fixed $x_{1:n}\in\Sigma^n$,
\[
\mu_P^{(n)}(x_{1:n}) = \pi_P(x_1)\prod_{t=1}^{n-1}P(x_t,x_{t+1})
\]
is continuous in $P$. Since $\Sigma^n$ is finite,
\[
g_n(P,Q) = \frac{1}{n}\sum_{x,y\in\Sigma^n}\dom(x,y)\,\mu_P^{(n)}(x)\,\mu_Q^{(n)}(y)
\]
is a finite sum of continuous functions of $(P,Q)$, hence continuous.

Thus, $\gamma = \inf_{n\ge1} g_n$ is the pointwise infimum of a family of continuous functions, and any such infimum is upper semi-continuous. Explicitly, for $(P_m,Q_m)\to(P_0,Q_0)$ and any $\varepsilon>0$, choose $n_0$ such that $g_{n_0}(P_0,Q_0)<\gamma(P_0,Q_0)+\varepsilon$. By continuity of $g_{n_0}$, $g_{n_0}(P_m,Q_m)\to g_{n_0}(P_0,Q_0)$, and since $\gamma(P_m,Q_m)\le g_{n_0}(P_m,Q_m)$,
\[
\limsup_{m\to\infty}\gamma(P_m,Q_m)\le g_{n_0}(P_0,Q_0)<\gamma(P_0,Q_0)+\varepsilon.
\]
Letting $\varepsilon\to0$ concludes.
\end{proof}

\begin{proof}[Proof of Proposition~\ref{prop:bounds}]

By Proposition~\ref{prop:convergence}, $\gamma(P,Q)$ does not depend on the initial laws; we may therefore take $X$ and $Y$ stationary throughout.

\emph{Lower bound.} $(\Sigma_\bot,\bar d)$ is a metric space. Let $f:\Sigma_\bot\to\R$ be $1$-Lipschitz with respect to $\bar d$. Fix $n\ge 1$ and an alignment $(\tilde x_{1:L},\tilde y_{1:L})\in\Sigma_\bot^L\times\Sigma_\bot^L$ of $(X_{1:n},Y_{1:n})$. By the $1$-Lipschitz property and the triangle inequality,
\[
\mathrm{cost}(\tilde x,\tilde y) = \sum_{\ell=1}^L \bar d(\tilde x_\ell,\tilde y_\ell) \;\ge\; \sum_{\ell=1}^L |f(\tilde x_\ell)-f(\tilde y_\ell)| \;\ge\; \Bigl|\sum_{\ell=1}^L f(\tilde x_\ell) - \sum_{\ell=1}^L f(\tilde y_\ell)\Bigr|.
\]
The aligned sequence $\tilde x$ contains the $n$ original symbols of $X_{1:n}$ together with $L-n$ copies of $\bot$; the same holds for $\tilde y$. The $f(\bot)$ terms cancel between the two sums, and we obtain
\[
\mathrm{cost}(\tilde x,\tilde y) \;\ge\; \Bigl|\sum_{t=1}^n f(X_t) - \sum_{t=1}^n f(Y_t)\Bigr|.
\]
The right-hand side does not depend on the alignment, so taking the minimum over alignments,
\[
\dom(X_{1:n},Y_{1:n}) \;\ge\; \Bigl|\sum_{t=1}^n f(X_t) - \sum_{t=1}^n f(Y_t)\Bigr|.
\]
Taking expectations, dividing by $n$, and applying Jensen's inequality to bring the expectation inside the absolute value, then using stationarity of $X,Y$:
\[
\frac{\E[D_n]}{n} \;\ge\; \E\Bigl|\tfrac{1}{n}\sum_{t=1}^n f(X_t) - \tfrac{1}{n}\sum_{t=1}^n f(Y_t)\Bigr| \;\ge\; \bigl|\E[f(X_1)]-\E[f(Y_1)]\bigr|.
\]
Letting $n\to\infty$, $\gamma(P,Q)\ge|\E_{\pi_P}[f]-\E_{\pi_Q}[f]|$. Taking the supremum over all $1$-Lipschitz $f:\Sigma_\bot\to\R$ and applying Kantorovich--Rubinstein duality on the finite metric space $(\Sigma_\bot,\bar d)$ yields $\gamma(P,Q)\ge\Wd(\pi_P,\pi_Q)$.

\emph{Upper bound.} Aligning $X_{1:n}$ and $Y_{1:n}$ position by position, with no gaps, gives an alignment of cost $\sum_{t=1}^n c_{\sub}(X_t,Y_t)$. Hence $\dom(X_{1:n},Y_{1:n})\le\sum_{t=1}^n c_{\sub}(X_t,Y_t)$. Dividing by $n$, taking expectations, and using stationarity and the independence of $X$ and $Y$:
\[
\frac{\E[D_n]}{n} \;\le\; \E[c_{\sub}(X_1,Y_1)] \;=\; \sum_{a,b\in\Sigma}\pi_P(a)\pi_Q(b)c_{\sub}(a,b) \;=\; \pi_P^\top S\,\pi_Q.
\]
Letting $n\to\infty$ gives the claim.
\end{proof}

\subsection{Proofs for Section~\ref{sec:clustering}}

\begin{proof}[Proof of Lemma~\ref{lem:uniform}]
For each pair $(i,j)$ with $i<j$, conditionally on $(Z_i,Z_j)=(k,k')$, the sequences $X_i$ and $X_j$ are independent realizations of Markov chains with transition matrices $P_k$ and $P_{k'}$. By Proposition~\ref{prop:concentration},
\[
\PP\left(
\left|
\hat\gamma_n(i,j)-\Gamma^{(n)}_{kk'}
\right|>\varepsilon
\,\middle|\,
Z_i=k,Z_j=k'
\right)
\le
2\exp\left(
-\frac{2\varepsilon^2 n}
{C(M,P_k,P_{k'})}
\right).
\]

Since $C(M,P_k,P_{k'})\le C^\star$, marginalizing over $(Z_i,Z_j)$ gives

\[
\PP\left(
\left| \hat\gamma_n(i,j)-\Gamma^{(n)}_{Z_iZ_j} \right| > \varepsilon
\right) \le 2\exp\left( -\frac{2\varepsilon^2 n}{C^\star}
\right).
\]

A union bound over the $\binom{N}{2}$ pairs therefore yields

\[
\PP\bigl(\mathcal E_{N,n}(\varepsilon)^c\bigr)
\le
\binom{N}{2}
\,2\exp\left(
-\frac{2\varepsilon^2 n}{C^\star}
\right)
\le
N^2\exp\left(
-\frac{2\varepsilon^2 n}{C^\star}
\right).
\]

Finally, under Assumption~\ref{ass:growth}, $2\log N=o(n)$, so the right-hand side converges to zero.
\end{proof}

\begin{proof}[Proof of Theorem~\ref{thm:SL}]
For $k=k'$, Proposition~\ref{prop:diagonal-rate} yields, uniformly over $k$,
\[
\Gamma^{(n)}_{kk} \le \Gamma_{kk} + B_1\sqrt{\frac{\log n}{n}}
\]
for some $B_1 \in \R_+$.

For $k\neq k'$, the finite-horizon coupling argument used in the proof of Proposition~\ref{prop:diagonal-rate} gives constants $c_{kk'} \in \R_+$ and $\alpha_{kk'}\in(0,1)$ such that
\[
\left|\Gamma^{(n)}_{kk'} - \frac1n\E_{\pi_k,\pi_{k'}}[D_n] \right|
\le c_{kk'}\left( \frac1{\sqrt n} + \alpha_{kk'}^{\sqrt n}
\right).
\]

Moreover, Proposition~\ref{prop:convergence} gives
\[
\frac1n\E_{\pi_k,\pi_{k'}}[D_n] \ge \Gamma_{kk'}.
\]

Since $K$ is fixed, there exists $B_2 \in \R_+$ such that, uniformly over $k\neq k'$ and $n\ge2$,
\[
\Gamma^{(n)}_{kk'} \ge \Gamma_{kk'} - B_2\sqrt{\frac{\log n}{n}}.
\]

Set $\displaystyle{B:=\max(B_1,B_2)}$ and $\displaystyle{r_n:=B\sqrt{\frac{\log n}{n}}}.$
On $\mathcal E_{N,n}(\varepsilon)$,
\[
Z_i=Z_j \quad\Longrightarrow\quad \hat\gamma_n(i,j) \le \gin+r_n+\varepsilon,
\]
\[
Z_i\neq Z_j \quad\Longrightarrow\quad \hat\gamma_n(i,j) \ge \gout-r_n-\varepsilon.
\]

By assumption,
\[
\gin+r_n+\varepsilon < \gout-r_n-\varepsilon.
\]
Choose any $t_n\in \bigl( \gin+r_n+\varepsilon,\, \gout-r_n-\varepsilon \bigr)$.

Then, on $\mathcal E_{N,n}(\varepsilon)$, every non-empty $G_k$ induces a complete subgraph of $\mathcal G_{t_n}$, while there is no edge between two distinct classes. Hence, if every $G_k$ is non-empty, the connected components of $\mathcal G_{t_n}$ are exactly $G_1,\ldots,G_K$. Since the single-linkage partition at level $t_n$ is the set of connected components of $\mathcal G_{t_n}$, its cut at $K$ blocks is therefore $\mathcal P^\star$.

Consequently,
\[
\PP\!\left(
\widehat{\mathcal P}^{\mathrm{SL}}_{N,n}(K)
\neq
\mathcal P^\star
\right)
\le
\PP\bigl(\mathcal E_{N,n}(\varepsilon)^c\bigr)
+
\PP\bigl(\exists k:G_k=\emptyset\bigr).
\]
By Lemma~\ref{lem:uniform},
\[
\PP\bigl(\mathcal E_{N,n}(\varepsilon)^c\bigr)
\le
N^2\exp\!\left(
-\frac{2\varepsilon^2n}{C^\star}
\right),
\]
while
\[
\PP\bigl(\exists k:G_k=\emptyset\bigr)
\le
\sum_{k=1}^K(1-w_k)^N
\le
K(1-w_{\min})^N.
\]
This proves the finite-sample bound. Under Assumption~\ref{ass:growth}, the two terms on the right-hand side converge to zero and $r_n\to0$, proving the consistency statement.
\end{proof}

\begin{proof}[Proof of Lemma~\ref{lem:kmedoid-one-per-cluster}]
Let $\mathcal M:=\widehat{\mathcal M}^{\mathrm{PAM}}_{N,n}$ be the set of medoids returned by PAM, with $N\ge\max\{N_0,K\}$, and define $\Phi_i(\mathcal M):=\min_{m\in\mathcal M}\hat\gamma_n(i,m)$. By construction of PAM,
\begin{equation}
\label{eq:pam-local}
\Phi_{N,n}(\mathcal M)\le \Phi_{N,n}\bigl((\mathcal M\setminus\{m\})\cup\{j\}\bigr)
\end{equation}
for every $m\in\mathcal M$ and $j\notin\mathcal M$.

Set
\[
\varepsilon_n:=r-b_n,\qquad u:=\gin+r,\qquad v:=\gout-r.
\]
Since $r<w_{\min}\eta/16$ and $w_{\min}\le1/2$, we have $r<\eta/32$, and therefore
\begin{equation}
\label{eq:pam-gap}
v-u=\eta-2r\ge\frac{\eta}{2}>0.
\end{equation}
On $\mathcal E_{N,n}(\varepsilon_n)$, the bounds established in the proof of Theorem~\ref{thm:SL} give, for $i\neq j$,
\begin{equation}
\label{eq:pam-bounds}
\hat\gamma_n(i,j)\le u\quad\text{if }Z_i=Z_j,\qquad
\hat\gamma_n(i,j)\ge v\quad\text{if }Z_i\neq Z_j.
\end{equation}
Moreover, on $\mathcal W_N(\delta_0)$,
\begin{equation}
\label{eq:pam-weight}
\hat w_\ell\ge w_{\min}-\delta_0=\frac{w_{\min}}{2}\qquad\text{for all }\ell.
\end{equation}

Assume, for contradiction, that  a cluster $G_k$ contains no medoid of $\mathcal M$. Since $\mathcal M$ contains $K$ medoids distributed among the remaining $K-1$ clusters, by the pigeonhole principle there exist $m_1\neq m_2$ in $\mathcal M$ such that $Z_{m_1}=Z_{m_2}=:k'\neq k$. Fix any $j\in G_k$ and set
\[
\mathcal M':=(\mathcal M\setminus\{m_2\})\cup\{j\}.
\]
Since $G_k\cap\mathcal M=\emptyset$, this is an admissible one-medoid swap.

Let $\Psi_i:=\min_{m\in\mathcal M\setminus\{m_2\}}\hat\gamma_n(i,m)$. Since $\mathcal M\setminus\{m_2\}\subseteq\mathcal M'$, we have $\Phi_i(\mathcal M')\le\Psi_i$, while $\Phi_i(\mathcal M)=\min(\Psi_i,\hat\gamma_n(i,m_2))$. Moreover, $m_1\in\mathcal M\setminus\{m_2\}$ gives $\Psi_i\le\hat\gamma_n(i,m_1)$. Hence
\begin{align}
\Phi_i(\mathcal M)-\Phi_i(\mathcal M')
&\ge \min\bigl(0,\hat\gamma_n(i,m_2)-\Psi_i\bigr)\notag\\
&\ge \min\bigl(0,\hat\gamma_n(i,m_2)-\hat\gamma_n(i,m_1)\bigr).
\label{eq:pam-master}
\end{align}

If $i\notin\{m_1,m_2\}$, then $\hat\gamma_n(i,m_1)$ and $\hat\gamma_n(i,m_2)$ are both within $\varepsilon_n$ of the same finite-horizon mean $\Gamma^{(n)}_{Z_i k'}$. Therefore
\begin{equation}
\label{eq:pam-baseline}
\Phi_i(\mathcal M)-\Phi_i(\mathcal M')\ge-2\varepsilon_n.
\end{equation}

If $i\in G_k$, then every medoid in $\mathcal M$ belongs to another cluster, so $\Phi_i(\mathcal M)\ge v$. On the other hand, $j\in\mathcal M'\cap G_k$, hence $\Phi_i(\mathcal M')\le u$ if $i\neq j$, while $\Phi_j(\mathcal M')=0$. Thus, in both cases,
\begin{equation}
\label{eq:pam-gain}
\Phi_i(\mathcal M)-\Phi_i(\mathcal M')\ge v-u.
\end{equation}

Finally, for $i=m_1$ we have $\Phi_{m_1}(\mathcal M)=\Phi_{m_1}(\mathcal M')=0$, whereas for $i=m_2$,
\[
\Phi_{m_2}(\mathcal M)-\Phi_{m_2}(\mathcal M')\ge-u.
\]

Summing these bounds and dividing by $N$ gives
\begin{align}
\Phi_{N,n}(\mathcal M)-\Phi_{N,n}(\mathcal M')
&\ge -2\varepsilon_n+\hat w_k(v-u)-\frac{u}{N}\notag\\
&>-\frac{w_{\min}\eta}{8}+\frac{w_{\min}\eta}{4}-\frac{w_{\min}\eta}{16}\notag\\
&=\frac{w_{\min}\eta}{16}>0.
\label{eq:pam-swap-gain}
\end{align}
Indeed, $2\varepsilon_n<2r<w_{\min}\eta/8$, \eqref{eq:pam-weight} and \eqref{eq:pam-gap} give $\hat w_k(v-u)\ge w_{\min}\eta/4$, and $N\ge N_0$ gives $u/N\le w_{\min}\eta/16$.

The swap from $\mathcal M$ to $\mathcal M'$ strictly decreases the objective, contradicting~\eqref{eq:pam-local}. Hence every cluster contains at least one medoid. Since there are $K$ clusters and $|\mathcal M|=K$, each cluster contains exactly one medoid.
\end{proof}

\begin{proof}[Proof of Theorem~\ref{thm:kmedoids}]
Set
\[
b_n:=B\sqrt{\frac{\log n}{n}},\qquad \varepsilon_n:=r-b_n,
\]
and work on $\mathcal E_{N,n}(\varepsilon_n)\cap\mathcal W_N(\delta_0)$ with $N\ge\max\{N_0,K\}$, keeping the notation $u=\gin+r$ and $v=\gout-r$ of the previous proof. By Lemma~\ref{lem:kmedoid-one-per-cluster}, the PAM output $\mathcal M:=\widehat{\mathcal M}^{\mathrm{PAM}}_{N,n}$ has exactly one medoid in each class. In that case, for any $i\in\{1,\ldots,N\}$:
\begin{itemize}[leftmargin=*,topsep=2pt]
\item the unique medoid $m\in\mathcal M\cap G_{Z_i}$ satisfies $\hat\gamma_n(i,m)\le u$ by~\eqref{eq:pam-bounds}, or $\hat\gamma_n(i,m)=0$ if $i=m$;
\item any other medoid $m'\in\mathcal M\setminus G_{Z_i}$ satisfies $Z_{m'}\neq Z_i$ and $m'\neq i$, hence $\hat\gamma_n(i,m')\ge v>u$ by~\eqref{eq:pam-bounds} and~\eqref{eq:pam-gap}.
\end{itemize}
Consequently, the closest medoid of $i$ is the one in $G_{Z_i}$, so the assignment partition $\widehat{\mathcal P}^{\mathrm{PAM}}_{N,n}$ coincides with $\mathcal P^\star$. Therefore
\[
\begin{aligned}
\PP\bigl(\widehat{\mathcal P}^{\mathrm{PAM}}_{N,n}\neq\mathcal P^\star\bigr)
&\le \PP\bigl(\mathcal E_{N,n}(\varepsilon_n)^c\bigr)+\PP\bigl(\mathcal W_N(\delta_0)^c\bigr)\\
&\le N^2\exp\!\left(-\frac{2\varepsilon_n^2n}{C^\star}\right)+2K\exp\bigl(-2N\delta_0^2\bigr)\\
&=N^2\exp\!\left(-\frac{2\left(r-B\sqrt{\frac{\log n}{n}}\right)^2n}{C^\star}\right)+2K\exp\bigl(-2N\delta_0^2\bigr),
\end{aligned}
\]
by Lemma~\ref{lem:uniform} and~\eqref{eq:Wbound}.

Under Assumption~\ref{ass:growth}, the two terms in the preceding bound tend to $0$, and $\PP\bigl(\widehat{\mathcal P}^{\mathrm{PAM}}_{N,n}=\mathcal P^\star\bigr)\longrightarrow1$.
\end{proof}

\begin{proof}[Proof of Theorem~\ref{thm:K-selection}]
Let
\[
x_{N,n}:=\frac{\log N}{n},
\qquad
\varepsilon_{N,n}:=x_{N,n}^{3/8},
\qquad
r_n:=B\sqrt{\frac{\log n}{n}},
\qquad
\beta_{N,n}:=\eta-2r_n-2\varepsilon_{N,n}.
\]
Work on the event $\mathcal E_{N,n}(\varepsilon_{N,n})$ and suppose that $\min_k|G_k|\ge2$. On this event:

\begin{itemize}[leftmargin=*,topsep=2pt]
\item If $Z_i=Z_j=k$, then for every $\ell\notin\{i,j\}$, writing $q:=Z_\ell$, both $\hat\gamma_n(i,\ell)$ and $\hat\gamma_n(j,\ell)$ are distant from at most $\varepsilon_{N,n}$ of the same finite-horizon mean $\Gamma^{(n)}_{kq}$. Hence
\[
\left|\hat\gamma_n(i,\ell)-\hat\gamma_n(j,\ell)\right|
\le2\varepsilon_{N,n},
\]
and $\rho_{N,n}(i,j)\le2\varepsilon_{N,n}$.

\item If $Z_i=k\neq k'=Z_j$, choose $\ell\in G_k\setminus\{i\}$, which is possible since $|G_k|\ge2$. The bounds established in the proof of Theorem~\ref{thm:SL} give
$\hat\gamma_n(i,\ell)\le\gin+r_n+\varepsilon_{N,n},  \hat\gamma_n(j,\ell)\ge\gout-r_n-\varepsilon_{N,n}$.

Hence $\rho_{N,n}(i,j) \ge \eta-2r_n-2\varepsilon_{N,n} = \beta_{N,n}$.

\end{itemize}

Let $h_1^\rho\le\cdots\le h_{N-1}^\rho$ be the single-linkage merge heights associated with $\rho_{N,n}$. Since every true class is a clique at level $2\varepsilon_{N,n}$, at least $N-K$ merges have occurred by that level, and hence
\[
h_{N-K}^\rho\le2\varepsilon_{N,n}.
\]
Conversely, below level $\beta_{N,n}$ there is no edge between two distinct true classes, so at most $N-K$ merges can have occurred. Hence $h_{N-K+1}^\rho\ge\beta_{N,n}$.

Since $\min_k|G_k|\ge2$, we have $K\le N/2$, and $\displaystyle{\left\lceil\frac{N-1}{2}\right\rceil\le N-K}$.

It follows that $h_{\mathrm{med}}^\rho\le2\varepsilon_{N,n},  h_{\max}^\rho\ge\beta_{N,n}$.

Moreover, since $0\le\hat\gamma_n(i,j)\le M$, we have $\rho_{N,n}(i,j)\le M$ for every $i\neq j$, and hence $h_{\max}^\rho\le M$.

We now control the data-driven threshold. By its definition,
\[
a_{N,n}
\ge
h_{\max}^\rho x_{N,n}^{1/4}
\ge
\beta_{N,n}x_{N,n}^{1/4}
>
2\varepsilon_{N,n}.
\]
On the other hand,
\[
\sqrt{h_{\mathrm{med}}^\rho h_{\max}^\rho}
\le
\sqrt{2M\varepsilon_{N,n}}
<
\beta_{N,n},
\]
and
\[
h_{\max}^\rho x_{N,n}^{1/4}
\le
Mx_{N,n}^{1/4}
<
\beta_{N,n}.
\]
Thus
\[
2\varepsilon_{N,n}<a_{N,n}<\beta_{N,n}.
\]

It follows from the two profile bounds above that two vertices are adjacent in $H_{N,n}$ if and only if they belong to the same true cluster. Hence $H_{N,n}$ has exactly $K$ connected components and $\hat K_{N,n}=K$. Consequently,
\[
\PP(\hat K_{N,n}\neq K)
\le
\PP\bigl(\mathcal E_{N,n}(\varepsilon_{N,n})^c\bigr)
+
\PP\left(\min_k|G_k|<2\right).
\]

By Lemma~\ref{lem:uniform},
\[
\PP\bigl(\mathcal E_{N,n}(\varepsilon_{N,n})^c\bigr)
\le
N^2
\exp\!\left(
-\frac{2\varepsilon_{N,n}^2n}{C^\star}
\right).
\]
Moreover,
\[
\begin{aligned}
\PP\left(\min_k|G_k|<2\right)
&\le
\sum_{k=1}^K
\left[
(1-w_k)^N
+
Nw_k(1-w_k)^{N-1}
\right]\\
&\le
K\left[
(1-w_{\min})^N
+
N(1-w_{\min})^{N-1}
\right].
\end{aligned}
\]
This proves the finite-sample bound.

Finally, under Assumption~\ref{ass:growth}, $x_{N,n}\to0$, and the two terms in the bound tend to zero. Hence $\PP(\hat K_{N,n}=K)\longrightarrow1$.

\end{proof}

\subsection{Proofs for Section~\ref{sec:extensions}}

\begin{proof}[Proof of Proposition~\ref{prop:gen-concentration}]
\emph{(i)} Assume first that the hidden chains are stationary. Then each observed process is stationary and ergodic, and the two observed processes are independent. Step~1 of the proof of Proposition~\ref{prop:convergence} applies verbatim, with $M^{\mathrm{mc}}$ in place of $M$, and gives the almost sure convergence to the stated deterministic limit. For arbitrary initial laws, Step~2 of the proof of Proposition~\ref{prop:convergence} applies to the hidden chains, whose path laws are absolutely continuous with respect to their stationary counterparts. Adding the independent emission noises preserves this absolute continuity, and so does passing to the observed processes through the emission maps. Hence the same almost sure limit holds.

\emph{(ii)} Write $\widetilde Z_t:=(Z^{(k)}_t,Z^{(k')}_t)$ and $\widetilde X_t:=(X^{(k)}_t,X^{(k')}_t)$. Then $(\widetilde X_t)$ is a hidden Markov process with underlying chain $(\widetilde Z_t)$ in the sense of \cite[Example~2.15]{paulin}. For
\[
f(\widetilde X_{1:n}):=\frac1n\dom(X^{(k)}_{1:n},X^{(k')}_{1:n}),
\]
changing one observed coordinate from $(x,y)$ to $(x',y')$ changes $f$ by at most
\[
\frac{c^{\mathrm{mc}}_{\sub}(x,x')+c^{\mathrm{mc}}_{\sub}(y,y')}{n}\le\frac{2M^{\mathrm{mc}}}{n}.
\]
Applying \cite[Corollary~2.16]{paulin} to these bounded differences, with the mixing time of $R_k\otimes R_{k'}$, gives the stated concentration inequality.

\emph{(iii)} For $k=k'$, regard each multichannel observation as a single letter of the finite product alphabet $\Sigma=\Sigma_1\times\cdots\times\Sigma_J$, and define
\[
s(a,b):=\delta^{\mathrm{mc}}(a)+\delta^{\mathrm{mc}}(b)-c^{\mathrm{mc}}_{\sub}(a,b)
=\sum_{j=1}^J\lambda_j\bigl[\delta^{(j)}(a_j)+\delta^{(j)}(b_j)-c^{(j)}_{\sub}(a_j,b_j)\bigr].
\]
Thus Lemma~\ref{lem:OM-score-properties} applies unchanged on $\Sigma$. The induced alignment score $L_s$ is superadditive, satisfies the rectangular-block comparison, is linearly bounded, and has bounded differences. The partition argument of \cite[Theorem~3.1]{houdre2019}, specialized to two independent HMMs with identical parameters as in \cite[Corollary~3.1]{houdre2019}, applies to $L_s$, using the hidden-Markov concentration inequality of \cite[Corollary~2.16]{paulin}. Together with~\eqref{eq:OM-score-identity}, this gives the stated $O(\sqrt{\log n/n})$ rate in the stationary case. For arbitrary common initial laws, the coupling argument of Proposition~\ref{prop:diagonal-rate} applies to the hidden chains, using common emission noises after coupling, and gives the same rate after adjusting the constant.

\end{proof}

\begin{proof}[Proof of Theorem~\ref{thm:gen-clustering}]
Set $C^{\mathrm{mc},\star}:=\max_{k,k'}C^{\mathrm{mc}}_{kk'}$. Proposition~\ref{prop:gen-concentration} (ii) and a union bound give the analogue of Lemma~\ref{lem:uniform}. Moreover, Proposition~\ref{prop:gen-concentration} (iii), together with the coupling argument used in the proof of Theorem~\ref{thm:SL}, gives a constant $B^{\mathrm{mc}} \in \R_+$ such that the same within- and between-component bounds hold with $\Gamma$, $B$ and $C^\star$ replaced by $\Gamma^{\mathrm{mc}}$, $B^{\mathrm{mc}}$ and $C^{\mathrm{mc},\star}$.

The proofs of Theorem~\ref{thm:SL}, Lemma~\ref{lem:kmedoid-one-per-cluster} and Theorem~\ref{thm:kmedoids} are deterministic and apply verbatim. In particular, the cancellations of equal finite-horizon means used in the PAM and profile arguments are unchanged. The proof of Theorem~\ref{thm:K-selection} applies with $M^{\mathrm{mc}}$ in place of $M$ in the corresponding bounds, which gives the result for $\hat K_{N,n}$.
\end{proof}
\appendix
\section{Additional Figures}
\label{app:figures}

Figures~\ref{fig:gamma-convergence-trate} and~\ref{fig:gamma-convergence-random} give the supplementary convergence results for the TRATE and random cost schemes. Figures~\ref{fig:ARIconvergencesingle} and~\ref{fig:ARIgridsingle} report the single-linkage results. Figures~\ref{fig:hmm-gamma-convergence}--\ref{fig:hmm-k-selection} report the multichannel HMM experiments described in Section~\ref{ssec:num-hmm}.

\begin{figure}[tbp]
\centering
\includegraphics[width=0.7\linewidth]{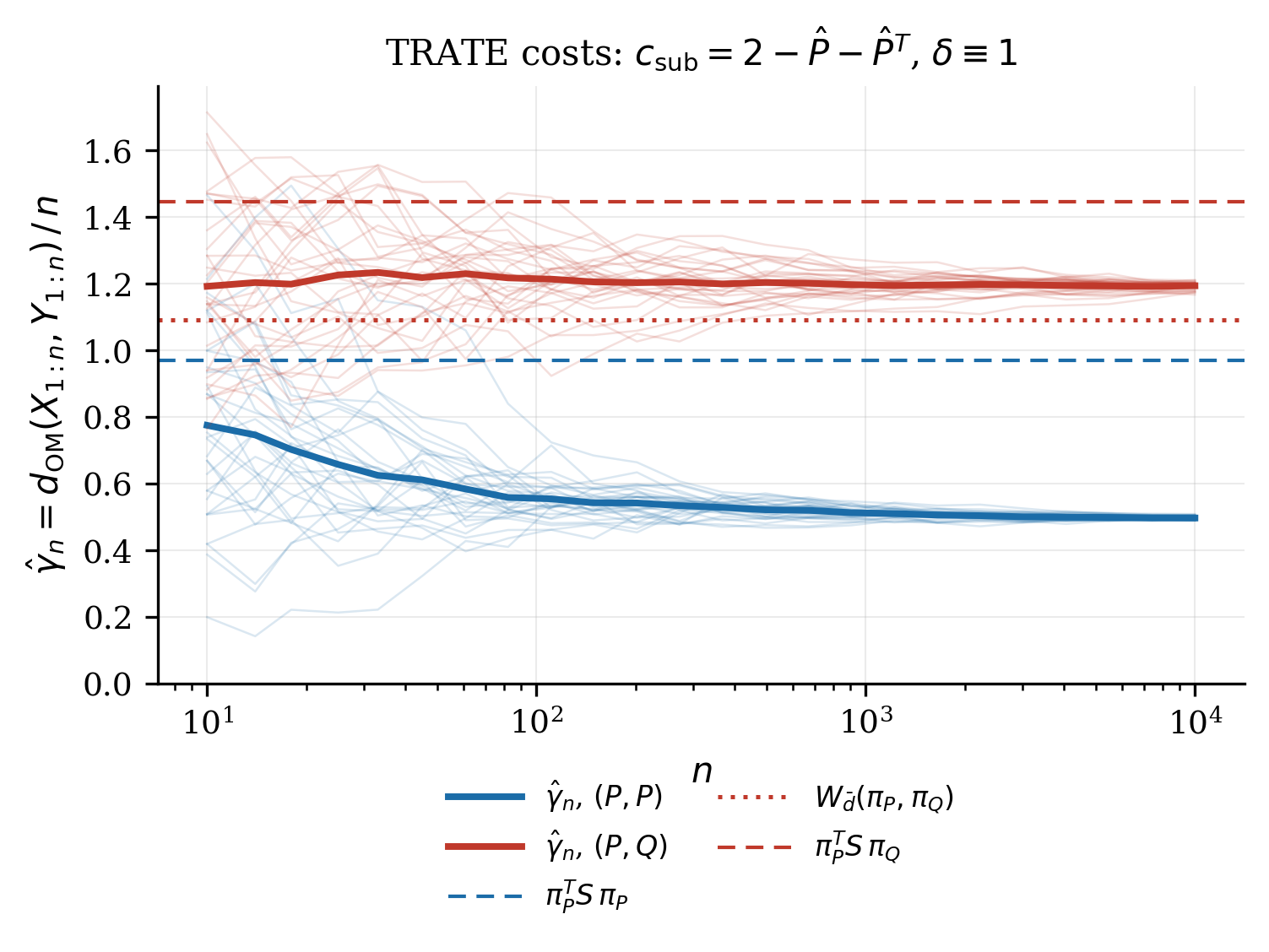}
\caption{Sample paths of $\hat\gamma_n$ for the within (blue) and between (red) configurations, with the bounds of Proposition~\ref{prop:bounds} as horizontal references. TRATE cost scheme, $30$ replicates.}
\label{fig:gamma-convergence-trate}
\end{figure}

\begin{figure}[tbp]
\centering
\includegraphics[width=0.7\linewidth]{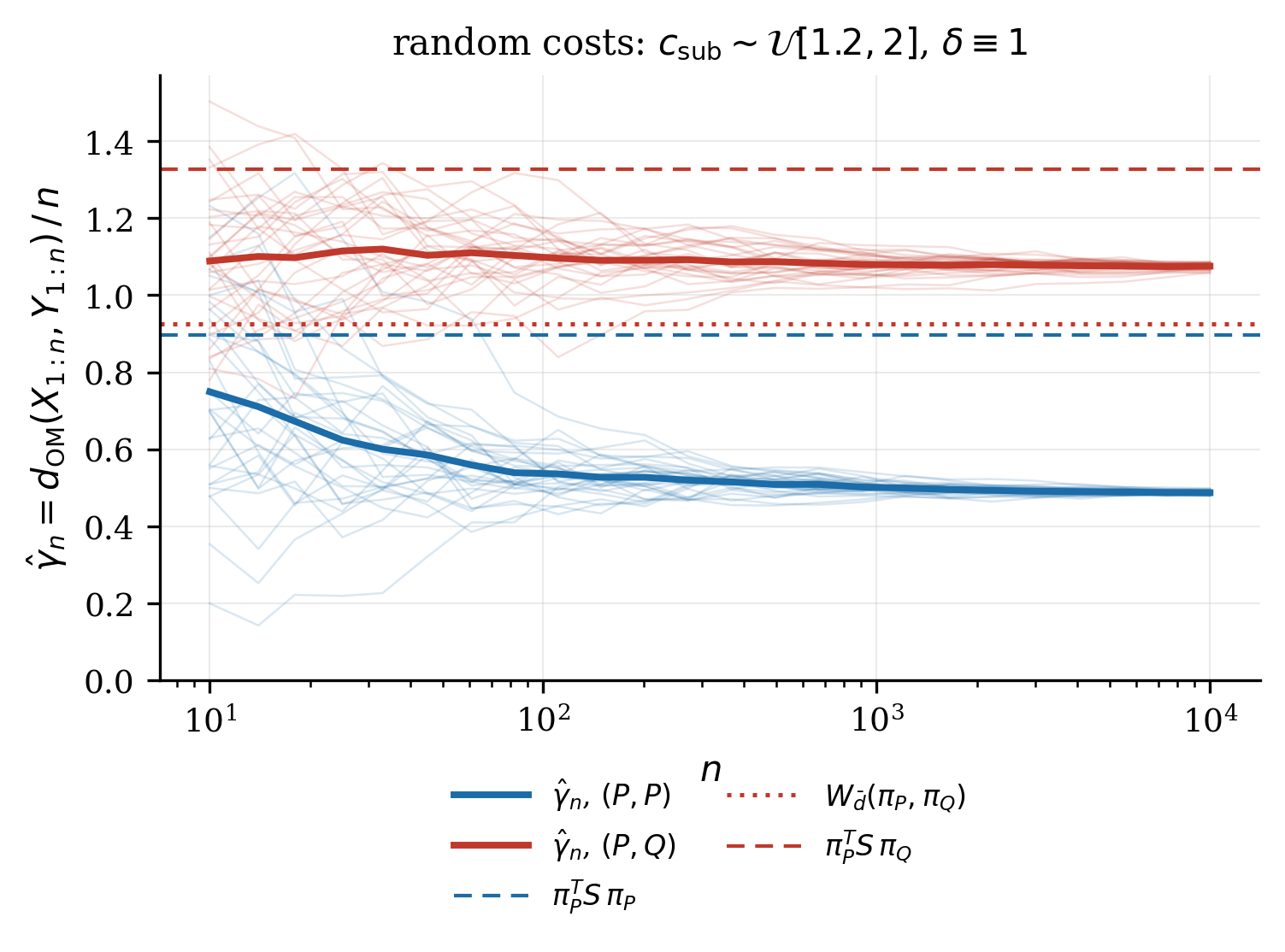}
\caption{Sample paths of $\hat\gamma_n$ for the within (blue) and between (red) configurations, with the bounds of Proposition~\ref{prop:bounds} as horizontal references. Random symmetric cost scheme, $30$ replicates.}
\label{fig:gamma-convergence-random}
\end{figure}

\begin{figure}[tbp]
\centering
\includegraphics[width=0.7\linewidth]{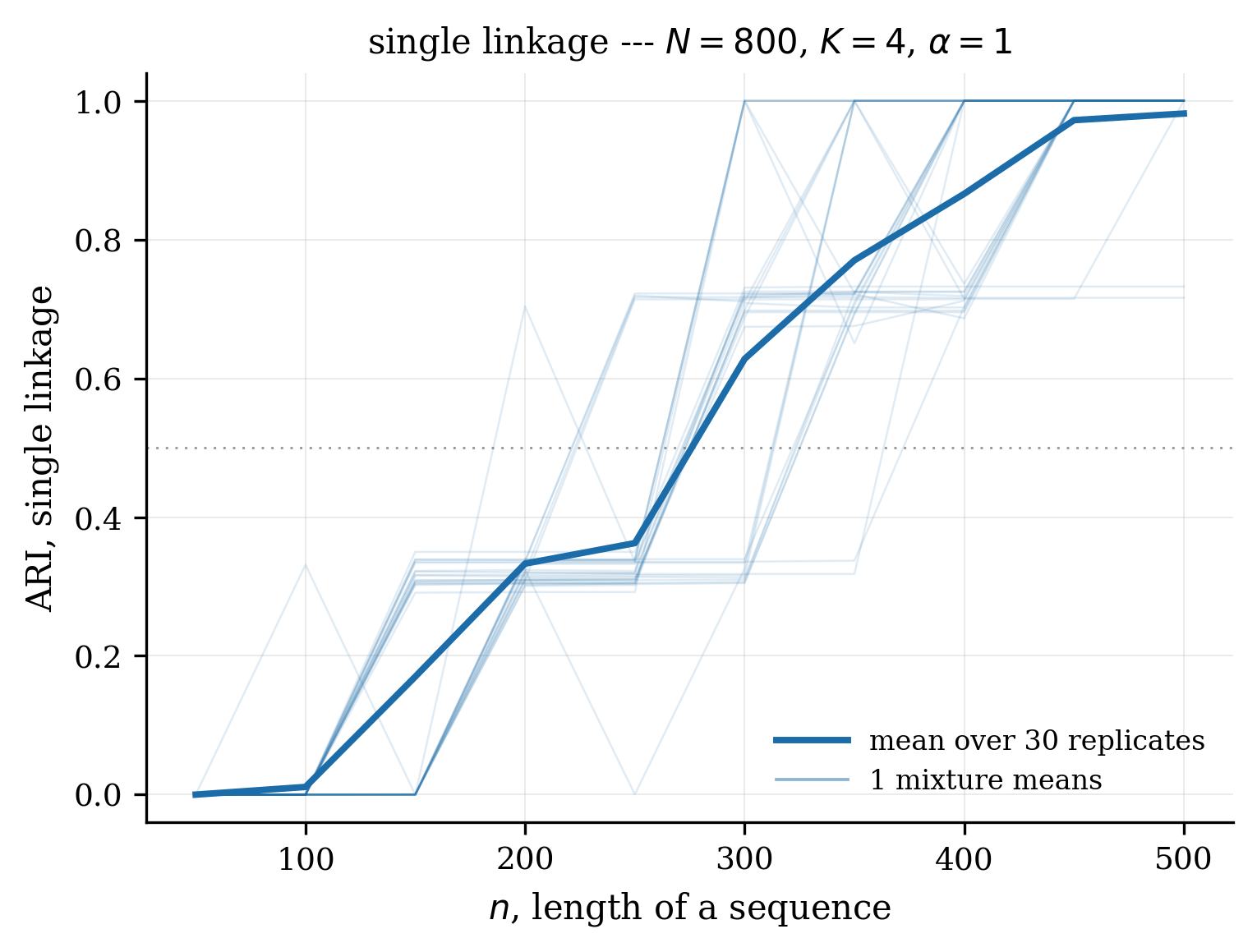}
\caption{$\mathrm{ARI}$ trajectories in function of $n$ for fixed $N=800$, $K=4$ and $\alpha=1$, with $R=50$ replications. Single linkage.}
\label{fig:ARIconvergencesingle}
\end{figure}

\begin{figure}[tbp]
\centering
\includegraphics[width=0.9\linewidth]{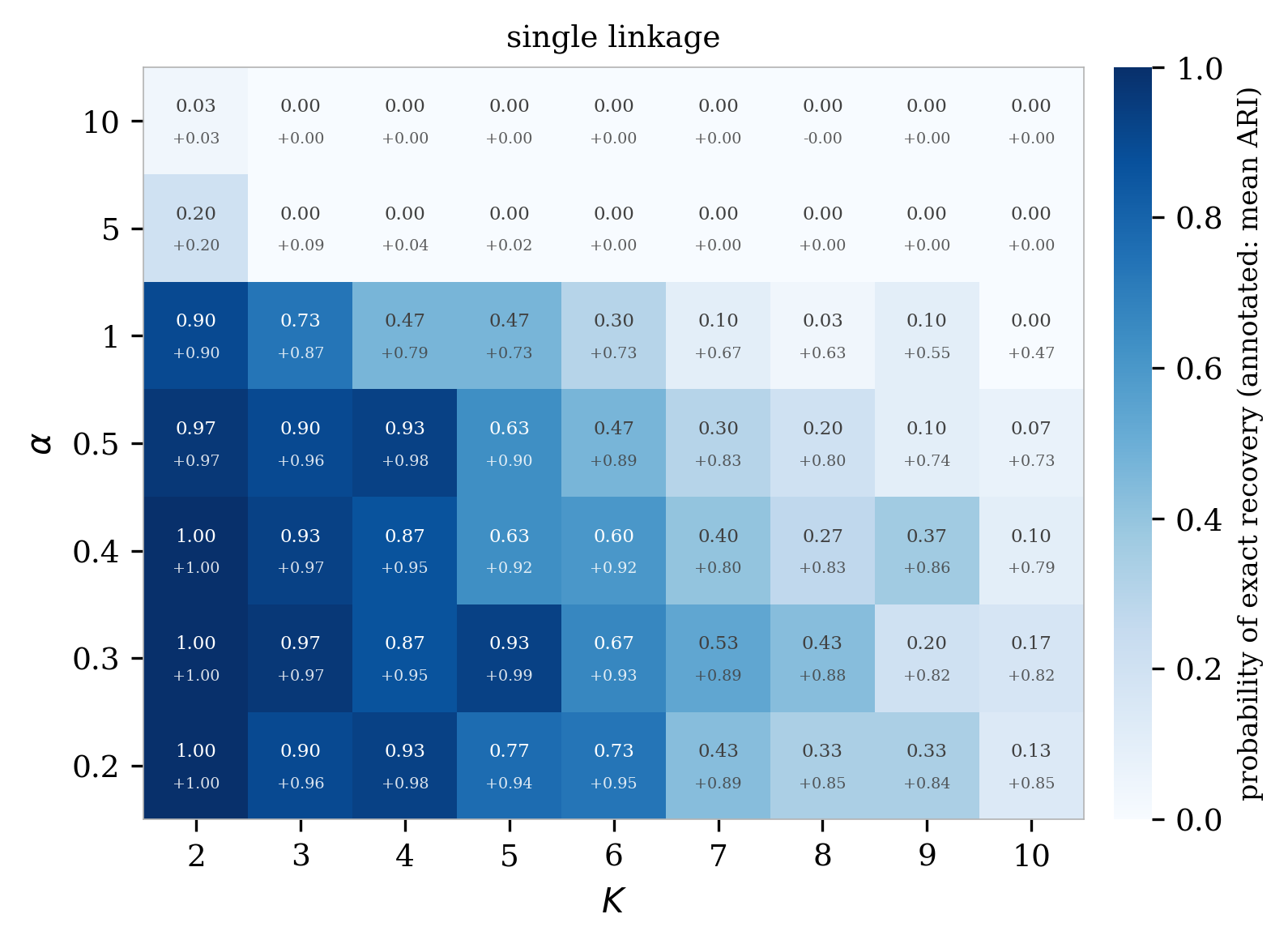}
\caption{Proportion of exact recoveries (above) and mean \textit{ARI}s (below), as functions of $\alpha$ (rows) and $K$ (columns), for Single Linkage.}
\label{fig:ARIgridsingle}
\end{figure}

\begin{figure}[tbp]
\centering
\begin{minipage}[t]{0.32\linewidth}
  \centering
  \includegraphics[width=\linewidth]{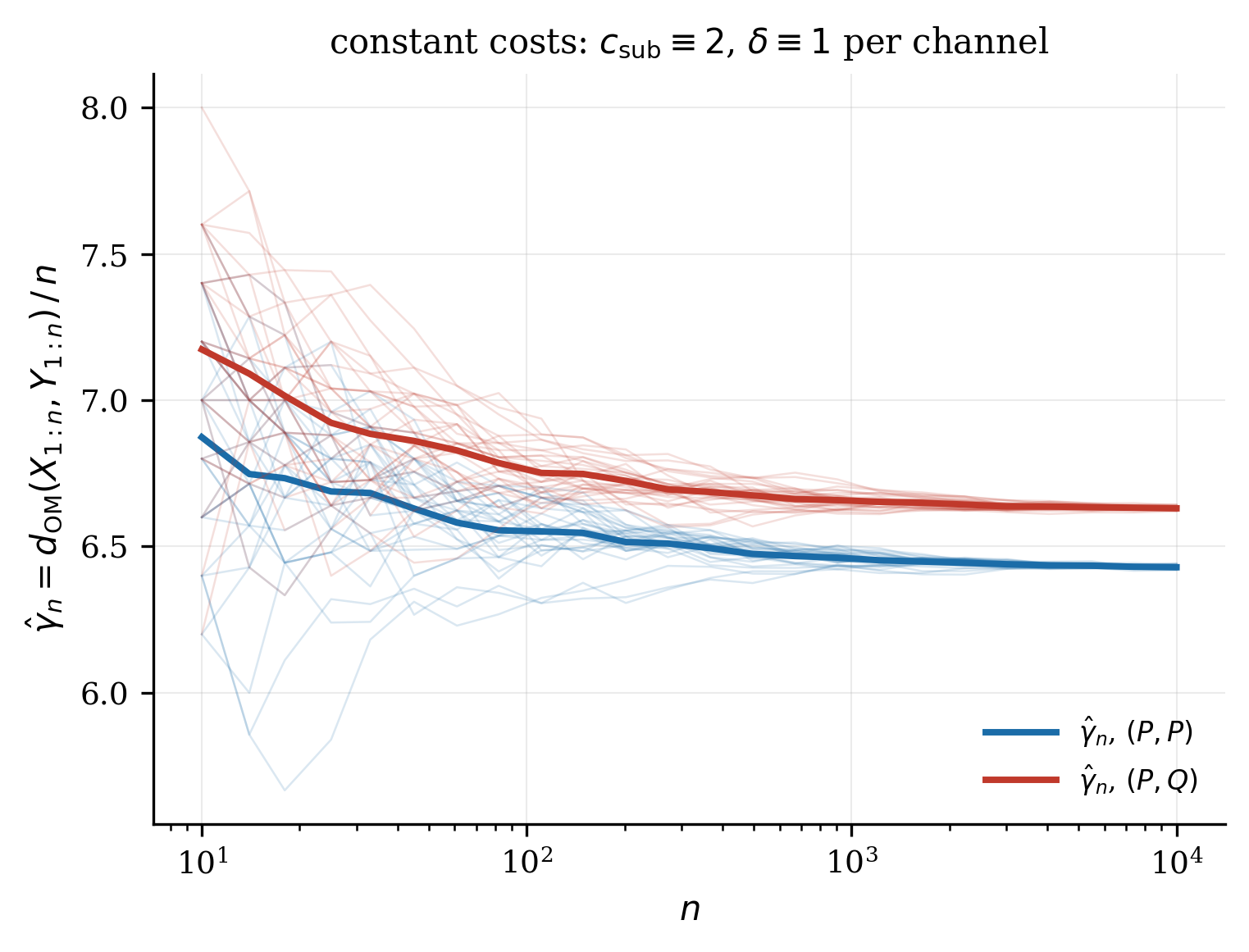}

  (a) Constant costs.
\end{minipage}
\hfill
\begin{minipage}[t]{0.32\linewidth}
  \centering
  \includegraphics[width=\linewidth]{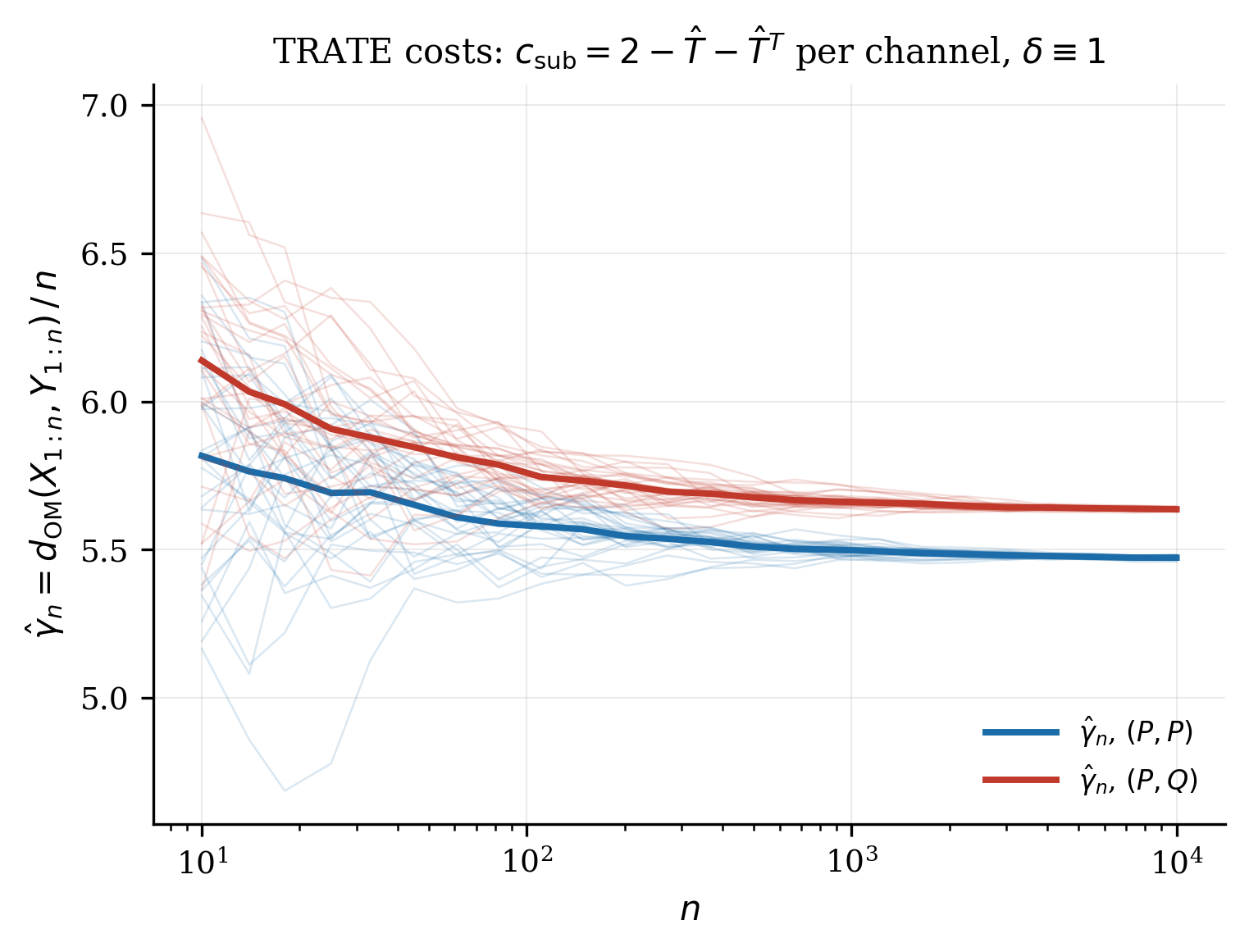}

  (b) TRATE costs.
\end{minipage}
\hfill
\begin{minipage}[t]{0.32\linewidth}
  \centering
  \includegraphics[width=\linewidth]{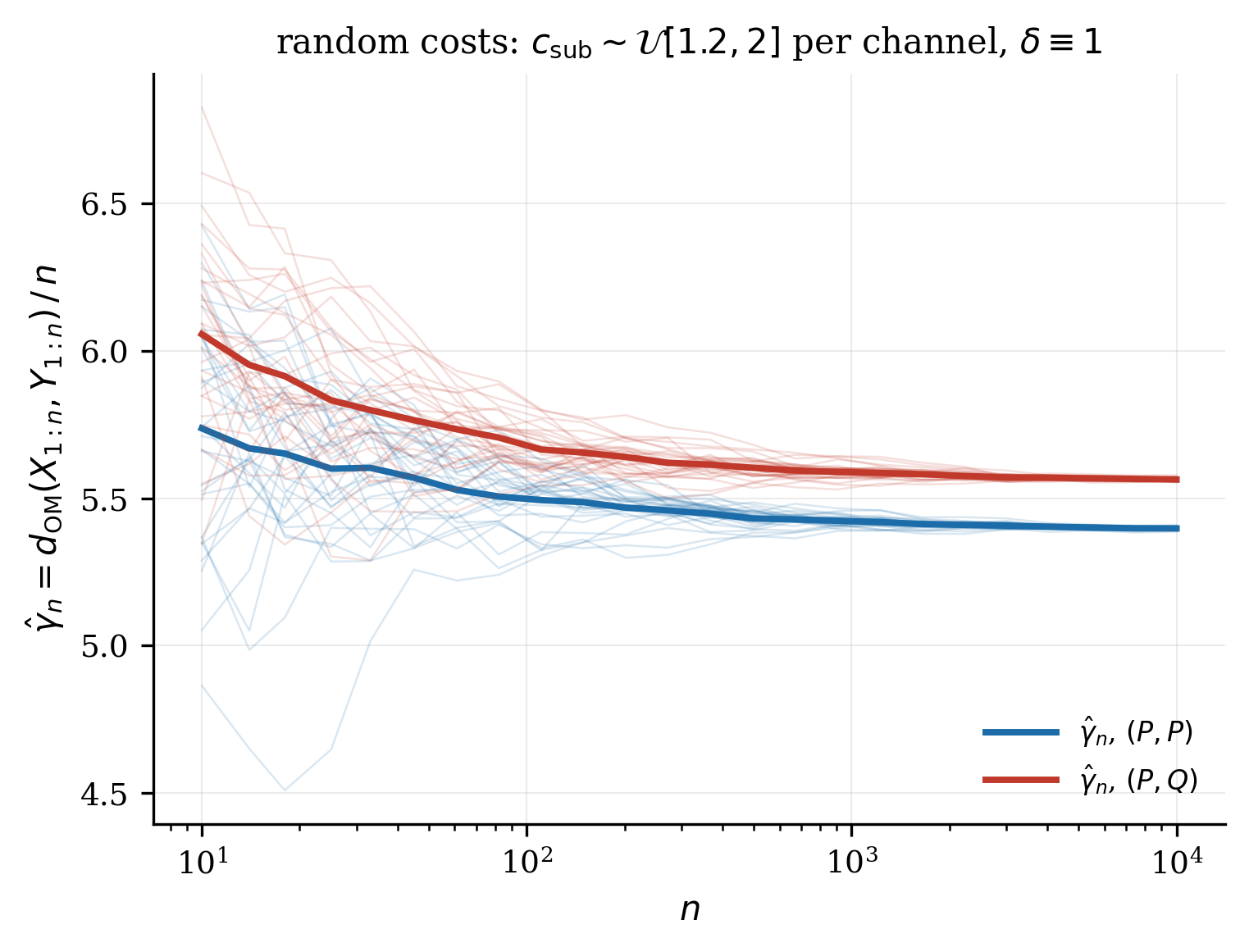}

  (c) Random costs.
\end{minipage}
\caption{Sample paths of $\hat\gamma_n^{\mathrm{mc}}$ for the within (blue) and between (red) configurations, with the corresponding bounds as horizontal references. Multichannel HMMs, $30$ replicates.}
\label{fig:hmm-gamma-convergence}
\end{figure}

\begin{figure}[tbp]
\centering
\includegraphics[width=\linewidth]{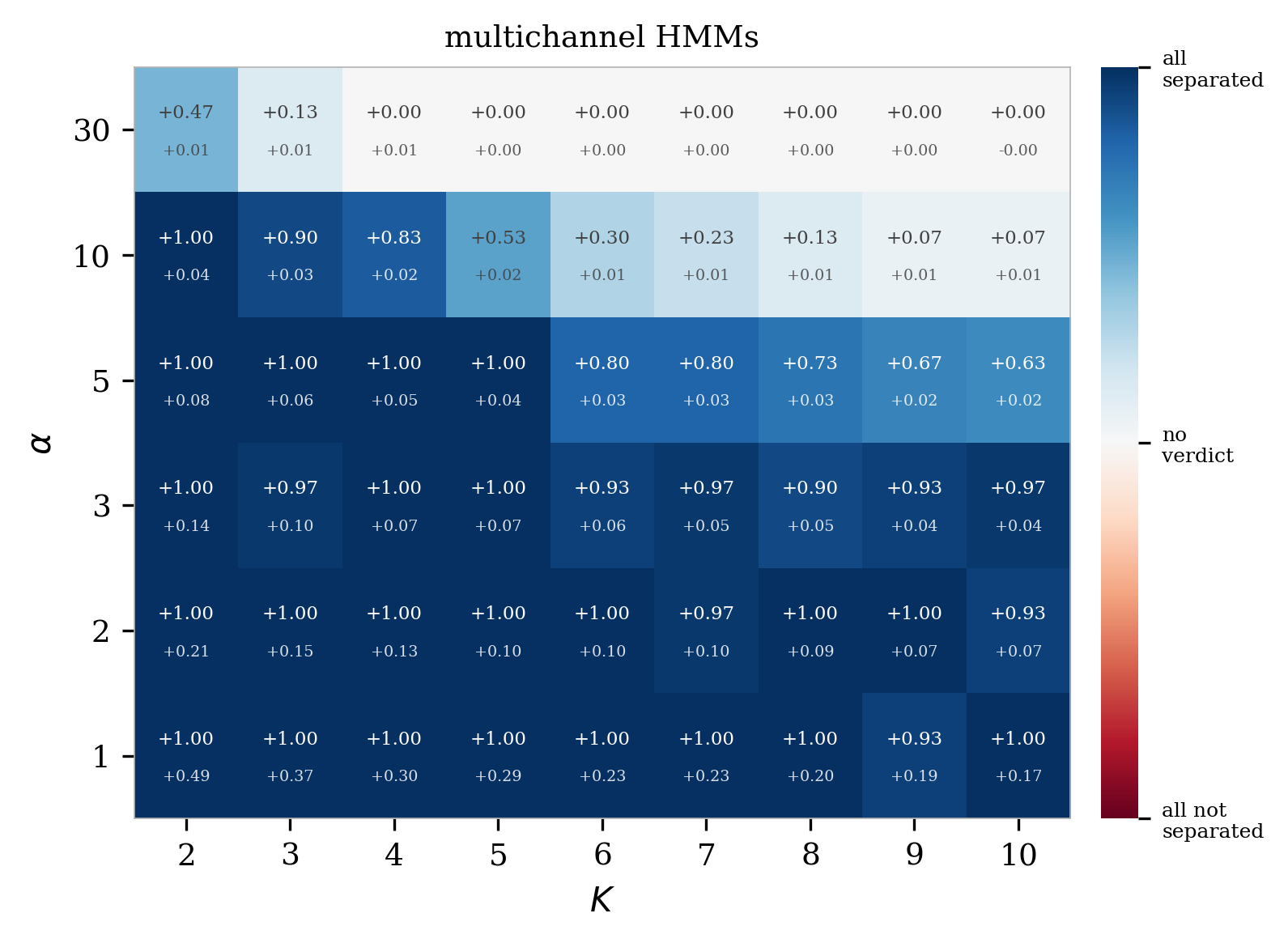}
\caption{Heatmap of the verdict balance $\PP(\text{separated})-\PP(\text{nonseparated})$ at level $0.95$, as a function of $\alpha$ (rows) and $K$ (columns), with signed median $\hat\eta_n^{\mathrm{mc}}$ (below) in each cell. Multichannel HMMs.}
\label{fig:hmm-separation}
\end{figure}

\begin{figure}[tbp]
\centering
\begin{minipage}[t]{0.32\linewidth}
  \centering
  \includegraphics[width=\linewidth]{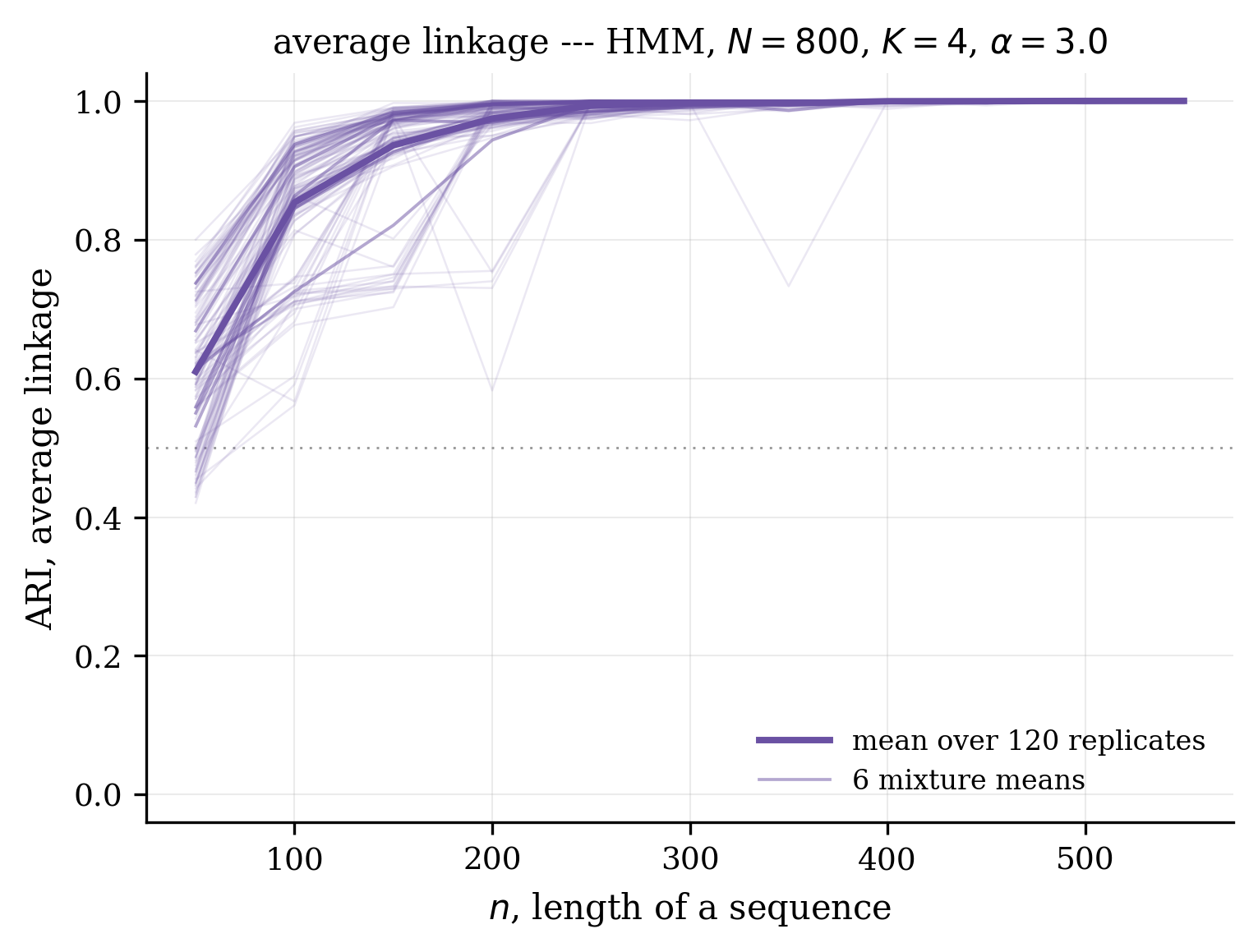}

  (a) Average linkage.
\end{minipage}
\hfill
\begin{minipage}[t]{0.32\linewidth}
  \centering
  \includegraphics[width=\linewidth]{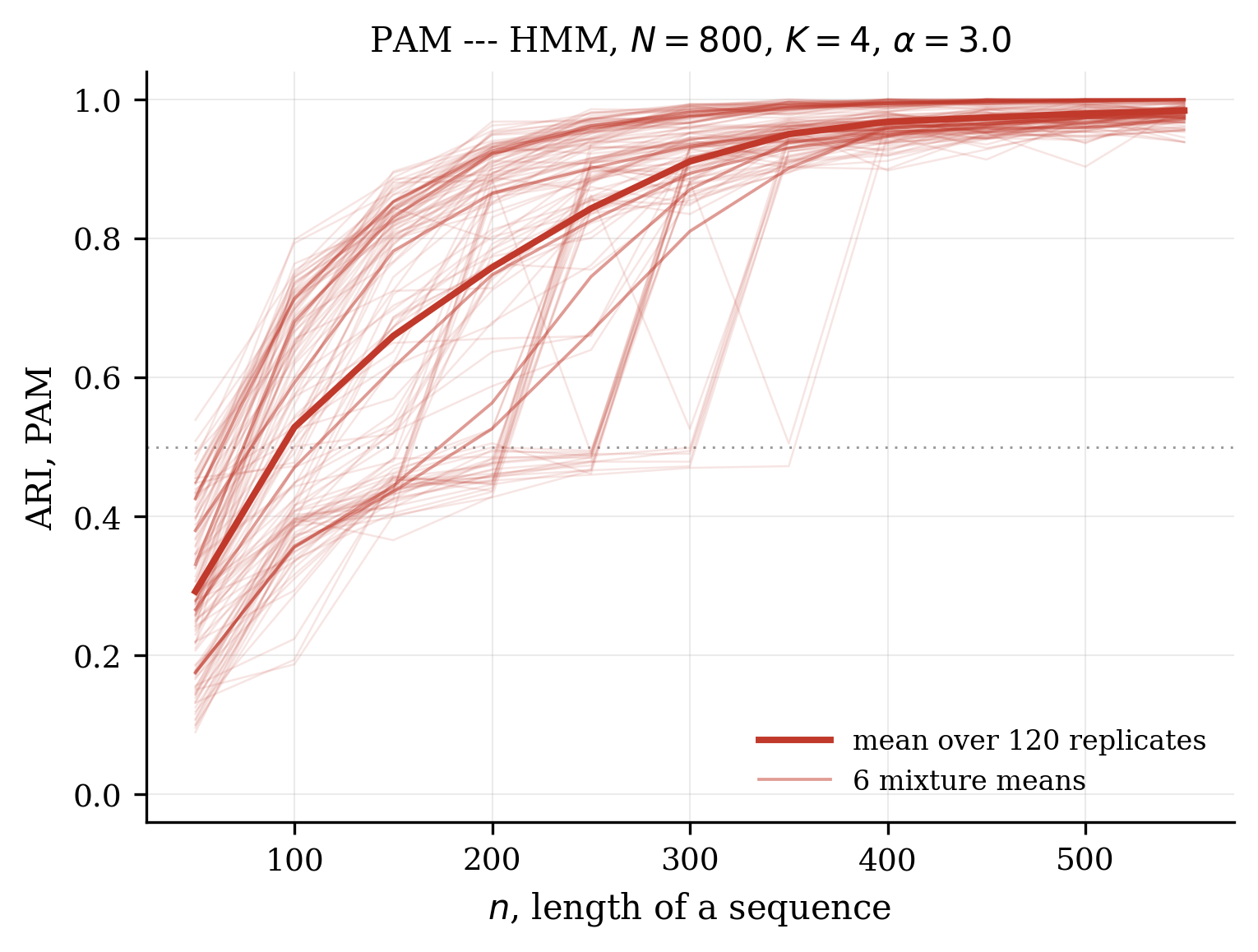}

  (b) K-medoids (PAM).
\end{minipage}
\hfill
\begin{minipage}[t]{0.32\linewidth}
  \centering
  \includegraphics[width=\linewidth]{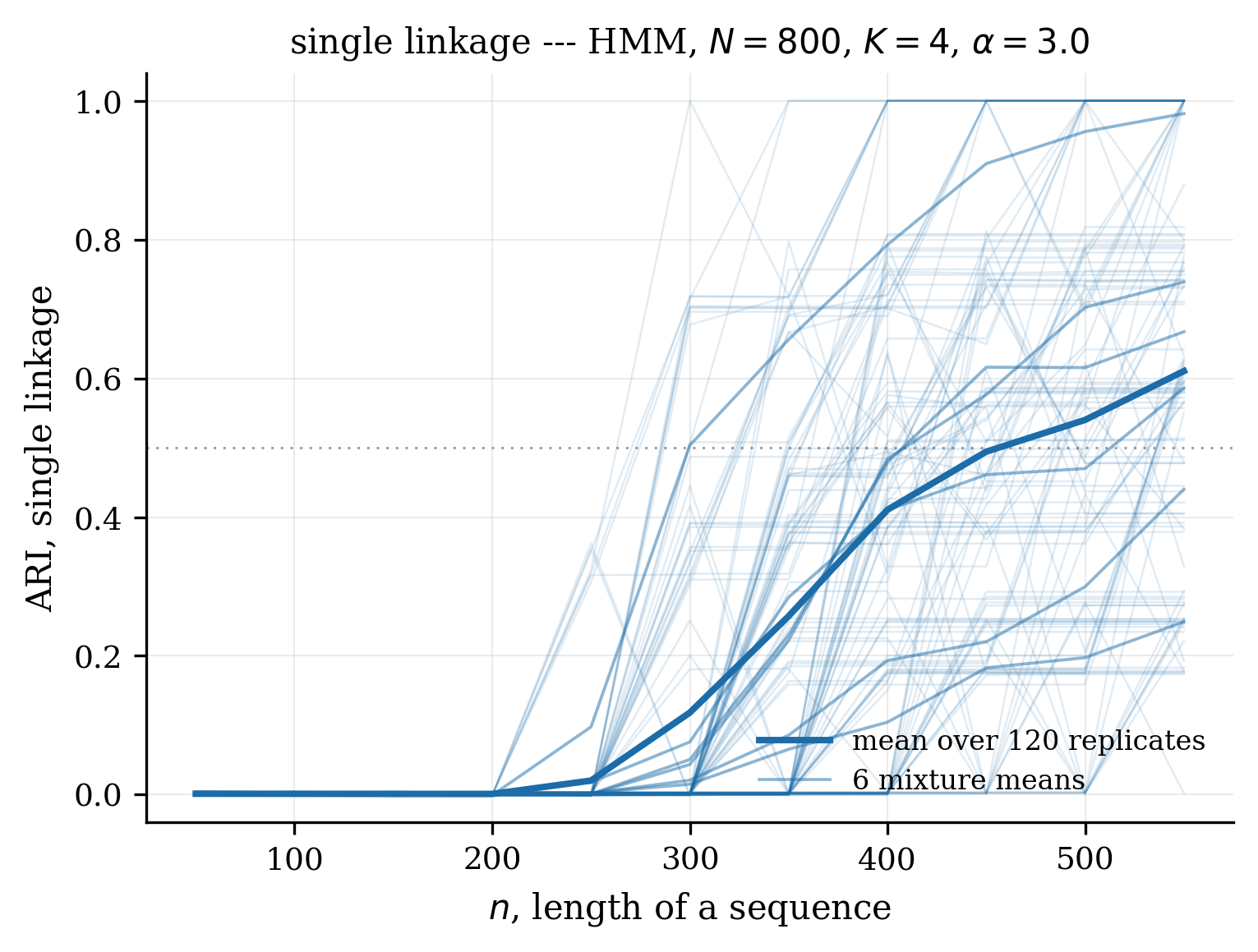}

  (c) Single linkage.
\end{minipage}
\caption{$\mathrm{ARI}$ trajectories in function of $n$ for fixed $N=800$, $K=4$ and $\alpha=3$, with $R=50$ replications. Multichannel HMMs.}
\label{fig:hmm-ari-path}
\end{figure}

\begin{figure}[tbp]
\centering
\includegraphics[width=\linewidth]{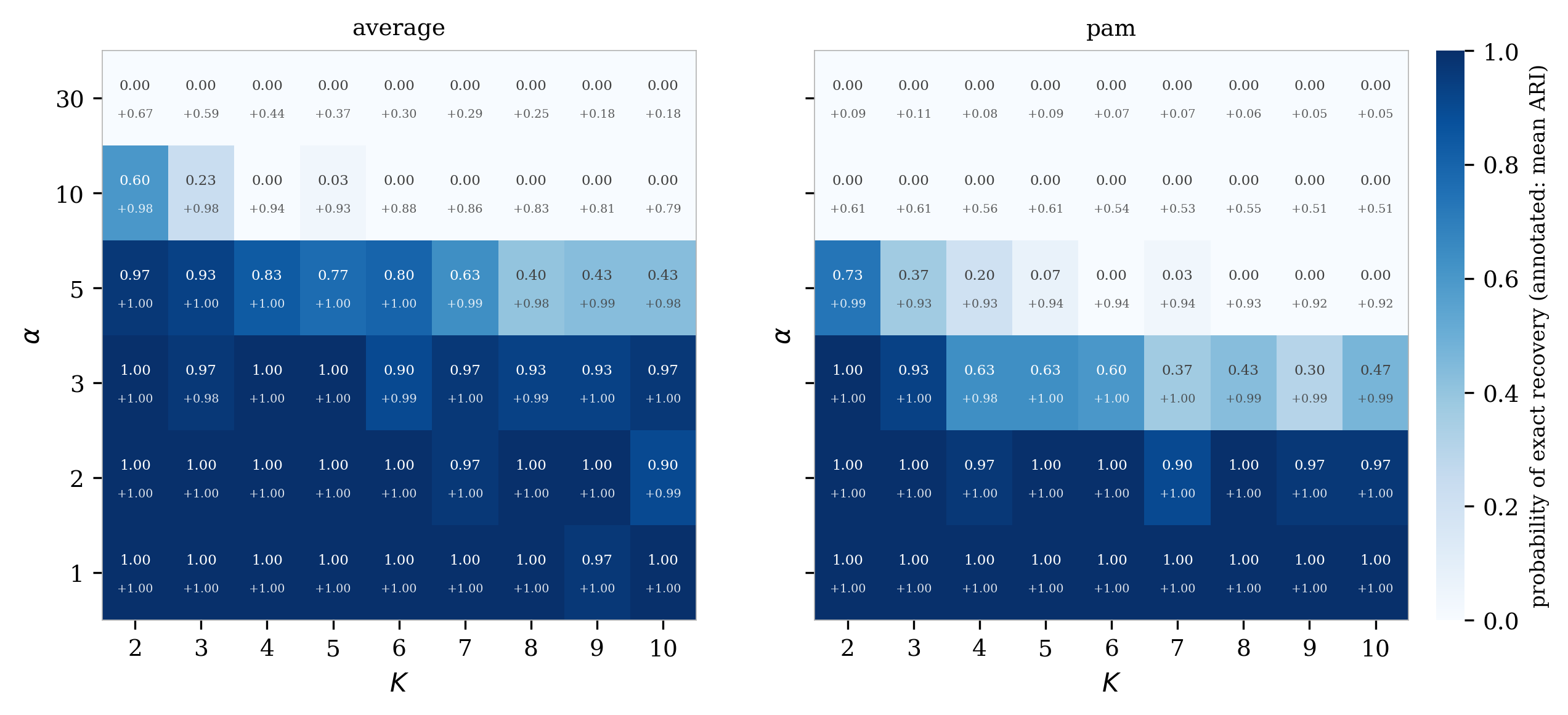}
\caption{Proportion of exact recoveries (above) and mean \textit{ARI}s (below), as functions of $\alpha$ (rows) and $K$ (columns), for Average Linkage and PAM. Multichannel HMMs.}
\label{fig:hmm-recovery}
\end{figure}

\begin{figure}[tbp]
\centering
\includegraphics[width=\linewidth]{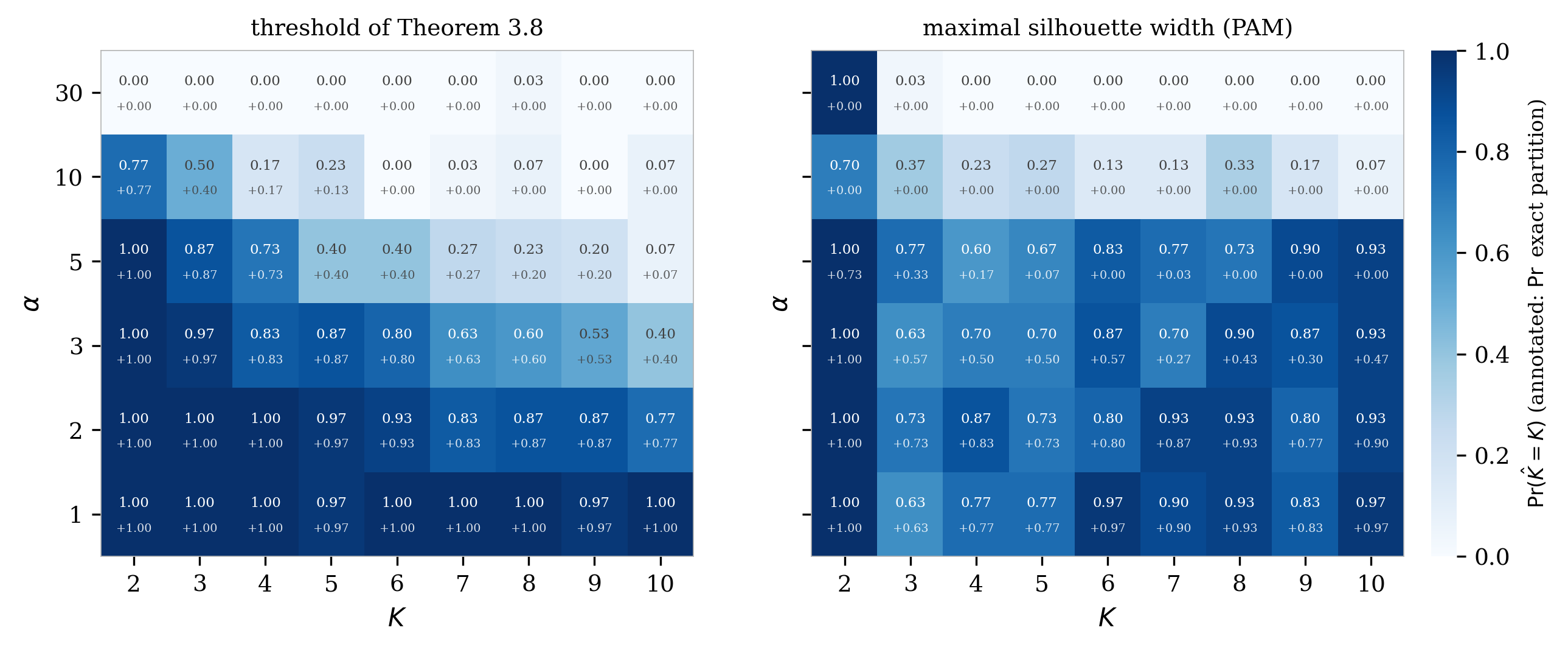}
\caption{Proportion of repetitions with $\hat{K}=K$ using the rule of Theorem~\ref{thm:K-selection} and Average Silhouette Width as functions of $\alpha$ and $K$. Multichannel HMMs.}
\label{fig:hmm-k-selection}
\end{figure}

\newpage
\bibliographystyle{unsrt}
\bibliography{bibliographie}

\end{document}